\documentclass[10pt,twoside]{article}

\usepackage[a4paper,textwidth=5.77in,textheight=7.63in,centering,headheight=10pt,headsep=14pt]{geometry}

\usepackage{mathpazo}
\usepackage{microtype}
\usepackage{fancyhdr}
\usepackage{titlesec}
\titleformat{\section}[block]{\normalfont\normalsize\centering\scshape}{\thesection.}{0.55em}{}
\titlespacing*{\section}{0pt}{2.1\baselineskip}{0.85\baselineskip}
\titleformat{\subsection}[block]{\normalfont\normalsize\bfseries}{\thesubsection}{0.65em}{}
\titlespacing*{\subsection}{0pt}{1.6\baselineskip}{0.55\baselineskip}
\titleformat{\subsubsection}[runin]{\normalfont\normalsize\bfseries}{\thesubsubsection}{0.6em}{}[.]
\titlespacing*{\subsubsection}{0pt}{1.25\baselineskip}{0.5em}

\usepackage{amsmath,amssymb,amsthm,mathtools}
\usepackage{enumitem}
\usepackage[hidelinks,hypertexnames=false]{hyperref}
\usepackage{booktabs}
\usepackage{array}
\usepackage{tabularx}
\usepackage{float}

\theoremstyle{plain}
\newtheorem{theorem}{Theorem}
\newtheorem{proposition}{Proposition}
\newtheorem{lemma}{Lemma}
\newtheorem{corollary}{Corollary}

\newtheorem{observation}{Observation}

\hypersetup{
  pdftitle={Singleton-Attainability and Transparent Access in Matching},
  pdfauthor={Szilvia P\'apai}
}

\theoremstyle{definition}

\newtheorem{example}{Example}

\theoremstyle{remark}

\newcommand{\lift}[1]{^{\uparrow #1}}
\newcommand{\trunc}[1]{^{\le #1}}
\newcommand{\defhead}[1]{\par\bigskip\noindent\textbf{#1.}\ }
\newcommand{\asgn}{\mathord{\text{-}}}
\newcommand{\rk}{\operatorname{rk}}

\makeatletter
\renewcommand{\@biblabel}[1]{}
\makeatother

\title{Singleton-Attainability \\and Transparent Access in Matching\thanks{The author gratefully acknowledges research funding from the Natural Sciences and Engineering Research Council of Canada (NSERC) through Discovery Grant RGPIN-2026-05471. The author also gratefully acknowledges support from the Hausdorff Research Institute for Mathematics (HIM) during the Trimester Program ``Advances in Mechanism Design,'' funded by the Deutsche Forschungsgemeinschaft (DFG, German Research Foundation) under Germany's Excellence Strategy -- EXC-2047/2 -- 390685813.}}
\author{Szilvia P\'apai\thanks{Concordia University and CIREQ; e-mail: \href{mailto:szilvia.papai@concordia.ca}{szilvia.papai@concordia.ca}}}

\date{September 22, 2026}

\begin{document}
\maketitle
\thispagestyle{empty}

\begin{abstract}

Matching mechanisms differ in how much of an agent's preference ranking must be determined and reported to obtain a particular object. A mechanism is \textit{singleton-attainable} (SA) if every object that an agent can obtain through some report can also be obtained by reporting only that object as acceptable. With an SA mechanism, once an attainable object has been identified, the agent need not rank or report any other object. Singleton-attainability identifies a distinct dimension in matching theory and market design: transparent access to attainable outcomes, separate from incentives, stability, welfare or equity.

We establish general conditions for SA and derive its strategic implications. Top-lift invariance and truncation invariance together imply SA, while strategyproofness and stability each imply SA. By contrast, no Pareto improvement over a
strategyproof, individually rational, and non-wasteful mechanism is SA. In particular, every Pareto improvement over Deferred Acceptance violates SA. We introduce report width, which measures how many acceptable objects may have to be reported to obtain an object. SA mechanisms have report width one.

Report width is unbounded for a large class of efficient mechanisms that Pareto-improve Deferred Acceptance. Stable selection with report-induced priorities has width one when priorities are monotone and maximal width under reverse priority dominance. Rank-welfare maximization has width one
when the outside-option rank is fixed and maximal width when it is report-dependent. 
These results reveal a structural divide, which we call the
\textit{width dichotomy}: across all mechanisms and families in our
classification and all structural classes we study, report width is
either one or unbounded.

\end{abstract}

\section{Introduction}\label{sec:intro}

Suppose an agent can obtain an object by submitting some preference list. Can she obtain the same object by reporting only this object as acceptable? For Deferred Acceptance (DA),\footnote{The Agent-Proposing DA mechanism is referred to simply as DA throughout the paper. The version in which objects propose is referred to as the Object-Proposing DA.} Immediate Acceptance (IA), Parallel mechanisms and many others, the answer is yes. For Efficiency-Adjusted Deferred Acceptance (EADA) and several equity or rank-based mechanisms,  the answer is no in some markets.\footnote{Appendix~\ref{app:mechanisms} collects definitions and references for the mechanisms central to the analysis.} If EADA is used in school choice, for example, a student may need to list an additional school below the target school to trigger a rejection chain that eventually frees a seat at her target school. Omitting this lower-ranked school can make the target school unattainable. 
This is not merely hypothetical. A restricted version of EADA has recently been introduced as an option in secondary-school admissions in Flanders, where the design studies noted this specific effect.\footnote{Section~\ref{sec:practice}
discusses the Flemish implementation of EADA and connects this effect
to the broader phenomenon of indirect access studied in this paper.} 
For each of the mechanisms in this second group, the number of objects that may have to be listed to obtain a desired object grows without bound as the market grows. 
This reveals a broader pattern: every primary mechanism and family in
our classification has report width one or unbounded width, and the
same is true of every structural mechanism class whose boundedness we
determine, with nothing in between. The pattern is not a logical necessity: examples with bounded width greater than one are easy to construct. It is a substantive feature of the mechanisms that arise in theory and practice. 
We call it the \emph{width dichotomy}.

The paper studies this accessibility question in the standard many-to-one matching model with preferences on one side and priorities on the other, known as the school choice model. The two sides may represent students and schools, applicants and university programs, workers and positions, or other participants and institutions with the same structure. We use ``agent'' and ``object'' as generic terms. An object may have multiple copies, represented by its capacity, and all copies share the object's priority ordering. Throughout, ``object'' refers to an object type rather than an individual copy, and report width counts objects. 

In many such systems, agents submit ordered lists of acceptable objects. Constructing a long list requires more than entering additional objects: it requires learning about them, comparing them, and deciding which are acceptable. These tasks may be demanding for boundedly rational agents and may depend on access to information or advice. List-length constraints limit how many objects can be reported, but may also prevent truthful reporting and force agents to decide which objects to omit. Unrestricted and constrained reporting can therefore be informationally demanding in different ways. Hakimov and Khanna (2025) find substantial errors even when subjects are rewarded simply for accurately reporting their induced ordinal rankings, with errors increasing when the ranking requirement is more complex.

Fix an agent $i$ and the reports $R_{-i}$ of the other agents. An object $c$ is attainable for $i$ at $R_{-i}$ if some report by $i$ assigns $c$ to her. We call a mechanism \emph{singleton-attainable} (SA) if, for every $i$, $R_{-i}$, and attainable object $c$, the singleton report $(c)$ also assigns $c$ to $i$. The minimum report size of $c$ at $R_{-i}$ is the length of the shortest report by $i$ that attains $c$. For a fixed number of objects in the market, report width is the largest minimum positive report size that can arise across admissible markets, agents, reports of the other agents, and attainable objects. For individually rational mechanisms, width one is exactly SA (Observation~\ref{obs:width-one}). Width is bounded if the worst-case minimum report size remains below a fixed bound as the number of objects increases, and unbounded otherwise. SA does not identify which objects are attainable or which attainable object the agent prefers. It concerns the structure of access once an attainable object has been identified. For a singleton-attainable mechanism, access to such an object is direct: no other acceptable object or comparison needs to appear in the report. For non-SA mechanisms, access may depend on additional entries that affect the course of the mechanism. In this sense, SA captures an internal form of transparency in the mechanism itself.

This distinction matters for list-length caps, which are common in
practice. New York City, for example, limited high-school applicants'
lists to twelve programs for many years (Roth, 2015). Such a limit
creates a strategic choice about which alternatives to include, since
omitting a lower-ranked safe option can leave the applicant unmatched.
This choice arises even under DA and is not captured by SA. SA concerns
a different effect of list-length restrictions: with an SA mechanism
and the other applicants' reports fixed, every school attainable with
unrestricted reporting remains attainable with any positive
list-length bound, since the singleton report listing that school
attains it. Without SA, a cap can make a school unattainable even to
an applicant who lists it, because the additional schools needed to
attain it cannot all be included.

Singleton-attainability has an immediate consequence for what the mechanism reveals through singleton reports. 
Gonczarowski, Heffetz, and Thomas (2025) recover an agent's
option set, her menu, from a strategyproof mechanism by testing
each singleton report and including an object if its singleton
report returns it. Singleton-attainability is exactly the condition
that makes this recovery valid without strategyproofness: for
every agent and every profile of the other agents' reports, her
option set consists exactly of the objects that she obtains from
their respective singleton reports (Observation~\ref{obs:oppset}).
SA thus identifies the precise feature which makes the option set recoverable from singleton tests, whether or not truthful reporting is optimal. While implied by strategyproofness, SA itself is not an incentive property. It is a property of the report needed to obtain an object and is meaningful for both strategyproof and manipulable mechanisms.

Small report width, and specifically singleton-attainability, has both advantages and potential drawbacks. Smaller width reduces the informational requirements for obtaining an object, but it may also make a profitable deviation easier to identify if the mechanism is manipulable, even though report width measures the length of a successful report, not how precisely its objects must be ordered or how difficult such a report is to find.\footnote{A separate computational literature asks how difficult it is to find a profitable manipulation; see Klaus, Manlove, and Rossi (2016, Section 14.2.3) for a survey.} 
The existence of manipulable SA mechanisms is an informative feature of SA, without defining its scope. It is a transparency property which may make a profitable manipulation easier to identify, but it is distinct from the obvious manipulability notion of Troyan and Morrill (2020). Object-Proposing DA is SA and, though manipulable, it is not obviously manipulable. On the other hand, Rank-Equity\footnote{We use the name Rank-Equity for our extension of the mechanism of Ekici, Ertemel, and Yenmez (2026) to many-to-one matching with incomplete reports, which applies DA to report-induced priorities that favor agents who rank an object lower. Appendix~\ref{app:mechanisms} gives the definition.}  is not SA and is obviously manipulable for some tie-breakers, demonstrating that the payoff transparency feature of obvious manipulability is separate from the access transparency of SA.\footnote{Appendix~\ref{app:om} develops the comparison with obvious manipulation.} 

Singleton-attainability has several strategic implications for individually rational mechanisms. SA is equivalent to the existence of a singleton or empty best response in every best-response problem and to the sufficiency of such deviations for verifying Nash equilibrium (Proposition~\ref{prop:sa-best-response}). More generally, report width is the smallest bound such that, to verify Nash equilibrium, it suffices to check deviations listing no more than that many acceptable objects (Proposition~\ref{prop:best-response}). 
For individually rational mechanisms, SA is also equivalent to
the following strategic property: whenever an agent can
profitably manipulate, she can do so by reporting just one
truthfully acceptable object. Such a manipulation involves
withholding acceptable objects, without a preference reversal
or false acceptability claim
(Proposition~\ref{prop:singleton-manipulation}).

We refer to the mechanisms and mechanism families classified in
Table~\ref{tab:class} (Section~\ref{sec:classify}) collectively as
the \emph{primary mechanisms and families}. A closer look at the
non-SA primary mechanisms and families reveals two one-sided forms
of indirect access. EADA, DA-TTC, and several mechanism families with a report-dependent outside option rank satisfy \textit{top-lift invariance }(TLI) but violate \textit{truncation invariance} (TRI): an agent cannot lose an obtained object by moving it to the top of her report, but may lose it by deleting everything below it. We call this \textit{tail dependence}. Rank-Equity violates TLI but satisfies TRI: an agent may lose an obtained object by moving it to the top of her report, but cannot lose it by deleting everything below it. We call this \textit{head dependence}.
We show that TLI and TRI together imply SA (Proposition~\ref{prop:decomp}).
Every Preference Rank Partitioned (PRP) mechanism\footnote{The PRP family includes DA, IA, First-Preference-First, Parallel mechanisms, Secure Boston, and French Priority mechanisms, among others (Ayoade and P\'apai, 2023).} and every Taiwan Deduction mechanism,\footnote{The Taiwan Deduction mechanism has been used nationwide for high-school assignment in Taiwan since 2014. It uses DA with deductions from applicants' scores that weakly increase with the rank at which a school is listed (Dur et al., 2022).
This family includes DA, IA and Parallel mechanisms, among others.} as well as every rank-based primary mechanism with a fixed rank for the outside option, satisfies both invariances directly (Corollaries~\ref{cor:smp-applications} and~\ref{cor:fixed-rank-welfare-applications}). In fact, every primary mechanism and mechanism family on the SA side of the dichotomy satisfies both invariances. Strategyproofness implies both invariances and SA (Proposition~\ref{prop:sp-inv}), while stability provides a separate sufficient condition for SA (Proposition~\ref{prop:stable}).

The analysis also reveals a general tradeoff. No Pareto
improvement over a strategyproof, individually rational, and
non-wasteful mechanism is singleton-attainable, even without requiring
the improving mechanism to be Pareto efficient
(Theorem~\ref{thm:pareto-improvement-nonsa}). Within the present matching
environment, this strengthens the corresponding strategyproofness
frontier result of Alva and Manjunath (2019b): such a baseline
mechanism is Pareto-undominated not only within the class of
strategyproof mechanisms, but also within the larger class of SA
mechanisms. Since DA satisfies the three properties, every Pareto improvement
over DA violates SA (Corollary~\ref{cor:da-improvement-nonsa}).
Moreover, DA is Pareto efficient when all agents report at most one
acceptable object. This fits the familiar understanding of DA's
efficiency losses: richer reports may create priority claims on objects
ranked above an agent's eventual assignment, and these claims set off
the rejection cycles that produce the losses. Pareto improvements over
DA may recover welfare lost through such claims, but they cannot do so
while retaining SA.

We also introduce \emph{structural mechanism classes} that generalize and group the primary mechanisms and families and explain their access properties. An overview of this structural classification is presented in Table~\ref{tab:classes} (Section~\ref{sec:class-revisited}). The purpose of studying these large structural classes is to establish common sources of direct and indirect access. 
The analysis identifies three pairs of contrasting classes.
First, strategyproof mechanisms have width one, whereas
Positional DA-TC mechanisms, which are Pareto-efficient
mechanisms that Pareto-improve the strategyproof DA baseline,
have unbounded width (Theorem~\ref{thm:positional-width}).
Second, Stable Monotone Priority mechanisms have width one
(Theorem~\ref{thm:smp}), whereas Stable Reverse Priority
mechanisms have maximal width (Theorem~\ref{thm:srp-width}).
Third, Fixed Rank Welfare mechanisms have width one
(Theorem~\ref{thm:fixed-outside}), whereas Reported Rank Welfare
mechanisms have maximal width
(Theorem~\ref{thm:reported-outside}).
Together, these 
structural results encompass all the primary mechanisms and families in
our classification and identify the structures responsible for their
access properties.

Taken together, our results identify singleton-attainability as a dimension of mechanisms that has not been studied in its own right, distinct from the incentive, stability, welfare or equity properties through which matching mechanisms are usually studied and compared. Classifying familiar mechanisms by the transparency of the access they provide reorganizes them, placing mechanisms with similar incentive or welfare properties on opposite sides and mechanisms that are usually contrasted on the same side. This access dimension is structural: it is governed by identifiable features of a mechanism's procedure, culminating in the width dichotomy, and it has consequences for information, disclosure, and the effect of reporting constraints.

The paper is related to several notions of transparency and complexity. M\"oller (2026), Grigoryan and M\"oller (2025), and Hakimov and Raghavan (2026) study related questions of transparency, auditability, and verification, including whether an announced mechanism was followed and whether its outcome can be verified. Gonczarowski and Thomas (2024) quantify global structural complexities of matching mechanisms, including the effects of one agent's report on matchings and option sets and the information needed to represent or verify an outcome. Gonczarowski, Heffetz, and Thomas (2025) study descriptions that expose strategyproofness by revealing an agent's option set, or menu. Feng and Liu (2024) measure the partial information about other agents' preferences and schools' priorities needed to secure a student-school assignment, while Bonkoungou and Nesterov (2021) compare mechanisms through objects reachable by profitable strategies. We hold the mechanism and the reports of the other agents fixed and ask a different question: once an object has been identified as attainable, how much of the agent's own preference must she know and submit to reach it? Report width measures the support size of an access report, rather than the complexity of identifying the option set, describing the mechanism, or representing and verifying the resulting matching.

The practical motivation is also related to work on constrained lists and simplified preference reporting, including Haeringer and Klijn (2009), Arteaga et al.\ (2022), and Huang and Zhang (2025). Hatfield and Milgrom (2005, Theorem 10) obtain a singleton-access property for the agent-optimal stable mechanism in matching with contracts under substitutes, as a step toward strategyproofness. 
Zhang (2023) studies related invariance properties for random allocation mechanisms.
Here singleton access becomes the object of study across strategyproof and manipulable mechanisms, and report width quantifies the severity of its violation. 

The paper is organized as follows. Section~\ref{sec:model} introduces the model. Section~\ref{sec:sa} defines singleton-attainability and develops its immediate implications for option sets, manipulations, best responses, and equilibrium verification.
Section~\ref{sec:decomp} introduces the head-tail decomposition and relates
SA to strategyproofness and stability. Section~\ref{sec:classify} presents the primary classification. Sections~\ref{sec:causes-sa} and~\ref{sec:failure} identify structural causes of SA and of its violation respectively, with report width introduced in Section~\ref{sec:width}. 
Section~\ref{sec:class-revisited} brings these structural results together and establishes the width dichotomy. Section~\ref{sec:implications} develops the informational, disclosure, and practical implications, and Section~\ref{sec:conclusion} concludes. Appendix~\ref{app:mechanisms} collects definitions and references for the primary mechanisms central to the analysis, Appendix~\ref{app:proofs} contains omitted proofs, and Appendix~\ref{app:om} compares SA with obvious manipulation.

\section{Model}\label{sec:model}
We work with the standard unit-demand many-to-one matching model with priorities, known as the school choice model. There is a finite set of agents $I$ and a finite set $C$ of $m$ objects. Each object $c\in C$ has capacity $q_c\geq 1$, indicating $q_c$ identical copies with a common strict priority ordering over agents.

A \emph{matching} is a mapping $\mu:I\to C\cup\{\emptyset\}$ such that $|\mu(c)|\leq q_c$ for every $c\in C$, where $\mu_i=\mu(i)$, $\mu(c)=\{i\in I:\mu_i=c\}$, and $\emptyset$ denotes the outside option. We refer to $\mu_i$ as agent $i$'s \emph{assignment}, and to $\mu$ as the \emph{matching.} Let $\mathcal M$ denote the set of matchings. We write $i\asgn c$ to indicate that $\mu_i=c$.

The preference domain and the report space coincide. Each agent $i$ has a strict preference ordering $R_i$ over $C\cup\{\emptyset\}$.
Every object in the set $A_i(R_i)\subseteq C$ of acceptable objects is ranked strictly above $\emptyset$; every object outside $A_i(R_i)$ is below $\emptyset$ and considered unacceptable. The relative ordering of unacceptable objects is immaterial for our
purposes and is suppressed in the notation. We also refer to this as having \textit{incomplete reports.} 
We write $c\,R_i\,d$ when agent $i$ weakly prefers $c$ to $d$, and $c\,P_i\,d$ when she strictly prefers $c$ to $d$. Let $\mathcal R_i$ denote the set of all such preference orderings and let $\mathcal R =\prod_{i\in I}\mathcal R_i$ with $\mathcal R_{-i} =\prod_{j\neq i}\mathcal R_j$. 
Since only the acceptable objects and their order are displayed, we use
ordered-list notation: $R_i=(a,b,c)$ means
$a\,P_i\,b\,P_i\,c\,P_i\,\emptyset$, with every unlisted object ranked
below $\emptyset$. We write $(c)$ for the report that declares only $c$ acceptable and $\emptyset$ for the empty report.

Each object $c$ has a strict priority order $\succ_c$ over $I$, and $\succ \, =(\succ_c)_{c\in C}$ denotes the priority profile. When convenient, we write $\succ_c=(i,j,h,\ldots)$, which lists the agents starting with the highest priority. For all remaining agents, ranked below those listed, relative priority is left unspecified. 

A \textit{mechanism} assigns a matching to each report profile, given the
object capacity list $q=(q_c)_{c\in C}$ and, for mechanisms that use
priorities, the priority profile $\succ$. All mechanisms and mechanism
classes studied are defined for arbitrary object capacities. When a
mechanism was originally formulated only for unit capacities, we use the
many-to-one version defined in this paper. Since $q$ and, when relevant,
$\succ$ are fixed throughout unless stated otherwise, we suppress them from
the notation and write $\varphi:\mathcal R\to\mathcal M$, with
$\varphi_i(R)$ denoting agent $i$'s assignment.

 A matching $\mu$ is \emph{individually
rational}  at profile $R$ if $\mu_i\in A_i(R_i)\cup\{\emptyset\}$ for every agent $i$.  A mechanism is individually rational if it selects an individually
rational matching at every profile.
A matching $\mu$ is \emph{non-wasteful} at profile $R$ if, for every agent $i$ and object $c$, $c\,P_i\,\mu_i$ implies $|\mu(c)|=q_c$. A mechanism is non-wasteful if it selects a non-wasteful matching at every profile.

A mechanism $\varphi$ is \emph{strategyproof}  if, for every profile $R$, every agent $i$, and every alternative report $\widehat R_i$, $\varphi_i(R)\,R_i\,\varphi_i(\widehat R_i,R_{-i})$.  A mechanism is \textit{manipulable} if it is not strategyproof. It is \emph{nonbossy} if, for every profile $R$, every agent $i$, and every $R_i'\in\mathcal R_i$, $\varphi_i(R)=\varphi_i(R_i',R_{-i})$ implies $\varphi(R)=\varphi(R_i',R_{-i})$.

A pair $(i,c)$ \emph{blocks} a matching $\mu$ at profile $R$ if $c\,P_i\,\mu_i$ and either $|\mu(c)|<q_c$ or $i\succ_c j$ for some $j\in\mu(c)$. A matching is \emph{stable} at $R$ if it is individually rational and has no blocking pair. A mechanism is \emph{stable} if it selects a stable matching at every profile $R$.

At a profile $R$, a matching $\mu$ is a \emph{Pareto improvement over} a matching $\nu$ if $\mu_i\,R_i\,\nu_i$ for every agent $i$, with strict preference for some agent. A matching is \emph{Pareto efficient} if no feasible matching is a Pareto improvement over it, and a mechanism is Pareto efficient if it selects a Pareto efficient matching at every profile $R$. A mechanism $\psi$ \emph{Pareto-improves} a mechanism $\varphi$ if $\psi_i(R)\,R_i\,\varphi_i(R)$ for every agent $i$ and every profile $R$, with strict preference for some agent at some profile. In this case, $\psi$ is a \emph{Pareto improvement over} $\varphi$.

\section{Singleton-attainability and transparent access}\label{sec:sa}

\subsection{Option sets and direct access}\label{sec:option-sets}

Given the other agents' reports, an agent's option set is the set of objects that she can obtain by changing her own report (Barber\`a, 1983). 

\defhead{Option set and attainability} Fix an agent $i$ and the reports $R_{-i}$ of the other agents. Agent $i$'s \emph{option set} at $R_{-i}$ is
\[
O_i(R_{-i})=\{c\in C:\varphi_i(\widehat R_i,R_{-i})=c\text{ for some }\widehat R_i\in\mathcal R_i\}.
\]
An object $c$ is \emph{attainable} for $i$ at $R_{-i}$ if $c\in O_i(R_{-i})$. 
For an individually rational mechanism, the empty report always yields the outside option. Since our focus is access to objects, $O_i(R_{-i})$ records only attainable objects in $C$.
An object may be attainable only because other objects in the report alter the course of the mechanism. Our central concept rules out this dependence on the rest of the agent's report.

\defhead{Singleton-attainability (SA)} A mechanism $\varphi$ is \emph{singleton-attainable}  if, for every agent $i$, every report $R_{-i}$ of the other agents, and every object $c$,
\[
\varphi_i(\widehat R_i,R_{-i})=c\text{ for some }\widehat R_i\in\mathcal R_i
\quad\ \mbox{implies} \quad
\varphi_i((c),R_{-i})=c.
\]

Thus the option set can be recovered from singleton reports alone.

\begin{observation}[Singleton representation of option sets]\label{obs:oppset}
A mechanism $\varphi$ satisfies SA if and only if, for every agent $i$ and every $R_{-i}$,
\[
O_i(R_{-i})=\{c\in C:\varphi_i((c),R_{-i})=c\}.
\]
\end{observation}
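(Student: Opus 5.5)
The plan is to prove the two directions separately, with one inclusion holding for every mechanism and the other being where SA is used. Throughout, fix an agent $i$ and the reports $R_{-i}$ of the other agents, and write $S_i(R_{-i})=\{c\in C:\varphi_i((c),R_{-i})=c\}$ for the set of objects obtained through their own singleton reports.

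\textbf{Inclusion valid for every mechanism.} For any mechanism, $S_i(R_{-i})\subseteq O_i(R_{-i})$. If $\varphi_i((c),R_{-i})=c$, then the report $\widehat R_i=(c)\in\mathcal R_i$ witnesses that $c$ is attainable. This step uses only that singleton reports belong to the report space $\mathcal R_i$, which holds because the report space is the full domain of strict orderings with incomplete reports.

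\textbf{SA implies the representation.} Assume $\varphi$ is SA and let $c\in O_i(R_{-i})$. By the definition of the option set, there is some $\widehat R_i$ with $\varphi_i(\widehat R_i,R_{-i})=c$, and $c\in C$. SA then gives $\varphi_i((c),R_{-i})=c$, so $c\in S_i(R_{-i})$. Combined with the first inclusion, $O_i(R_{-i})=S_i(R_{-i})$ for every $i$ and every $R_{-i}$.

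\textbf{The representation implies SA.} Conversely, suppose $O_i(R_{-i})=S_i(R_{-i})$ holds for every $i$ and $R_{-i}$. Take $i$, $R_{-i}$, $c$, and $\widehat R_i$ with $\varphi_i(\widehat R_i,R_{-i})=c$.
\begin{itemize}
\item If $c\in C$, then $c\in O_i(R_{-i})=S_i(R_{-i})$, so $\varphi_i((c),R_{-i})=c$, which is exactly the SA requirement.
\item If $c=\emptyset$, the SA implication must still be addressed. The definition quantifies over "every object $c$," which I read as $c\in C$, consistent with the option set recording only objects in $C$. Under that reading the case does not arise. I would state this reading explicitly.
\end{itemize}

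There is no real obstacle here, since both directions are direct unpackings of the definitions. The only point needing care is the treatment of the outside option $\emptyset$, which I handle by the convention that SA and $O_i$ both range over $C$.
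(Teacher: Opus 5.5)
Your proof is correct and follows the paper's own argument: the inclusion $\{c\in C:\varphi_i((c),R_{-i})=c\}\subseteq O_i(R_{-i})$ holds by the definition of the option set, SA gives the reverse inclusion, and the converse follows directly from the equality. Your reading that SA quantifies only over $c\in C$ matches the paper's usage, since objects are elements of $C$ and $O_i(R_{-i})$ records only attainable objects in $C$.
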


\begin{proof}
Suppose $\varphi$ is SA. Then $c\in O_i(R_{-i})$ implies $\varphi_i((c),R_{-i})=c$. By the definition of the option set, $\varphi_i((c),R_{-i})=c$ implies $c\in O_i(R_{-i})$. Hence SA implies the displayed equality. Conversely, the equality immediately implies SA.
\end{proof}

Observation~\ref{obs:oppset} characterizes SA without any assumption concerning manipulability or strategyproofness. Strategyproofness imposes a different requirement: truthful reporting selects the agent's most-preferred acceptable object from $O_i(R_{-i})$ whenever it contains an acceptable object (Barber\`a, 1983). SA does not imply strategyproofness, so truthful reporting need not select the best attainable assignment using an SA mechanism. Its implication is that, once an attainable object $c$ has been identified, the agent need not compare $c$ with another acceptable object or report any other object as acceptable. 
Nonetheless, singleton-attainability should not be read as a recommendation that agents submit singleton reports. An agent may have several acceptable objects and good reasons to report them, especially when the reports of others are uncertain. 
Singleton-attainability is an access property: holding the other agents' reports fixed, every attainable object is accessible without declaring any other object acceptable.

SA alone does not imply that simplifying the report leaves others unaffected. Changing an agent's report may change the assignments of others. Nonbossiness strengthens the conclusion from individual access to replication of the entire outcome.

\begin{observation}[Singleton outcome replication]\label{obs:replication}
For an individually rational mechanism $\varphi$ which is SA and nonbossy, at every profile $R$, replacing any agent's report by the singleton report of her assigned object, or by the empty report if she is unmatched, leaves the entire matching unchanged.
\end{observation}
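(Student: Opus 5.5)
The argument is short and uses only the definitions of SA, individual rationality, and nonbossiness. Fix a profile $R$ and an agent $i$, and let $\mu=\varphi(R)$. We split into two cases according to whether $i$ is matched.

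\emph{Matched case.} Suppose $\mu_i=c\in C$. Then $c$ is attainable for $i$ at $R_{-i}$, witnessed by the report $R_i$ itself, so $c\in O_i(R_{-i})$. SA gives $\varphi_i((c),R_{-i})=c=\varphi_i(R)$. Nonbossiness, applied with $R_i'=(c)$, then yields $\varphi((c),R_{-i})=\varphi(R)$.

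\emph{Unmatched case.} Suppose $\mu_i=\emptyset$. Under the empty report no object is acceptable, so individual rationality forces $\varphi_i(\emptyset,R_{-i})=\emptyset=\varphi_i(R)$. Nonbossiness then gives $\varphi(\emptyset,R_{-i})=\varphi(R)$. SA is not needed in this case.

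The statement says ``any agent,'' which I read as one agent at a time, and the two cases above cover that reading. If a simultaneous replacement of all agents is intended, I would replace agents sequentially. After each replacement the matching is unchanged, so the next agent's assignment is still her original one, and the same two-case argument applies at the updated profile. Induction over the agents then gives the result.

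No step is a genuine obstacle. The only point needing care is that individual rationality, rather than SA, handles the empty report, since SA speaks only about objects in $C$.
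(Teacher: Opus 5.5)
Your proof is correct and matches the paper's argument: SA plus nonbossiness in the matched case, and individual rationality plus nonbossiness in the unmatched case. Your sequential-replacement remark also corresponds to the paper's follow-up note that iterating the replacement replicates every matching by a profile of singleton and empty reports.
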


\begin{proof} Suppose $\varphi$ is individually rational, SA, and nonbossy. 
Fix a profile $R$ and an agent $i$. Suppose first that $i$ is matched at $R$, say $\varphi_i(R)=c$. Then $c$ is attainable for $i$ at $R_{-i}$, so SA gives $\varphi_i((c),R_{-i})=c$. 
Agent $i$'s assignment is the same at the original and singleton reports, hence nonbossiness implies that the entire matching is unchanged.
Now suppose that $i$ is unmatched at $R$. By individual rationality $i$ is unmatched at $(\emptyset,R_{-i})$ as well. The agent is unmatched regardless, and thus nonbossiness implies that the entire matching is unchanged in this case too. 
\end{proof}

Iterating this replacement, every matching generated by such a mechanism can be replicated by a report profile consisting only of singleton and empty reports.

\subsection{Profitable singleton manipulations} \label{sec:contraction-manip}

\defhead{Profitable manipulation} Given a mechanism $\varphi$, true preference $R_i$ and fixed $R_{-i}$, an alternative report $\widehat R_i$ is a \emph{profitable manipulation} for $i$ under $\varphi$ at $R_{-i}$ if $\varphi_i(\widehat R_i,R_{-i})\,P_i\,\varphi_i(R_i,R_{-i})$.

\medskip

Singleton-attainability is primarily a structural transparency
property, but it also has an immediate strategic implication. If a
profitable manipulation obtains an object $c$, SA implies that the
singleton report $(c)$ obtains the same object and is therefore also a
profitable manipulation. The converse holds as well.

\begin{proposition}[Singleton-attainability and profitable singleton
reports]\label{prop:singleton-manipulation}
A mechanism satisfies SA if and only if, whenever a profitable
manipulation $\widehat R_i$ for agent $i$ at true preference $R_i$
and fixed $R_{-i}$ obtains an object
$c=\varphi_i(\widehat R_i,R_{-i})$, the singleton report $(c)$ also obtains $c$ at $R_{-i}$.
\end{proposition}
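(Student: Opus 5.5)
The forward direction is immediate from the definition. If $\varphi$ is SA and $\widehat R_i$ is a profitable manipulation with $c=\varphi_i(\widehat R_i,R_{-i})$, then $c$ is attainable for $i$ at $R_{-i}$. SA then gives $\varphi_i((c),R_{-i})=c$. Nothing about profitability is used here.

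For the converse, assume the displayed condition: every profitable manipulation that obtains an object $c$ can be replaced by $(c)$. Fix $i$, $R_{-i}$, and an attainable $c\in O_i(R_{-i})$, say $\varphi_i(\widehat R_i,R_{-i})=c$. The plan is to choose a true preference $R_i$ that makes $\widehat R_i$ a profitable manipulation, so that the hypothesis yields $\varphi_i((c),R_{-i})=c$.

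The natural choice is any $R_i$ under which $c$ is the top acceptable object and $\varphi_i(R_i,R_{-i})\neq c$. If such an $R_i$ exists, then $c\,P_i\,\varphi_i(R_i,R_{-i})$, because $c$ is strictly preferred to every other object and to $\emptyset$. Hence $\widehat R_i$ is profitable at $R_i$, and the hypothesis gives the conclusion.

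First I would try $R_i=(c)$ itself. If $\varphi_i((c),R_{-i})=c$ we are already done. Otherwise $\varphi_i((c),R_{-i})\neq c$. Under the true preference $(c)$, the outcome is either $\emptyset$ or some unacceptable object, and $c$ is strictly preferred to both. So $\widehat R_i$ is a profitable manipulation at true preference $(c)$ obtaining $c$. The hypothesis then gives $\varphi_i((c),R_{-i})=c$, which contradicts the case assumption. In either case $\varphi_i((c),R_{-i})=c$, which is SA.

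The main point to check is that profitability is well defined when the outcome under the true report is unacceptable. The model ranks every unacceptable object below $\emptyset$, so $c\,P_i\,d$ holds for any $d\neq c$ under $R_i=(c)$, and the argument does not require individual rationality. The real subtlety is that the hypothesis must be read as quantifying over all true preferences $R_i$, including $(c)$ itself. I would state this reading explicitly, since it is what lets the tested report serve as the true preference.
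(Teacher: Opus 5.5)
Your proposal is correct and follows the paper's proof essentially verbatim: the forward direction applies SA directly, and the converse takes the true preference to be $R_i=(c)$ and derives a contradiction from $\varphi_i((c),R_{-i})\neq c$. Your remark that unacceptable objects rank below $\emptyset$, so that individual rationality is not needed for profitability, is accurate and matches the paper's argument.
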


\begin{proof}
Suppose first that $\varphi$ satisfies SA. Let $\widehat R_i$ be a
profitable manipulation at true preference $R_i$ and fixed $R_{-i}$,
and let
$c=\varphi_i(\widehat R_i,R_{-i})$. Since $\widehat R_i$ attains $c$,
SA implies
$\varphi_i((c),R_{-i})=c$. Thus, the singleton report $(c)$ obtains
the same object and is itself a profitable manipulation.

Conversely, suppose that the stated property holds. Fix an agent $i$,
$R_{-i}$, and an attainable object $c\in O_i(R_{-i})$, and choose a
report $\widehat R_i$ such that
$\varphi_i(\widehat R_i,R_{-i})=c$. Let the agent's true preference
be $R_i=(c)$. If $\varphi_i((c),R_{-i})\neq c$, then $c$ is strictly
preferred to the assignment obtained from the truthful report, so
$\widehat R_i$ is a profitable manipulation and obtains $c$. The
stated property would then imply
$\varphi_i((c),R_{-i})=c$. Thus the inequality cannot hold, and
$\varphi_i((c),R_{-i})=c$. Since $i$, $R_{-i}$, and the attainable
object $c$ were arbitrary, $\varphi$ satisfies SA.
\end{proof}

For individually rational mechanisms, the proposition implies
that any profitable manipulation under an SA mechanism can be
replaced by the singleton report of the target object, which is
truthfully acceptable. For a non-SA mechanism, by contrast, a
profitable manipulation may require a false acceptability claim
or a preference reversal.

Proposition~\ref{prop:singleton-manipulation} is related in spirit
to Kojima and Pathak's (2009) dropping-strategy result for the
multi-unit college side of stable matching. Their result reduces
arbitrary college manipulations to deletion-only reports that weakly
improve the college's bundle. Here, SA reduces any profitable
manipulation that obtains an object to a profitable singleton report.
Thus, within the individually rational SA class, a mechanism is
manipulable if and only if it admits a profitable singleton
manipulation.

\subsection{Best responses and Nash equilibrium}\label{sec:equilibrium}

Fix a true preference profile $R=(R_i)_{i\in I}$. The mechanism induces
a preference revelation game in which each agent chooses a report and
evaluates her assignment according to $R_i$. We call a unilateral
deviation from a submitted report profile \emph{profitable} if it gives the
deviating agent an assignment she strictly prefers to her assignment at
that profile. The proposition  below concerns only profitable deviations.

\begin{proposition}[Best responses and equilibrium verification]\label{prop:sa-best-response}
For an individually rational mechanism, the following statements are
equivalent:
\begin{enumerate}[label=(\roman*),nosep]
\item the mechanism satisfies SA;
\item for every agent $i$, every true preference $R_i$, and every
profile $R_{-i}$ of the other agents' reports, there exists a singleton
or empty best response;
\item a submitted report profile is a Nash equilibrium if and only if
no agent has a profitable singleton or empty deviation.
\end{enumerate}
\end{proposition}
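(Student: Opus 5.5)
I will prove the cycle (i)$\Rightarrow$(ii)$\Rightarrow$(iii)$\Rightarrow$(i). Throughout, I use individual rationality only in one form: a report $\widehat R_i$ yields $i$ either an object in $A_i(\widehat R_i)$ or $\emptyset$. In particular, the empty report yields $\emptyset$, and a singleton report $(d)$ yields $d$ or $\emptyset$.

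\emph{(i)$\Rightarrow$(ii).} Fix $i$, a true preference $R_i$, and $R_{-i}$. The report space $\mathcal R_i$ is finite, so a best response $\widehat R_i$ exists. Let $x=\varphi_i(\widehat R_i,R_{-i})$. If $x=c\in C$, then $c$ is attainable, and SA gives $\varphi_i((c),R_{-i})=c$. Hence $(c)$ is also a best response. If $x=\emptyset$, the empty report also yields $\emptyset$ by individual rationality, so it is a best response.

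\emph{(ii)$\Rightarrow$(iii).} Fix a true profile and a submitted profile. If the submitted profile is a Nash equilibrium, no agent has any profitable deviation, and in particular no singleton or empty one. Conversely, suppose no agent has a profitable singleton or empty deviation. For each agent $i$, statement (ii) provides a singleton or empty best response $B_i$ against the submitted reports of the others. Since deviating to $B_i$ is not profitable, $i$'s current assignment is at least as good as the best-response payoff. So no deviation is profitable for $i$, and the profile is a Nash equilibrium.

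\emph{(iii)$\Rightarrow$(i).} This is the step that needs care, because (iii) speaks about whole profiles, including every other agent's incentives. I argue by contraposition. Suppose SA fails: there are $i$, $R_{-i}$, $c$, and $\widehat R_i$ with $\varphi_i(\widehat R_i,R_{-i})=c$ but $\varphi_i((c),R_{-i})\neq c$. I take the submitted profile $((c),R_{-i})$ and let $\mu=\varphi((c),R_{-i})$. Since $\mu_i\neq c$, individual rationality gives $\mu_i=\emptyset$.

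Because the game evaluates outcomes by true preferences, which I am free to choose separately from submitted reports, I pick true preferences as follows:
\begin{itemize}
\item For agent $i$, the true preference is $R_i=(c)$.
\item For each $j\neq i$ with $\mu_j\in C$, the true preference is $(\mu_j)$.
\item For each $j\neq i$ with $\mu_j=\emptyset$, the true preference is the empty preference.
\end{itemize}
Each $j\neq i$ then receives her top-ranked outcome under her true preference, so she has no profitable deviation at all.

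For agent $i$, consider her singleton and empty deviations. The empty report yields $\emptyset$. A singleton report $(d)$ with $d\neq c$ yields $d$ or $\emptyset$, and both are weakly worse than $\emptyset$ under the true preference $(c)$. The report $(c)$ is the current report and yields $\emptyset$. So no agent has a profitable singleton or empty deviation. However, $\widehat R_i$ gives $i$ the object $c$, which she strictly prefers to $\emptyset$. Hence the profile is not a Nash equilibrium, and (iii) fails.

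The main obstacle is exactly this last construction: neutralizing the other agents' incentives at the chosen profile. Decoupling true preferences from submitted reports resolves it.
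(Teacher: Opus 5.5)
Your proof is correct and follows essentially the paper's argument. The paper proves the $b$-bounded version (Proposition~\ref{prop:best-response}) and specializes to $b=1$, using the same best-response replacement for (i)$\Rightarrow$(ii), the same argument for (ii)$\Rightarrow$(iii), and the same decoupling of true preferences from reports for (iii)$\Rightarrow$(i); the only cosmetic difference is that the paper has agent $i$ submit the empty report rather than $(c)$, and your choice works equally well because individual rationality forces $\varphi_i((c),R_{-i})=\emptyset$.
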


\begin{proof}
This is the special case $b=1$ of Proposition~\ref{prop:best-response}.
\end{proof}

\section{Singleton-attainability, strategyproofness, and stability}\label{sec:decomp}

Singleton-attainability is related to both simple invariance requirements and some familiar
properties of mechanisms. The head-tail decomposition provides two invariance properties of mechanisms which are jointly sufficient for SA. Strategyproofness implies both
invariance properties, while stability yields SA separately.

\subsection{Head-tail decomposition}\label{sec:head-tail}

We use the following notation for two report modifications. Given a
report $\widehat R_i$ and object $c$,
$\, \widehat R_i\lift{c}$ moves $c$ to the top and preserves the relative 
order of every other listed object. For the second modification, 
$\widehat R_i\trunc{c}$ deletes every object ranked below $c$ if $c$ is acceptable,  and
leaves the rest unchanged; otherwise, 
$\widehat R_i\trunc{c}=\widehat R_i$. Thus,
$\widehat R_i\lift{c}=\widehat R_i$ when $c$ is already top-ranked,
while $\widehat R_i\trunc{c}=\widehat R_i$ when $c$ is unacceptable
or is the last-ranked acceptable object.

\defhead{Top-lift invariance (TLI)}
A mechanism $\varphi$ satisfies \emph{top-lift invariance} if, for
every $i$, $\widehat R_i$, $R_{-i}$,  and $c\in C$,
\[
\varphi_i(\widehat R_i,R_{-i})=c
\quad \mbox{implies} \quad
\varphi_i(\widehat R_i\lift{c},R_{-i})=c.
\]

Ehlers (2008) defines positive association by requiring that, if an agent is assigned an object and another object is ranked above it, interchanging the positions of these two objects in her report leaves her assignment unchanged. 
For individually rational mechanisms, taking the higher-ranked object
to be the one immediately above the assigned object and repeating
this operation shows that positive association implies TLI.

\defhead{Truncation invariance (TRI)}
A mechanism $\varphi$ satisfies \emph{truncation invariance} if, for
every $i$, $\widehat R_i$, $R_{-i}$, and $c\in C$,
\[
\varphi_i(\widehat R_i,R_{-i})=c
\quad \mbox{implies} \quad
\varphi_i(\widehat R_i\trunc{c},R_{-i})=c.
\]

When the obtained object is acceptable, call the objects ranked above
it the \emph{head}, and the objects ranked below it the \emph{tail}.
TLI preserves the assignment when the obtained object is lifted above
the entire head, while TRI preserves it when the entire tail is
deleted. 
The definitions also apply when the obtained object is unacceptable
in the original report: the top lift makes it top-ranked and
acceptable, and the subsequent truncation produces the singleton
report.

Truncation-based invariance conditions have been studied previously
in matching. Hatfield, Kominers, and Westkamp (2021) study
truncation consistency for deterministic mechanisms, and Zhang
(2023) studies truncation-invariance for random mechanisms.
On our deterministic strict-preference domain, both conditions
imply TRI, which considers only truncation immediately below
the assigned object.\footnote{Ehlers and Klaus (2014, 2016)
require the entire matching to remain unchanged after any
truncation that leaves the agent's original assignment acceptable.
See also Roth and Rothblum (1999) and Ehlers (2008) on truncation
strategies.}
Shirakawa (2025) studies both top-dropping and anti-bottom-dropping
monotonicity. For individually rational mechanisms, top-dropping
monotonicity implies TLI, while, for the truncation considered in
TRI, anti-bottom-dropping monotonicity rules out a change to another
object but permits a change to the outside option.

\begin{proposition}[The two invariances imply SA]
\label{prop:decomp}
A mechanism that satisfies TLI and TRI satisfies SA.
\end{proposition}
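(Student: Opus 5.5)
The plan is to compose the two report modifications: first lift the obtained object to the top, then truncate below it. Fix an agent $i$, reports $R_{-i}$, and an object $c\in C$ that is attainable for $i$ at $R_{-i}$, so that $\varphi_i(\widehat R_i,R_{-i})=c$ for some $\widehat R_i\in\mathcal R_i$. The goal is to show that $\varphi_i((c),R_{-i})=c$.

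\emph{Step 1 (lift).} Apply TLI to $\widehat R_i$ and $c$. This gives $\varphi_i(\widehat R_i\lift{c},R_{-i})=c$. By definition, $\widehat R_i\lift{c}$ places $c$ at the top of the report. In particular, $c$ is acceptable in $\widehat R_i\lift{c}$. This also holds when $c$ was unacceptable in $\widehat R_i$, as noted in the discussion after the definitions: the top lift makes $c$ top-ranked and acceptable.

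\emph{Step 2 (truncate).} Set $R_i'=\widehat R_i\lift{c}$ and apply TRI to $R_i'$ and $c$. This gives $\varphi_i(R_i'\trunc{c},R_{-i})=c$. Since $c$ is acceptable and top-ranked in $R_i'$, the truncation $R_i'\trunc{c}$ deletes every object ranked below $c$. The result is exactly the singleton report $(c)$. Hence $\varphi_i((c),R_{-i})=c$. Because $i$, $R_{-i}$, and $c$ were arbitrary, $\varphi$ satisfies SA.

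The only point needing care is the bookkeeping of the operators rather than any substantive obstacle. One must check two things. First, the lift is well defined and produces an acceptable top-ranked $c$ even when $c$ is unacceptable in $\widehat R_i$. Second, truncating at a top-ranked acceptable object leaves precisely $(c)$, and not something like $c$ followed by residual unacceptable entries. The convention that unacceptable objects are suppressed ensures this. No individual rationality assumption is needed, since both invariances are stated for arbitrary obtained objects.
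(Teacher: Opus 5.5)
Your proposal is correct and matches the paper's proof: apply TLI to lift $c$ to the top, which makes it acceptable and top-ranked, then apply TRI to truncate below it, which yields exactly $(c)$. Your remarks on the unacceptable case and on not needing individual rationality are also consistent with the paper's treatment.
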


\begin{proof}
Suppose $\varphi$ satisfies TLI and TRI, and let
$\varphi_i(\widehat R_i,R_{-i})=c$. By TLI, lifting $c$ to the top preserves $c$. 
The object $c$
is therefore acceptable and top-ranked in $\widehat R_i\lift{c}$.
Subsequently truncating $\widehat R_i\lift{c}$ at $c$ deletes every other acceptable object and produces report $(c)$. By TRI, $c$ is preserved. Thus, $\varphi_i((c),R_{-i})=c$ and $\varphi$ satisfies SA. 
\end{proof}

Proposition~\ref{prop:decomp} is a sufficiency result, not a
characterization: the conjunction of TLI and TRI is sufficient but
not necessary for SA. Indeed, neither invariance is necessary for SA, as the example following Theorem~\ref{thm:smp} in Section~\ref{sec:smp} shows. 
The contrapositive of the proposition implies that every non-SA
mechanism violates TLI or TRI, or possibly both.
Table~\ref{tab:class} (Section~\ref{sec:classify}) records the
invariance properties of the primary mechanisms and families.

\subsection{Strategyproofness}\label{sec:strategyproofness}

One of the important sources of singleton-attainability is strategyproofness, which implies both invariances and hence SA.

\begin{proposition}[Strategyproofness implies both invariances and
singleton-attainability]\label{prop:sp-inv}
Every strategyproof mechanism satisfies TLI, TRI, and SA.
\end{proposition}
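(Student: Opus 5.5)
The plan is to derive each invariance directly from strategyproofness by choosing a suitable \emph{true} preference, namely the modified report itself, and then to obtain SA from Proposition~\ref{prop:decomp}. Individual rationality is not assumed, so the argument must not use it. It must allow the new assignment to be the outside option or an object that is unacceptable in the modified report. Throughout, fix $i$, $R_{-i}$, a report $\widehat R_i$ and an object $c$ with $\varphi_i(\widehat R_i,R_{-i})=c$.

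For TLI, I would take the true preference to be $\widehat R_i\lift{c}$. Let $d=\varphi_i(\widehat R_i\lift{c},R_{-i})$ and suppose $d\neq c$. In $\widehat R_i\lift{c}$ the object $c$ is acceptable and top-ranked, so $c$ is strictly preferred to every other alternative, including $\emptyset$ and all unacceptable objects. In particular $c$ is strictly preferred to $d$. Then an agent with true preference $\widehat R_i\lift{c}$ gains by reporting $\widehat R_i$, which yields $c$. This contradicts strategyproofness, so $d=c$.

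For TRI, if $c$ is unacceptable in $\widehat R_i$, the truncation leaves the report unchanged and there is nothing to prove. Otherwise, let $R_i'=\widehat R_i\trunc{c}$ and $d=\varphi_i(R_i',R_{-i})$, and suppose $d\neq c$. Since preferences are strict, exactly one of two cases holds.
\begin{enumerate}[label=(\alph*),nosep]
\item If $d\,P_i'\,c$, then $d$ must lie in the head. The head and its ordering relative to $c$ are identical in $\widehat R_i$, so $d$ is also strictly preferred to $c$ under $\widehat R_i$. Hence an agent with true preference $\widehat R_i$ gains by reporting $R_i'$, contradicting strategyproofness.
\item If $c\,P_i'\,d$, then an agent with true preference $R_i'$ gains by reporting $\widehat R_i$, which yields $c$. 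This again contradicts strategyproofness. This case covers $d$ in the deleted tail, $d=\emptyset$, and $d$ unacceptable, since each of these lies below $c$ in $R_i'$.
\end{enumerate}
Therefore $d=c$.

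With TLI and TRI established, Proposition~\ref{prop:decomp} gives SA. The only delicate step is the case split for TRI. There one must check that head objects keep their position relative to $c$ under truncation, so that case (a) transfers from $R_i'$ back to $\widehat R_i$. One must also check that every other possible value of $d$ falls below $c$ in $R_i'$.
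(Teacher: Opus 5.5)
Your proof is correct and follows the paper's argument: TLI comes from strategyproofness with the lifted report as true preference, TRI from the same two uses of strategyproofness (truthful truncated report against $\widehat R_i$, and truthful $\widehat R_i$ against the truncation), and SA from Proposition~\ref{prop:decomp}. The explicit case split on $d$ in the TRI step only reorganizes the paper's sequential argument, and like the paper's proof yours does not use individual rationality.
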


\begin{proof}
Suppose $\varphi$ is strategyproof. Fix $i$ and a profile $R$ with
$\varphi_i(R)=c$.

For TLI, let
$z=\varphi_i(R_i\lift{c},R_{-i})$. Then strategyproofness at $(R_i\lift{c},R_{-i})$ implies 
$z\,R_i\lift{c}\,c$.
Since $c$ is top-ranked in $R_i\lift{c}$, it follows that $z=c$.
Thus, $\varphi$ satisfies TLI.

For TRI, if $c\notin A_i(R_i)$, then
$R_i\trunc{c}=R_i$, so the conclusion is immediate. Suppose that
$c\in A_i(R_i)$, and let
$v=\varphi_i(R_i\trunc{c},R_{-i})$. Suppose that  $v\neq c$.
Then strategyproofness at $(R_i\trunc{c},R_{-i})$
requires $v\,P_i\trunc{c}\,c$, and hence $v\in A_i(R_i)$. 
Since $R_i\trunc{c}$ is a truncation of $R_i$ and preserves the relative ordering of all acceptable objects, it follows that $v \, P_i \, c$. This contradicts strategyproofness at the original preference profile $R$. Hence $v=c$ and  $\varphi$ satisfies TRI.

Since $\varphi$ satisfies TLI and TRI, Proposition~\ref{prop:decomp}  implies that $\varphi$ also satisfies SA.
\end{proof}

Proposition~\ref{prop:sp-inv} immediately implies that every non-SA
mechanism is manipulable. The converse does not hold: IA, for example, is SA but
manipulable. Combining Propositions~\ref{prop:singleton-manipulation}
and~\ref{prop:sp-inv}, an individually rational mechanism is
strategyproof if and only if it satisfies SA and admits no profitable
singleton manipulation. Moreover, for any SA mechanism, once
$R_{-i}$ is fixed, an agent only needs to test the singleton reports
of the objects ranked above her truthful assignment to identify every
object that she can obtain through a profitable manipulation.\footnote{See
Bonkoungou and Nesterov (2021) on objects reachable through profitable
strategies.}
Singleton-attainability also remains meaningful for strategyproof
mechanisms, where profitable manipulation is absent but the question
of how an agent accesses the objects in her option set remains.

\subsection{Stability}\label{sec:stable}

Stability provides another source of singleton-attainability, although stability alone does not imply TLI or TRI.

\begin{proposition}[Stability implies singleton-attainability]\label{prop:stable}
Every stable mechanism satisfies SA.
\end{proposition}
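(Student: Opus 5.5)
Fix an agent $i$, reports $R_{-i}$, and an object $c$ with $\varphi_i(\widehat R_i,R_{-i})=c$ for some report $\widehat R_i$. Write $\mu=\varphi(\widehat R_i,R_{-i})$ and $\nu=\varphi((c),R_{-i})$. The goal is $\nu_i=c$. Since $\nu$ is individually rational at the singleton report, $\nu_i\in\{c,\emptyset\}$, so it suffices to rule out $\nu_i=\emptyset$. The plan is to argue by contradiction, assuming $\nu_i=\emptyset$. Stability of $\nu$ at $((c),R_{-i})$ then says that $(i,c)$ does not block $\nu$: object $c$ is full at $\nu$ and every $j\in\nu(c)$ satisfies $j\succ_c i$. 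Stability of $\mu$ at $(\widehat R_i,R_{-i})$ gives the comparison matching in which $i$ holds $c$.

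The first step is to transfer $\mu$ to the singleton profile. Stability of $\mu$ depends on $i$'s report only through blocking pairs involving $i$ and through individual rationality for $i$. At $((c),R_{-i})$, agent $i$ holds $c$, which is her top and only acceptable object. So she is in no blocking pair, and individual rationality holds for her. The conditions for the other agents are unchanged. Hence $\mu$ is also stable at $((c),R_{-i})$.

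We now have two stable matchings at the same profile $R'=((c),R_{-i})$: $\mu$, in which $i$ is matched to $c$, and $\nu$, in which $i$ is unmatched. The key step is to invoke the rural hospitals theorem for the school choice (many-to-one, strict priorities) model. Its statement is that the set of matched agents is the same across all stable matchings at a given profile, so $i$ matched at $\mu$ would force $i$ matched at $\nu$, a contradiction. The version I plan to cite is the standard one with responsive priorities, where each object is treated as $q_c$ copies via the usual reduction to one-to-one matching. I will also note that the paper's stability notion, with non-wastefulness built into the blocking condition via $|\mu(c)|<q_c$, matches the standard notion.

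I expect the main obstacle to be justifying the rural hospitals step cleanly. The paper allows general stable mechanisms, not DA, so the argument cannot rely on DA's structure and must hold for an arbitrary stable selection. If a self-contained argument is preferred, I would use a direct counting argument. The agent-optimal stable matching matches a weakly larger set than any stable matching on the agent side, while objects fill weakly fewer seats there. Since the numbers of matched agents and filled seats coincide in every matching, both sets are invariant, and this contradicts $\mu_i=c$ with $\nu_i=\emptyset$. Either route gives $\nu_i=c$, which establishes SA.
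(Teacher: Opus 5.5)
Your proof is correct and follows the paper's argument: you show that $\mu$ remains stable when $i$ switches to the singleton report $(c)$, use the rural hospitals theorem to conclude that $i$ is matched in every stable matching at $((c),R_{-i})$, and then apply individual rationality. The self-contained counting argument you sketch, based on the agent-optimal, object-pessimal stable matching in the copies reduction, is the standard proof of the rural hospitals theorem, which the paper simply cites.
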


We will use the rural hospitals theorem repeatedly: at a fixed preference and priority profile, all stable matchings assign the same set of agents and the same number of agents is assigned to each object (Roth, 1986; Roth and Sotomayor, 1990).

\begin{proof} Suppose $\varphi$ is stable. Fix $i$ and a profile $R$ with
$\varphi_i(R)=c$. Let $\mu=\varphi(R)$. After $i$ replaces her report by the singleton report $(c)$, the matching $\mu$ remains stable: $i$ is assigned her only acceptable object in $\mu$ so $i$ is not in a blocking pair, and any blocking pair not involving $i$ would already have blocked $\mu$ at the original profile. Hence, there is a stable matching at $((c),R_{-i})$ in which $i$ is matched. By the rural hospitals theorem, $i$ is matched in every stable matching at this profile. Then, by individual rationality, $\varphi_i((c),R_{-i})=c$ and thus $\varphi$ is SA. 
\end{proof}

The conclusion of Proposition~\ref{prop:stable} extends beyond the school choice model. After a matched participant reports only her partner as acceptable, the original matching remains stable. 
The rural hospitals theorem fixes each participant's matched status across stable matchings at the modified profile, and individual rationality implies the same partner. This argument applies, for example, to participants on either side of the one-to-one two-sided matching model, known as the marriage problem (Gale and Shapley, 1962).

\section{Primary classification}\label{sec:classify}

Table~\ref{tab:class} presents the primary classification by singleton-attainability. Sections~\ref{sec:causes-sa} and~\ref{sec:failure} develop the structural explanations. Appendix~\ref{app:mechanisms} collects definitions and references for the primary mechanisms and families central to the analysis. 
The Top Trading Cycles (TTC) entry in Table~\ref{tab:class} (and in Table~\ref{tab:classes} in Section~\ref{sec:class-revisited})  includes TTC and its generalizations.\footnote{TTC includes the Shapley--Scarf (1974) TTC mechanism, the Trading Cycles mechanisms of Pycia and \"{U}nver (2017), including the Hierarchical Exchange mechanisms of P\'apai (2000), the school choice TTC mechanisms of Abdulkadiro\u{g}lu and S\"onmez (2003), and school choice variants such as the Clinch and Trade mechanisms of Morrill (2015) and the Equitable TTC mechanism of Hakimov and Kesten (2018). These mechanisms are strategyproof, as shown by Roth (1982b) for the original TTC mechanism and by the respective cited
papers for the remaining mechanisms.} 
Some entries in the table overlap. In particular, DA and IA belong to both the PRP and Taiwan Deduction families.

Table~\ref{tab:class} reveals three interesting contrasts. First, DA, Serial Dictatorship, and TTC satisfy strategyproofness, while EADA and DA-TTC are Pareto improvements over the strategyproof DA mechanism. 
Second, PRP and Taiwan Deduction mechanisms use report-induced priorities that reflect ordinary priority improvement, while Rank-Equity reverses this direction by construction. 
Third, within each rank-welfare pair, the fixed and reported versions
differ in how the outside option is ranked: its rank is fixed in the
former and determined by the submitted report in the latter. The two
versions use the same rank-welfare criterion on complete reports but fall
on opposite sides of SA when incomplete reports are allowed.

\begin{table}[!t]
\centering
\scriptsize
\setlength{\tabcolsep}{4pt}
\renewcommand{\arraystretch}{1.08}

\newcommand{\SAclassificationHalf}[1]{%
  \begin{minipage}[t]{\linewidth}\vspace{0pt}%
  \begin{tabularx}{\linewidth}{@{}>{\raggedright\arraybackslash}X
    >{\raggedright\arraybackslash}p{0.43\linewidth}@{}}
  #1
  \end{tabularx}%
  \end{minipage}%
}

\begin{tabular}{@{}p{0.475\textwidth}@{\hspace{0.55em}}|
  @{\hspace{0.55em}}p{0.475\textwidth}@{}}
\toprule
{\centering\textbf{Singleton-attainable}\par}
& {\centering\textbf{Not singleton-attainable}\par} \\
\midrule
\SAclassificationHalf{%
  \textit{Primary mechanism or family} & \textit{Reason} \\
}
&
\SAclassificationHalf{%
  \textit{Primary mechanism or family} & \textit{Reason} \\
} \\
\midrule

\SAclassificationHalf{%
  DA & TLI and TRI\newline
    (Proposition~\ref{prop:sp-inv}) \\
  \midrule
  Serial Dictatorship & TLI and TRI\newline
    (Proposition~\ref{prop:sp-inv}) \\
  \midrule
  TTC & TLI and TRI\newline
    (Proposition~\ref{prop:sp-inv}) \\
}
&
\SAclassificationHalf{%
  EADA & TLI, not TRI:\newline
    tail-dependent\newline
    (Example~\ref{ex:da-improvements-ti};
    Proposition~\ref{prop:complementary-invariance}) \\
  \midrule
  DA-TTC & TLI, not TRI:\newline
    tail-dependent\newline
    (Example~\ref{ex:da-improvements-ti};
    Proposition~\ref{prop:complementary-invariance}) \\
} \\
\midrule

\SAclassificationHalf{%
  Object-Proposing DA & TLI and TRI\newline
    (Corollary~\ref{cor:smp-applications}) \\
  \midrule
  IA & TLI and TRI\newline
    (Corollary~\ref{cor:smp-applications}) \\
  \midrule
  PRP mechanisms & TLI and TRI\newline
    (Corollary~\ref{cor:smp-applications}) \\
  \midrule
  Taiwan Deduction\newline mechanisms & TLI and TRI\newline
    (Corollary~\ref{cor:smp-applications}) \\
}
&
\SAclassificationHalf{%
  Rank-Equity & Not TLI, but TRI:\newline
    head-dependent\newline
    (Example~\ref{ex:rank-equity};
    Proposition~\ref{prop:complementary-invariance}) \\
} \\
\midrule

\SAclassificationHalf{%
  Fixed Rawlsian\newline mechanisms & TLI and TRI\newline
    (Corollary~\ref{cor:fixed-rank-welfare-applications}) \\
}
&
\SAclassificationHalf{%
  Reported Rawlsian\newline mechanisms & TLI, not TRI:\newline
    tail-dependent\newline
    (Example~\ref{ex:reported};
    Proposition~\ref{prop:complementary-invariance}) \\
} \\
\midrule
\SAclassificationHalf{%
  Fixed Additive Rank\newline Scoring mechanisms & TLI and TRI\newline
    (Corollary~\ref{cor:fixed-rank-welfare-applications}) \\
}
&
\SAclassificationHalf{%
  Reported Additive Rank\newline Scoring mechanisms & TLI, not TRI:\newline
    tail-dependent\newline
    (Example~\ref{ex:reported};
    Proposition~\ref{prop:complementary-invariance}) \\
} \\
\midrule
\SAclassificationHalf{%
  Fixed Rank-Maximal\newline mechanisms & TLI and TRI\newline
    (Corollary~\ref{cor:fixed-rank-welfare-applications}) \\
}
&
\SAclassificationHalf{%
  Reported Rank-Maximal\newline mechanisms & TLI, not TRI:\newline
    tail-dependent\newline
    (Example~\ref{ex:reported};
    Proposition~\ref{prop:complementary-invariance}) \\
} \\
\bottomrule
\end{tabular}
\caption{Primary classification by singleton-attainability}
\label{tab:class}
\end{table}

The \textit{Reason} columns record the invariance properties of the mechanisms and
indicate which example or result establishes them. 
TLI eliminates dependence
on the head by allowing the assigned object to be lifted above it, while
TRI eliminates dependence on the tail by allowing it to be deleted (Section~\ref{sec:head-tail}). An SA mechanism
need not satisfy either invariance, as the example in
Section~\ref{sec:smp} shows.
Remarkably, all
singleton-attainable primary mechanisms and families satisfy both TLI and
TRI.
For the entries on the left side of the table, the displayed TLI and
TRI properties imply SA by Proposition~\ref{prop:decomp}.

The contrapositive of Proposition~\ref{prop:decomp} indicates that a
non-SA mechanism must violate at least one of the two invariances.
We call a non-SA mechanism \emph{head-dependent} if it satisfies TRI
but violates TLI, and \emph{tail-dependent} if it satisfies TLI
but violates TRI. The restriction to non-SA mechanisms ensures that
these terms describe forms of indirect access, since an SA mechanism
need not satisfy either invariance. Note also that a non-SA mechanism may
violate both TLI and TRI, so it need not be either head-dependent
or tail-dependent, both of which indicate one-sided invariance.
In our classification, each non-SA mechanism violates exactly one
of TLI and TRI, so each is either head-dependent or tail-dependent.

The primary classification also illustrates that SA is
weaker than strategyproofness. Object-Proposing DA, IA, and the three primary
fixed rank-welfare families are SA but manipulable. Within each of
the PRP and Taiwan Deduction families, DA is the only strategyproof
mechanism.\footnote{See
Roth (1982a) for Object-Proposing DA, Ayoade and P\'apai (2023) for
PRP mechanisms, Dur et al. (2022) for Taiwan Deduction mechanisms,
and Featherstone (2020) for Fixed Additive Rank Scoring and Fixed
Rank-Maximal mechanisms. For Fixed Rawlsian mechanisms, the
manipulation argument in Afacan and Dur (2024, Proposition~5)
remains valid when the outside option rank is fixed at $m+1$.}
Every mechanism listed on the non-SA side of
Table~\ref{tab:class} is manipulable by
Proposition~\ref{prop:sp-inv}.

\section{What causes singleton-attainability?}\label{sec:causes-sa}

Strategyproofness is one broad source of singleton-attainability and we treat the class of strategyproof mechanisms as our first structural class. 
Here we present two further sources of SA. The first one generalizes stability to a larger class of mechanisms that select stable matchings with respect to specific report-induced priorities. The second one is based on rank-welfare optimization when the social treatment of the outside option is exogenous.

\subsection{Stable Monotone Priority mechanisms}\label{sec:smp}

One source of SA is stability with respect to report-induced priorities that satisfy the following natural monotonicity condition: reducing the set of objects ranked above an acceptable object cannot weaken the agent's priority at that object. The resulting Stable Monotone Priority class guarantees SA for arbitrary stable selection, while its agent-optimal and object-optimal selections satisfy the stronger TLI and TRI properties. This structural class extends substantially beyond fixed-priority stable mechanisms.\footnote{Sasaki (2025) studies DA with rank-dependent priorities. The motivating Hyogo high school admission mechanism is a two-tier Taiwan Deduction mechanism, while his general rank-monotone framework is broader.}

For every agent $i$, report $R_i$, and object $c\in A_i(R_i)$, define the \textit{upper contour set} of $c$ by
\[
U_i(c;R_i)=\{d\in A_i(R_i):d\,P_i\,c\}.
\]

A \emph{report-induced priority rule} $\Gamma=(\Gamma_c)_{c\in C}$ assigns, to every report profile $R$ and object $c$, a strict priority order $\Gamma_c(R)$ over the agents. The rule $\Gamma$ may also depend on fixed primitives or mechanism parameters, such as exogenous priorities, scores, and tie-breakers, which are suppressed from the notation.

\defhead{Independence} A report-induced priority rule $\Gamma$ satisfies \emph{independence} if, for every agent $i$, object $c$, reports $R_i$ and $R_i'$, profile $R_{-i}$, and pair of distinct agents $j,h\in I\setminus\{i\}$,
\[
j\,\Gamma_c(R_i,R_{-i})\,h
\quad  \mbox{ if and only if  }  \quad
j\,\Gamma_c(R_i',R_{-i})\,h.
\]
Thus, changing agent $i$'s report does not change the relative priority of any two other agents at any object.

\medskip

The notion of priority monotonicity, which we define next, is related to the familiar property of respecting improvements (Balinski and S\"onmez, 1999), but applies to the report-induced priority at a particular object rather than to the mechanism's final assignment.

\defhead{Priority monotonicity} A report-induced priority rule $\Gamma$ satisfies \emph{priority monotonicity} if, for every agent $i$, object $c$, reports $R_i$ and $R_i'$, and profile $R_{-i}$, if $c\in A_i(R_i)$, $c\in A_i(R_i')$, and
\[
U_i(c;R_i')\subseteq U_i(c;R_i),
\]
then, for every $j\neq i$,
\[
i\,\Gamma_c(R_i,R_{-i})\,j
\quad \mbox{
implies }  \quad
i\,\Gamma_c(R_i',R_{-i})\,j.
\]

\medskip

Thus, reducing the set of objects ranked above $c$ preserves $i$'s priority over every agent whom she previously outranked at $c$. 
Priority monotonicity allows an agent's priority at $c$ to depend on which objects she ranks above $c$, not merely on their number. For example, a score rule may deduct a fixed nonnegative amount for each object ranked above $c$, with different deductions for different objects. Removing objects from the head then weakly raises the score, 
but two reports placing $c$ at the same rank with different heads may yield different scores and hence different priorities.

\defhead{Structural Class: Stable Monotone Priority}
A mechanism $\varphi$ is a \textit{Stable Monotone Priority mechanism} if there is a report-induced priority rule $\Gamma$ satisfying independence and priority monotonicity such that, for every profile $R$, $\varphi(R)$ is stable at $R$ with respect to the priority profile $\Gamma(R)$.

\begin{theorem}[Stable Monotone Priority mechanisms]\label{thm:smp}
Every Stable Monotone Priority mechanism satisfies SA.
\end{theorem}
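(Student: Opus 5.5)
The plan mirrors the proof of Proposition~\ref{prop:stable}, with one extra step: we must check that the report-induced priorities do not move against agent $i$ when she switches to the singleton report.

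Fix an agent $i$, a profile $R_{-i}$, and a report $\widehat R_i$ with $\varphi_i(\widehat R_i,R_{-i})=c$. Write $\widehat R=(\widehat R_i,R_{-i})$, $R'=((c),R_{-i})$ and $\mu=\varphi(\widehat R)$. Stability requires individual rationality, so $c\in A_i(\widehat R_i)$. Also $c\in A_i((c))$, and
\[
U_i(c;(c))=\emptyset\subseteq U_i(c;\widehat R_i).
\]
Priority monotonicity therefore gives the following: for every $j\neq i$, if $i\,\Gamma_c(\widehat R)\,j$, then $i\,\Gamma_c(R')\,j$. Independence gives a second fact: for every object $d$, the relative order of any two agents other than $i$ under $\Gamma_d(R')$ coincides with their order under $\Gamma_d(\widehat R)$. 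At objects $d\neq c$, agent $i$'s own position may change arbitrarily.

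The key step is to show that $\mu$ is stable at $R'$ with respect to $\Gamma(R')$. Individual rationality holds because $\mu_i=c$ is acceptable under $(c)$ and the other agents' reports are unchanged. Agent $i$ is in no blocking pair, since she holds her only acceptable object. Now suppose $(j,d)$ blocks $\mu$ with $j\neq i$, so that $d\,P_j\,\mu_j$ under the unchanged $R_j$. If $d$ has a vacant seat, then $(j,d)$ already blocked $\mu$ at $\widehat R$. Otherwise $j$ outranks some $h\in\mu(d)$ under $\Gamma_d(R')$, and we split into cases on $h$.
\begin{itemize}
\item If $h\neq i$, independence transfers the comparison back to $\Gamma_d(\widehat R)$, contradicting stability at $\widehat R$.
\item If $h=i$, then $d=c$. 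Suppose $j\,\Gamma_c(R')\,i$ but $i\,\Gamma_c(\widehat R)\,j$. This contradicts the contrapositive of monotonicity. Hence $j\,\Gamma_c(\widehat R)\,i$, and $(j,c)$ blocked $\mu$ at $\widehat R$, again a contradiction.
\end{itemize}
Thus $\mu$ is stable at $R'$ under $\Gamma(R')$, and in it $i$ is matched.

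Now $\varphi(R')$ is some stable matching at $R'$ with respect to the fixed priority profile $\Gamma(R')$. The rural hospitals theorem applies to that fixed profile and says every stable matching there matches $i$. Individual rationality under the report $(c)$ then forces $\varphi_i(R')=c$, which is SA.

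The main obstacle is the case $h=i$. It is the only place where $i$'s priority enters, and it is exactly where monotonicity is needed; the direction of the inclusion of upper contour sets must be checked carefully.

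A related subtlety concerns objects $d\neq c$. There $i$'s priority may fall arbitrarily when she switches to $(c)$. This is harmless, because any blocking pair at such a $d$ would need a displaced $h\in\mu(d)$, and $h\neq i$ since $\mu_i=c$. That case is covered by independence alone.
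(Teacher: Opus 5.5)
Your proof is correct and follows essentially the same route as the paper. You show that $\mu$ remains stable at $((c),R_{-i})$ with respect to $\Gamma((c),R_{-i})$, using independence for comparisons not involving $i$ and priority monotonicity (via $U_i(c;(c))=\emptyset$) for comparisons against $i$ at $c$, and then conclude by the rural hospitals theorem and individual rationality; splitting cases on the displaced agent $h$ rather than on whether $d=c$ is only a cosmetic difference.
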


\begin{proof} Let $\varphi$ be a Stable Monotone Priority mechanism with report-induced priority rule $\Gamma$.
Fix $i$, $R_{-i}$, an object $c\in C$, and a report $\widehat R_i$
such that $\varphi_i(\widehat R_i,R_{-i})=c$.
Write $\widehat R=(\widehat R_i,R_{-i})$, $R'=((c),R_{-i})$, and $\mu=\varphi(\widehat R)$. By definition, $\mu$ is stable at $\widehat R$ with respect to $\Gamma(\widehat R)$ and assigns $c$ to $i$.

We first show that $\mu$ remains stable at $R'$ with respect to $\Gamma(R')$. Individual rationality is preserved because $i$ remains assigned to $c$, her only acceptable object. Moreover, a new blocking pair cannot involve $i$ since she obtains her first choice. Suppose there is a blocking pair $(j,d)$ with $j\neq i$. Since $j$'s preference is unchanged, if $|\mu(d)|<q_d$, then $(j,d)$ would already have been a blocking pair for $\mu$ at $\widehat R$. Thus $|\mu(d)|=q_d$.

Independence implies that every priority comparison between $j$ and an agent $h\in\mu(d)\setminus \{i\}$ is unchanged between $\Gamma_d(\widehat R)$ and $\Gamma_d(R')$. If $d\neq c$, then $i\notin\mu(d)$, so every relevant comparison is unchanged and $(j,d)$ would already have blocked $\mu$ at $\widehat R$. If $d=c$, given
$U_i(c;(c))=\emptyset\subseteq U_i(c;\widehat R_i)$,
priority monotonicity implies that if
$i\,\Gamma_c(\widehat R)\,j$ then 
$i\,\Gamma_c(R')\,j$. 
By independence, $(j,c)$ could be a new blocking pair only if
$j\,\Gamma_c(R')\,i$. Then, by priority monotonicity,  $j\,\Gamma_c(\widehat R)\,i$.
Hence $(j,c)$ would already have blocked $\mu$ at $\widehat R$
with respect to $\Gamma(\widehat R)$.
In every case we obtain a contradiction to the stability of $\mu$ at $\widehat R$ with respect to $\Gamma(\widehat R)$. Thus, $\mu$ remains stable at $R'$ with respect to $\Gamma(R')$.

This means that there is a matching that is stable at $R'$ with respect to $\Gamma(R')$ and matches agent $i$. Then, by the rural hospitals theorem, every stable matching at $R'$ with respect to $\Gamma(R')$ matches $i$, including $\varphi(R')$. 
Given the singleton report, individual rationality implies that
$\varphi_i(R')=\varphi_i((c),R_{-i})=c$.
Thus, Stable Monotone Priority mechanisms satisfy SA.
\end{proof}

Fixed-priority stable mechanisms are immediate members of this structural class: for every profile $R$ and each object $c $, let $\Gamma_c(R)=\, \succ_c$. Both independence and priority monotonicity hold trivially, so Proposition~\ref{prop:stable} is recovered as a special case of Theorem~\ref{thm:smp}.

For Stable Monotone Priority mechanisms, no restriction is imposed on how $\varphi$ selects among stable matchings. Thus, by Theorem~\ref{thm:smp}, an arbitrary stable selection is sufficient for SA. However, it does not guarantee TLI or TRI. To see this, consider three agents and three unit-capacity objects,
with reports $R_1=(a,b,c)$, $R_2=(b,c,a)$, and $R_3=(c,a,b)$,
and fixed priorities $\succ_a \, = (2,3,1)$, $\succ_b \, = (3, 1, 2)$,
and $\succ_c \,  = (1, 2, 3)$. Let a
mechanism select $(1\asgn b, 2 \asgn c, 3 \asgn a)$ at this profile, $(1\asgn c, 2 \asgn a, 3 \asgn b)$ after agent~1
lifts $b$ to the top, and $(1\asgn a, 2 \asgn b, 3 \asgn c)$ after she truncates her report below $b$ at the original profile.
At all other profiles, let it select DA. This mechanism is stable
and hence satisfies SA by Proposition~\ref{prop:stable}, but violates
both TLI and TRI.

We show next that TLI and TRI follow for either of the two extremal
stable selections of Stable Monotone Priority mechanisms. This result
then allows us to deduce both invariances for the primary mechanisms
and families in the Stable Monotone Priority class.

\begin{proposition}[Extremal stable selection]\label{prop:extremal-selection}
Let $\varphi$ be a Stable Monotone Priority mechanism with report-induced priority rule $\Gamma$. If $\varphi$ selects the agent-optimal stable matching at every profile $R$ with respect to $\Gamma(R)$, or the object-optimal stable matching at every profile $R$ with respect to $\Gamma(R)$, then $\varphi$ satisfies TLI and TRI.
\end{proposition}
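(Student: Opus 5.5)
The plan is to show, for each of the two report modifications, that the original matching stays stable after the modification. Then I would use the rural hospitals theorem together with the extremal property of the selection to pin down $i$'s assignment.

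Fix $i$, $R_{-i}$, $\widehat R_i$ with $\mu=\varphi(\widehat R)$ and $\mu_i=c$, where $\widehat R=(\widehat R_i,R_{-i})$. Let $R'_i$ be either $\widehat R_i\lift{c}$ or $\widehat R_i\trunc{c}$, and let $R'=(R'_i,R_{-i})$. If $c$ is unacceptable in $\widehat R_i$, the case is degenerate and individual rationality handles it.

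\emph{Step 1: $\mu$ is stable at $R'$ with respect to $\Gamma(R')$.} In both cases $U_i(c;R'_i)\subseteq U_i(c;\widehat R_i)$. For the lift this set is empty; for the truncation it is unchanged. Blocking pairs $(j,d)$ with $j\neq i$ are then excluded exactly as in the proof of Theorem~\ref{thm:smp}. Independence fixes all comparisons not involving $i$, and the only comparisons involving $i$ that matter are at $c$, where priority monotonicity transfers them.

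Blocking pairs involving $i$ need separate treatment. Under the lift, $c$ is $i$'s top choice, so none exist. Under the truncation, an object $d$ with $d\,P'_i\,c$ satisfies $d\,\widehat P_i\,c$ and $U_i(d;R'_i)=U_i(d;\widehat R_i)$. Priority monotonicity applied in both directions gives identical comparisons of $i$ at $d$, so $(i,d)$ would already have blocked $\mu$ at $\widehat R$. By the rural hospitals theorem, $i$ is matched in every $\Gamma(R')$-stable matching at $R'$.

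\emph{Step 2: agent-optimal selection.} Let $\nu=\varphi(R')$. Agent-optimality at $R'$ gives $\nu_i\,R'_i\,c$. Under the lift this forces $\nu_i=c$.

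Under the truncation, suppose $\nu_i=d\,\widehat P_i\,c$. I would run the Step~1 argument in reverse to show $\nu$ is stable at $\widehat R$ with respect to $\Gamma(\widehat R)$.
\begin{itemize}
\item The heads of $d$ coincide in $R'_i$ and $\widehat R_i$, so priority comparisons of $i$ at $d$ are unchanged.
\item Tail objects cannot yield blocking pairs with $i$, since $i$ prefers $d$ to them.
\end{itemize}
Then $\nu$ is $\Gamma(\widehat R)$-stable and gives $i$ something strictly better than $\mu_i$, contradicting agent-optimality of $\mu$. Hence $\nu_i=c$.

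\emph{Step 3: object-optimal selection (main obstacle).} Here $\nu=\varphi(R')$ is agent-pessimal, so Step~1 only yields $c\,R'_i\,\nu_i$. To conclude $\nu_i=c$, I need to transfer $\nu$ back to a $\Gamma(\widehat R)$-stable matching and invoke agent-pessimality of $\mu$. For the truncation this works as in Step~2, since comparisons of $i$ above $c$ are identical. Moreover, $\nu_i$ cannot be a tail object, because $i$ is matched and only objects weakly above $c$ are acceptable in $R'_i$.

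The lift is the delicate case. If $\nu_i=e\neq c$, then the heads of objects $d$ with $d\,\widehat P_i\,e$ may gain $c$ under the lift. Priority monotonicity then transfers comparisons in the wrong direction for ruling out blocking pairs $(i,d)$ at $\widehat R$.

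My first attempt would be to route through the singleton profile $((c),R_{-i})$. By Theorem~\ref{thm:smp}, every $\Gamma$-stable matching there gives $i$ the object $c$, and $U_i(c;(c))=\emptyset$ is minimal. I would compare the object-optimal matchings at $R'$ and at $((c),R_{-i})$ through the object side's stable-matching lattice. Independence keeps all other agents' comparisons fixed, so the objects' welfare ordering should be controllable. The aim is to show that the object-optimal matching at $R'$ is also stable at the singleton profile, which would force $\nu_i=c$.

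If that fails, I would instead argue via the object-proposing deferred acceptance algorithm under $\Gamma(R')$. The idea is to track that $c$ never rejects $i$, because $i$ holds top-monotone priority at $c$, and that $i$ accepts $c$ immediately as her top choice.
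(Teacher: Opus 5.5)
\paragraph{Verdict.} Your Step~1, both agent-optimal cases, and the object-optimal truncation case are fine. Your reverse-stability argument for agent-optimal truncation is a valid alternative to the paper's DA-run argument, and for object-optimal truncation, $c\,R'_i\,\nu_i$ together with $i$ being matched already forces $\nu_i=c$.

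\paragraph{The gap.} The genuine gap is object-optimal selection under the top lift. You leave this case open, and neither fallback closes it.
\begin{itemize}
\item Route (1) is circular. If $\nu_i\neq c$, then $\nu$ is not even individually rational at $((c),R_{-i})$, so ``$\nu$ is stable at the singleton profile'' merely restates $\nu_i=c$. It also never uses that $\mu$ is object-optimal at $\widehat R$, and without that hypothesis the claim fails. Take fixed priorities $\succ_c=(j,i)$, $\succ_d=(i,j)$, with reports $(c,d)$ for $i$ and $(d,c)$ for $j$. Agent $i$ obtains $c$ from her singleton report, yet the object-optimal stable matching assigns her $d$.
\item Route (2) misdescribes object-proposing DA. Objects propose and agents reject, so what must be shown is that $c$ ever proposes to $i$. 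The weak rise of $i$ at $c$ does not ensure this. The lift also weakly lowers $i$ at every head object and changes which offers she holds, so the whole run can change.
\end{itemize}

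\paragraph{The missing idea.} Rule out the problematic blocks $(i,d)$ with $d\in U_i(c;\widehat R_i)$ not by priority monotonicity, but by object-optimality of $\nu=\varphi(R')$ played against $\mu$, which your Step~1 shows is stable at $R'$.

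Suppose $\nu_i\neq c$ and $(i,d)$ blocks $\nu$ at $\widehat R$.
\begin{itemize}
\item Since $\nu_i\neq c$, the lift does not change how $i$ compares $d$ with $\nu_i$. So stability of $\nu$ at $R'$ forces $d$ to be full in $\nu$, and the block yields $j\in\nu(d)$ with $i\,\Gamma_d(\widehat R)\,j$.
\item Since $d\,\widehat P_i\,c=\mu_i$, stability of $\mu$ at $\widehat R$ makes $d$ full in $\mu$, with every member of $\mu(d)$ above $i$, hence above $j$, at $\Gamma_d(\widehat R)$.
\item These comparisons do not involve $i$, so independence carries them to $\Gamma_d(R')$. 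Then $\nu(d)$ fails to priority-dominate $\mu(d)$ at $R'$, contradicting object-optimality of $\nu$.
\end{itemize}
All other potential blocks at $\widehat R$ are excluded by running your Step~1 reasoning in reverse, using that $\nu$ does not assign $c$ to $i$. Hence $\nu$ is stable at $\widehat R$.

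To finish, object-optimality of $\mu$ at $\widehat R$ gives that $\mu(c)$ priority-dominates $\nu(c)$ under $\Gamma_c(\widehat R)$. The lift only moves $i\in\mu(c)\setminus\nu(c)$ weakly up at $c$, so this dominance persists under $\Gamma_c(R')$. Object-optimality of $\nu$ at $R'$ gives the reverse dominance. By the rural hospitals theorem the two sets have equal size, and equal-size sets that dominate each other under a strict order coincide. This contradicts $i\in\mu(c)\setminus\nu(c)$.
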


\begin{proof}
See Appendix~\ref{app:proof-extremal-selection}.
\end{proof}

\begin{corollary}[Primary mechanisms and families in the Stable Monotone Priority class]\label{cor:smp-applications}
Object-Proposing DA, IA, PRP mechanisms, and Taiwan Deduction mechanisms satisfy TLI, TRI, and SA.
\end{corollary}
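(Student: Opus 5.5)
The plan is to derive the corollary from Proposition~\ref{prop:extremal-selection} together with Proposition~\ref{prop:decomp}. For each mechanism or family listed, I will exhibit a report-induced priority rule $\Gamma$ that satisfies independence and priority monotonicity. I will then show that the mechanism selects, at every profile $R$, either the agent-optimal or the object-optimal stable matching with respect to $\Gamma(R)$. Proposition~\ref{prop:extremal-selection} then gives TLI and TRI, and Proposition~\ref{prop:decomp} gives SA. (Theorem~\ref{thm:smp} gives SA directly once membership in the class is established.)

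\emph{Object-Proposing DA.} I take $\Gamma_c(R)=\,\succ_c$ for every $R$. Independence and priority monotonicity then hold trivially. Object-Proposing DA selects the object-optimal stable matching with respect to these fixed priorities.

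\emph{Taiwan Deduction mechanisms.} Each agent $i$ has an exogenous score at $c$. This score is reduced by a deduction that weakly increases in the rank $k$ at which $c$ is listed, and ties are broken by a fixed tie-breaker. I define $\Gamma_c(R)$ by ordering agents according to these adjusted scores. Independence holds because the relative order of $j$ and $h$ depends only on their own reports. Priority monotonicity holds because $U_i(c;R_i')\subseteq U_i(c;R_i)$ means the rank of $c$ in $R_i'$ is weakly smaller, so $i$'s deduction weakly falls and her adjusted score weakly rises. She therefore keeps every agent she previously outranked, under a fixed tie-breaking. The mechanism is, by definition, DA (agent-proposing) run on $\Gamma(R)$, so it yields the agent-optimal stable matching. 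The agents' reported ranks of $c$ enter the priorities but not the proposals, and stability of DA's output with respect to $\Gamma(R)$ is the standard Gale--Shapley result applied to the fixed priority profile $\Gamma(R)$.

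\emph{PRP mechanisms, including IA.} Following Ayoade and P\'apai (2023), a PRP mechanism partitions rank positions into blocks, and an agent's priority at $c$ is determined lexicographically: first by the block containing the rank of $c$ in her report (earlier blocks are better), then by $\succ_c$. The mechanism is equivalent to DA on these induced priorities. Moving $c$ weakly up moves it to a weakly earlier block, so $i$'s priority weakly improves, while others' relative positions are untouched. IA is the extreme partition into singletons, and DA is the single-block case.

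The main obstacle is the PRP step. The original PRP definition is procedural (rounds within blocks, with acceptances final between blocks), so I must carefully justify that its outcome coincides with agent-proposing DA under the block-lexicographic priorities. The same equivalence is needed for IA with capacities. The first thing I would try is the standard argument that, under these priorities, an agent applying in a later block can never displace anyone admitted in an earlier block, so the DA rejection chains reproduce the block-by-block procedure exactly.
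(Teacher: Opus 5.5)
Your proposal is correct and follows the paper's proof: it checks independence and priority monotonicity for each family, notes the extremal selection (object-optimal for Object-Proposing DA, agent-optimal for the DA-based families), and then applies Proposition~\ref{prop:extremal-selection} and Proposition~\ref{prop:decomp}. Two minor points. First, the paper's PRP priorities sort first by a report-independent priority-rank class and only then by preference-rank class; your independence and monotonicity arguments carry over unchanged. Second, the paper defines PRP mechanisms directly as DA on these induced priorities and takes IA's membership in both families from the literature, so the procedural equivalence you flag as the main obstacle is not needed here.
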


\begin{proof}

We show that each mechanism is a Stable Monotone Priority mechanism
and uses an extremal stable selection. Then
Proposition~\ref{prop:extremal-selection} provides the invariance
properties, and Proposition~\ref{prop:decomp} implies SA.
Since all these mechanisms belong to the Stable
Monotone Priority class, SA also follows directly by
Theorem~\ref{thm:smp}.

\noindent \textit{Object-Proposing DA.} Object-Proposing DA uses fixed priorities, so independence and priority monotonicity hold trivially. It is a Stable Monotone Priority mechanism and selects the object-optimal stable matching. 

\noindent \textit{PRP mechanisms.} For PRP mechanisms, changing
$i$'s report changes only $i$'s position in each induced priority
order, so independence holds. Reducing the head of $c$ weakly
improves the preference-rank class in which $i$ places $c$.
Thus priority monotonicity holds.

\noindent \textit{Taiwan Deduction mechanisms.} For Taiwan
Deduction mechanisms, changing $i$'s report leaves every other
agent's adjusted score unchanged, so independence holds.
Reducing the head of $c$ weakly reduces the deduction applied
to $c$ and hence weakly increases $i$'s adjusted score at $c$.
Thus priority monotonicity holds.

Both the PRP and Taiwan Deduction mechanisms apply DA to their induced priorities and select the agent-optimal stable matching. Finally, note that IA is a member of both families.
\end{proof}

\subsection{Fixed Rank Welfare mechanisms}\label{sec:fixed-outside}

Another source of SA is rank-welfare optimization when the rank of the outside option is
fixed independently of the report. The resulting Fixed Rank Welfare class, which encompasses various assignment-rank objectives, satisfies TLI and
TRI, and hence SA, by Proposition~\ref{prop:decomp}.

For agent $i$, preference $R_i$, and object $c\in A_i(R_i)$, let $\rk_i(c;R_i)$ denote the position of $c$ in $R_i$. In an individually rational matching $\mu$, agent $i$'s \emph{fixed assignment
rank} $\rho_i^F(\mu_i;R_i)$ is $\rk_i(\mu_i;R_i)$ if she is matched, and $m+1$ if she
receives the outside option.  This is the \textit{fixed} convention for treating the outside option, so shortening a report does not change the rank
assigned to it. 
Write $\rho^F(\mu;R)=(\rho_i^F(\mu_i;R_i))_{i\in I}$ for
the fixed assignment rank list at report profile $R$.

A \emph{rank list} is a vector $x\in\{1,\ldots,m+1\}^I$.
For any such list, let $x^\rightarrow$ denote its coordinates arranged
in weakly increasing order, so $x_1^\rightarrow \le x_2^\rightarrow \le \cdots \le x^\rightarrow_{|I|} $. Keeping in mind that lower ranks are preferred, we will use
the following notion of dominance for two rank lists that are unambiguously comparable.

\defhead{Rank dominance}
For two rank lists $x$ and $y$, say that $x$ \emph{rank-dominates} $y$, written $x\succeq_{\mathrm{rk}} y$, if
\[
x^\rightarrow_\ell\le y^\rightarrow_\ell
\qquad\text{for every }\ell=1,\ldots,|I|.
\]

Rank dominance does not compare every pair of rank lists.
We define next complete welfare orderings of rank lists
that respect rank dominance.

\defhead{Rank monotonicity}
A complete and transitive welfare ordering $\unrhd$ of rank lists is
\emph{rank-monotonic} if $x\succeq_{\mathrm{rk}}y$ implies $x\unrhd y$.

\defhead{Strictness requirement}
A welfare ordering $\unrhd$ satisfies the \emph{strictness requirement}
if, for all rank lists $x$ and $y$, if $x\succeq_{\mathrm{rk}}y$ and
$x$ has fewer entries equal to $m+1$ than $y$ then  $\unrhd$
ranks $x$ strictly above $y$. 

\medskip

This means that whenever \(x\succeq_{\mathrm{rk}}y\), the welfare ordering weakly prefers \(x\) to \(y\), with strict preference if \(x\) has fewer entries equal to \(m+1\) than \(y\).
Under the fixed assignment rank
convention, entries equal to $m+1$ correspond exactly to unmatched
agents. Thus, a rank-dominating list that leaves fewer agents
unmatched is strictly preferred by $\unrhd$.

We will say that an individually rational matching $\mu$ is
\textit{welfare-maximal} at profile $R$ if, for every individually
rational matching $\nu$ at $R$,
$\rho^F(\mu;R)\unrhd\rho^F(\nu;R)$. Fix a strict report-independent
ordering $\sigma$ of feasible matchings to break welfare ties.

\defhead{Structural Class: Fixed Rank Welfare}
A mechanism $\varphi$ is a \textit{Fixed Rank Welfare mechanism} if
there exists a rank-monotonic welfare ordering $\unrhd$ satisfying
the strictness requirement such that, at every profile $R$,
$\varphi(R)$ is the $\sigma$-highest among the individually rational
matchings $\mu$ whose fixed assignment-rank list
$\rho^F(\mu;R)$ is maximal according to $\unrhd$.

\medskip

Thus, at every profile, the selected matching is welfare-maximal when
welfare is evaluated using fixed assignment ranks, and it is the
$\sigma$-highest among all welfare-maximal matchings.

\begin{theorem}[Fixed Rank Welfare]\label{thm:fixed-outside}
Every Fixed Rank Welfare mechanism satisfies TLI, TRI, and SA.
\end{theorem}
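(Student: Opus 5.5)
The plan is to verify TLI and TRI directly from the optimization definition, by comparing fixed assignment-rank lists before and after each report modification, and then to obtain SA from Proposition~\ref{prop:decomp}. Fix a Fixed Rank Welfare mechanism $\varphi$ with ordering $\unrhd$ and tie-breaker $\sigma$. Fix also $i$, $R_{-i}$, and $\widehat R_i$ with $\mu=\varphi(\widehat R_i,R_{-i})$ and $\mu_i=c$. Write $\widehat R=(\widehat R_i,R_{-i})$. Since $\varphi$ selects individually rational matchings, $c\in A_i(\widehat R_i)$. Hence both modifications keep $c$ acceptable, and $\mu$ remains individually rational at the modified profile.

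\emph{TRI.} Let $R'=(\widehat R_i\trunc{c},R_{-i})$. Every matching that is individually rational at $R'$ is individually rational at $\widehat R$, because the acceptable set only shrinks. For any such matching $\nu$, $\nu_i$ is either $\emptyset$ or an object ranked weakly above $c$. Truncation does not move these objects, and the outside option keeps the fixed rank $m+1$. So $\rho^F(\nu;R')=\rho^F(\nu;\widehat R)$ for every $\nu$ in the smaller feasible set, including $\mu$. The set of individually rational matchings at $R'$ is therefore a subset of the one at $\widehat R$ with identical rank lists, and it contains $\mu$. Since $\mu$ is $\unrhd$-maximal and $\sigma$-highest among the maximal matchings in the larger set, it remains so in the subset. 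Hence $\varphi(R')=\mu$. Here the fixed outside-option rank is exactly what is used: under a report-dependent rank for $\emptyset$, truncation would change the ranks of unmatched agents.

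\emph{TLI.} Let $R'=(\widehat R_i\lift{c},R_{-i})$. The acceptable set is unchanged, so both profiles have the same individually rational matchings, and the ranks of agents other than $i$ are unchanged. Under the lift, $c$ moves to rank $1$, objects formerly above $c$ drop by one position, objects below $c$ keep their position, and $\emptyset$ stays at $m+1$. Consequently $\rho^F(\mu;R')$ coordinatewise dominates $\rho^F(\mu;\widehat R)$. For every $\nu$ with $\nu_i\neq c$, $\rho^F(\nu;\widehat R)$ coordinatewise dominates $\rho^F(\nu;R')$. Coordinatewise dominance implies dominance of the sorted vectors, hence $\succeq_{\mathrm{rk}}$, and rank monotonicity then gives the chain
\[
\rho^F(\mu;R')\unrhd\rho^F(\mu;\widehat R)\unrhd\rho^F(\nu;\widehat R)\unrhd\rho^F(\nu;R').
\]
So $\mu$ is welfare-maximal at $R'$. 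It remains to show that $\varphi_i(R')=c$; this tie-breaking step is the one delicate point. Suppose instead that $\varphi(R')=\nu$ with $\nu_i\neq c$. Then $\nu$ is maximal at $R'$, so $\rho^F(\nu;R')\unrhd\rho^F(\mu;R')$, and the chain collapses to indifference throughout. In particular $\rho^F(\nu;\widehat R)\unrhd\rho^F(\mu;\widehat R)$, so $\nu$ is also maximal at $\widehat R$. Because $\nu$ was selected at $R'$ while $\mu$ was among the candidates, $\nu$ is $\sigma$-higher than $\mu$. But then $\nu$ is a maximal matching at $\widehat R$ that is $\sigma$-higher than $\mu$, contradicting $\varphi(\widehat R)=\mu$. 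The argument relies on $\sigma$ being report-independent. (The strictness requirement does not appear to be needed for these two invariances.)

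With TLI and TRI established, Proposition~\ref{prop:decomp} yields SA.
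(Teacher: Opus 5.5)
Your proof is correct and follows the paper's argument essentially step by step. It uses the same subset-with-identical-rank-lists argument for TRI, the same four-term chain that collapses to indifference together with the report-independent $\sigma$ for TLI, and Proposition~\ref{prop:decomp} for SA; you are also right that the strictness requirement is not used.

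One small correction: the unconditional sentence ``So $\mu$ is welfare-maximal at $R'$'' does not follow from your chain, which covers only matchings $\nu$ with $\nu_i\neq c$. Under a Fixed Rawlsian mechanism another matching that also gives $c$ to $i$ can strictly overtake $\mu$ after the lift, for instance when $i$'s rank was the unique worst rank at $\widehat R$ and $\sigma$ broke the resulting tie in favor of $\mu$. You only need this maximality inside the contradiction, where it follows from the collapse of the chain, which is exactly how the paper derives it.
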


\begin{proof}
Let $\varphi$ be a Fixed Rank Welfare mechanism with welfare ordering $\unrhd$. 
Fix agent $i$, $R_{-i}$, an object $c\in C$, and a report
$\widehat R_i$ such that $\varphi_i(\widehat R_i,R_{-i})=c$.
Write $\widehat R=(\widehat R_i,R_{-i})$ and $\mu=\varphi(\widehat R)$. 
By definition, $\mu$ is the $\sigma$-highest individually rational welfare-maximal matching at $\widehat R$ according to \(\unrhd\) and assigns $c$ to $i$. In particular, $c\in A_i(\widehat R_i)$.

\emph{TLI.} Let $R'=(\widehat R_i^{\uparrow c},R_{-i})$ and $\nu=\varphi(R')$. Matching $\mu$ remains individually rational at $R'$. Lifting $c$ weakly improves the fixed assignment rank of $i$ in $\mu$ and leaves every other agent's fixed assignment rank unchanged. Hence $\rho^F(\mu;R')\succeq_{\mathrm{rk}}\rho^F(\mu;\widehat R)$, and thus rank monotonicity implies $\rho^F(\mu;R')\unrhd\rho^F(\mu;\widehat R)$.

Suppose, toward a contradiction, that $\nu_i\neq c$. Since the lift does not change acceptability, $\nu$ is also individually rational at $\widehat R$.
If $\nu_i$ is an object ranked above $c$ in $\widehat R_i$,
lifting $c$ increases its assignment rank by one.
If $\nu_i$ is ranked below $c$  in $\widehat R_i$, its assignment rank is unchanged, while if $\nu_i=\emptyset$, its fixed assignment rank remains $m+1$. All other agents' ranks are unchanged. Hence $\rho^F(\nu;\widehat R)\succeq_{\mathrm{rk}}\rho^F(\nu;R')$, and thus rank monotonicity implies $\rho^F(\nu;\widehat R)\unrhd\rho^F(\nu;R')$. Since $\mu$ is welfare-maximal at $\widehat R$ according to $\unrhd$, we have $\rho^F(\mu;\widehat R)
\unrhd\rho^F(\nu;\widehat R)$. Putting these together, 
\[
\rho^F(\mu;R')\unrhd\rho^F(\mu;\widehat R)
\unrhd\rho^F(\nu;\widehat R)\unrhd\rho^F(\nu;R').
\]
But $\nu$ is welfare-maximal  at $R'$ and $\mu$ is individually rational at the same profile, so $\rho^F(\nu;R')\unrhd\rho^F(\mu;R')$. Hence all the preceding comparisons are indifferences. In particular, $\nu$ is welfare-maximal at $\widehat R$, and $\mu$ is welfare-maximal at $R'$. Since $\mu\neq\nu$ and $\mu$ is selected at $\widehat R$, the ordering $\sigma$ ranks $\mu$ above $\nu$. Moreover, given that $\sigma$ is report-independent, $\mu$ must be selected over $\nu$ at $R'$, a contradiction. Therefore, $\varphi_i(\widehat R_i^{\uparrow c},R_{-i})=c$, and $\varphi$ satisfies TLI.

\emph{TRI.} Let $R''=(\widehat R_i^{\le c},R_{-i})$. Matching $\mu$ remains individually rational at $R''$. Every individually rational matching at $R''$ assigns $i$ either $c$, an object ranked above $c$ in $\widehat R_i$, or the outside option. Truncating below $c$ leaves the assignment rank of $c$ and of every object ranked above $c$ unchanged, while the outside option retains its fixed rank $m+1$. Hence every matching that remains individually rational has the same fixed assignment-rank list at $R''$ as at $\widehat R$, while matchings assigning $i$ a deleted object are no longer individually rational. Thus, $\mu$ remains welfare-maximal at $R''$. Moreover, any matching that ties $\mu$ at $R''$ also tied $\mu$ at $\widehat R$. Since $\sigma$ is report-independent and selected $\mu$ over every such matching at $\widehat R$, it also selects $\mu$ at $R''$. Hence, $\varphi_i(\widehat R_i^{\le c},R_{-i})=c$. Therefore, $\varphi$ satisfies TRI.

Proposition~\ref{prop:decomp} therefore implies that $\varphi$
satisfies SA.
\end{proof}

Three of the primary families of mechanisms belong to the class of Fixed Rank Welfare mechanisms. 
Fixed Rawlsian mechanisms first minimize the worst fixed assignment rank and then the
number of agents receiving that rank. Fixed Additive Rank Scoring mechanisms minimize
the sum of a strictly increasing function of the fixed assignment ranks. 
Fixed Rank-Maximal mechanisms lexicographically maximize the numbers
of agents receiving fixed assignment ranks $1,2,\ldots,m+1$, in that order. Each family uses a report-independent tie-break.

\begin{corollary}[Primary mechanisms and families in the Fixed Rank Welfare
class]\label{cor:fixed-rank-welfare-applications}
Fixed Rawlsian mechanisms, Fixed Additive Rank Scoring mechanisms, and Fixed
Rank-Maximal mechanisms satisfy TLI, TRI, and SA.
\end{corollary}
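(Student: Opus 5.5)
The plan is to show that each of the three families is a Fixed Rank Welfare mechanism and then invoke Theorem~\ref{thm:fixed-outside}. For each family we must identify a complete, transitive welfare ordering $\unrhd$ of rank lists in $\{1,\ldots,m+1\}^I$. We then check that it is rank-monotonic and satisfies the strictness requirement. Finally, we check that the family's report-independent tie-break plays the role of $\sigma$.

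Each objective is symmetric, so it depends only on the sorted list $x^\rightarrow$. This lets us work entirely with sorted lists. Suppose $x\succeq_{\mathrm{rk}}y$, meaning $x^\rightarrow_\ell\le y^\rightarrow_\ell$ for every $\ell$. The entries equal to $m+1$ occupy the top positions of the sorted list. So if $x$ has fewer such entries than $y$, some coordinate satisfies $x^\rightarrow_\ell<y^\rightarrow_\ell=m+1$, and hence $x\ne y$ coordinatewise after sorting. The key step in every case is therefore to show that the objective weakly prefers $x$ whenever $x\succeq_{\mathrm{rk}}y$, and strictly prefers it whenever $x^\rightarrow\ne y^\rightarrow$. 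This strict version implies the strictness requirement.

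\emph{Additive Rank Scoring.} The objective is $\sum_\ell f(x^\rightarrow_\ell)$ with $f$ strictly increasing, minimized. Coordinatewise domination of the sorted lists gives a weakly smaller sum, and the sum is strictly smaller if any coordinate is strictly smaller.

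\emph{Rank-Maximal.} Compare the vectors $(n_1(x),\ldots,n_{m+1}(x))$ lexicographically, where $n_k$ counts the agents assigned rank $k$. From $x^\rightarrow\le y^\rightarrow$ we get the cumulative inequalities $N_k(x)\ge N_k(y)$ for every $k$, where $N_k=n_1+\cdots+n_k$. Let $k$ be the first index at which the cumulative counts differ. Then $n_k(x)>n_k(y)$ while all earlier counts agree, so $x$ wins lexicographically. If no cumulative count differs, then $x^\rightarrow=y^\rightarrow$.

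\emph{Rawlsian.} The ordering first minimizes the largest rank $x^\rightarrow_{|I|}$ and then the number of agents receiving that rank. Domination gives $x^\rightarrow_{|I|}\le y^\rightarrow_{|I|}$. If the two maxima are equal, domination also gives that $x$ has weakly fewer entries equal to that maximum. This settles rank monotonicity.

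The main obstacle is the strictness requirement for the Rawlsian family. If $x$ has fewer unmatched entries than $y$ but both lists share the same maximum rank $m+1$, then $x$ has strictly fewer entries at that maximum and is strictly preferred. If instead $x$'s maximum is below $m+1$ while $y$'s equals $m+1$, then $x$ is strictly better at the first step. Any further lexicographic refinement the family uses is handled by the tie-break and does not affect these strict comparisons.

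With all three orderings verified as rank-monotonic and satisfying the strictness requirement, each family is a Fixed Rank Welfare mechanism. Theorem~\ref{thm:fixed-outside} then yields TLI, TRI, and SA.
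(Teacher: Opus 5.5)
Your proposal is correct and follows essentially the same route as the paper's proof. Both verify rank monotonicity and the strictness requirement for each ordering and then apply Theorem~\ref{thm:fixed-outside}: a sum of a strictly increasing score over the sorted lists, a first-differing-count argument for Rank-Maximal, and the maximum with count-at-maximum plus the same two-case strictness argument for Rawlsian. One wrinkle is your opening claim that strict preference holds ``in every case'' whenever $x^\rightarrow\neq y^\rightarrow$, which is false for the Rawlsian ordering (compare $(1,3)$ with $(2,3)$). You rightly handle that family separately and never rely on the claim there.
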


\begin{proof}
Fix rank lists $x$ and $y$ with $x\succeq_{\mathrm{rk}}y$, so
$x^\rightarrow_\ell\le y^\rightarrow_\ell$ for every
$\ell=1,\ldots,|I|$.
For each of the three primary families, we show that its welfare
ordering weakly prefers $x$ to $y$, which gives rank monotonicity,
and strictly prefers $x$ to $y$ whenever $x$ has fewer entries equal
to $m+1$ than $y$, which verifies the strictness requirement.

\smallskip

\noindent \textit{Fixed Rawlsian mechanisms.}
The inequalities $x^\rightarrow_\ell\le y^\rightarrow_\ell$ imply
$\max_{i\in I}x_i\le\max_{i\in I}y_i$. If the latter inequality is
strict, the Rawlsian ordering strictly prefers $x$ to $y$.
If $\max_{i\in I}x_i=\max_{i\in I}y_i=r$, every position $\ell$
with $x^\rightarrow_\ell=r$ also satisfies $y^\rightarrow_\ell=r$, 
since $r = x^\rightarrow_\ell\le y^\rightarrow_\ell \leq \max_{i\in I}y_i= r$,
so $x$ has no more entries equal to $r$ than $y$. Hence $x\unrhd y$.
Now suppose that $x$ has fewer entries equal to $m+1$ than $y$.
If $x$ has no such entries, its largest entry is at most $m$,
whereas the largest entry of $y$ is $m+1$. Otherwise, both lists
have largest entry $m+1$, but $x$ has fewer entries at that rank.
In either case, the Rawlsian ordering strictly prefers $x$ to $y$.

\smallskip

\noindent \textit{Fixed Additive Rank Scoring mechanisms.}
Since the scoring function $g$ is strictly increasing,
$g(x^\rightarrow_\ell)\le g(y^\rightarrow_\ell)$ for every $\ell$.
Summing these inequalities and noting that sorting does not change
the total score gives
$\sum_{i\in I}g(x_i)\le\sum_{i\in I}g(y_i)$.
Hence $x\unrhd y$. If $x$ has fewer entries equal to $m+1$ than $y$,
at least one of the inequalities $x^\rightarrow_\ell\le y^\rightarrow_\ell$
is strict, so the total
score of $x$ is strictly lower.

\smallskip

\noindent \textit{Fixed Rank-Maximal mechanisms.}
For every rank $r$, the inequalities $x^\rightarrow_\ell\le y^\rightarrow_\ell$ for every $\ell=1,\ldots,|I|$ imply
$|\{i\in I:x_i\le r\}|\ge|\{i\in I:y_i\le r\}|$.
If the rank-count lists differ, let $r$ be the first rank at which
their counts differ. The counts at all smaller ranks coincide, so
$x$ must have more entries equal to $r$ than $y$.
Thus $x\unrhd y$, with strict preference whenever the rank-count
lists differ. 
If \(x\) has fewer entries equal to \(m+1\) than \(y\), their rank counts must differ at some rank at most \(m\). The preceding argument shows that the first such difference favors \(x\), so the welfare comparison is strict.

\smallskip

Thus, all three welfare orderings are rank-monotonic, and the three
primary families are Fixed Rank Welfare mechanisms.
By Theorem~\ref{thm:fixed-outside}, they satisfy TLI, TRI, and SA.
\end{proof}

\section{Report width and boundedness}\label{sec:width}

Singleton-attainability extends naturally to a measure of reporting burden, which we call \emph{report width}. To determine report width across market sizes, we make the market explicit. From now on, $\varphi$ denotes a mechanism applied across all finite markets. A market $E$ specifies its agent and object sets, capacities and, when relevant,  priorities. Let $\mathcal E_m$ denote the class of such markets with exactly $m$ objects, and write $I(E)$ for the agent set of $E$. When a market $E$ is fixed, we suppress it from the notation. A mechanism property is understood to hold for $\varphi$ if it holds in every such market.

Fix a mechanism $\varphi$,  $\, m\ge1$, a market $E\in\mathcal E_m$, an agent $i\in I(E)$, a profile $R_{-i}$ in $E$, and an attainable object $c\in O_i(R_{-i})$. 
The number of objects declared acceptable in report $\widehat R_i$ is $|A_i(\widehat R_i)|$. The \emph{minimum report size} needed to attain $c$ at $R_{-i}$  under mechanism $\varphi$ is
$$
k_i^\varphi(c;R_{-i})
=
\min\bigl\{|A_i(\widehat R_i)|:\varphi_i(\widehat R_i,R_{-i})=c\bigr\}.
$$
Thus $k_i^\varphi(c;R_{-i})$ is the length of a shortest report with which $i$ can obtain $c$ under $\varphi$, given $R_{-i}$.

\defhead{Report width}
For a mechanism $\varphi$ and markets with exactly $m$ objects, the
\emph{report width} $k_m(\varphi)$ is the largest minimum report size
that can arise, normalized to be at least one.\footnote{To simplify the exposition, the
normalization ensures that report-width bounds are positive. It affects only the degenerate cases in which no object is
attainable in any market in $\mathcal E_m$, or every attainable object
can already be obtained from the empty report. The normalization convention only imposes that these cases be recorded as
width one rather than zero.}
More precisely, it is the maximum of $1$ and all values of
$k_i^\varphi(c;R_{-i})$ across all markets $E\in\mathcal E_m$, agents
$i\in I(E)$, profiles $R_{-i}$ in $E$, and attainable objects
$c\in O_i(R_{-i})$.

The \emph{report width} of mechanism $\varphi$ (or \textit{width}, for short) is
$$
k(\varphi)=\sup_{m\ge1}k_m(\varphi).
$$

Report width measures the support size of a shortest access report, subject to the normalization convention. 
Therefore, $k_m(\varphi)=1$ for all $m \ge 1$ if and only if $k(\varphi)=1$.
Mechanism $\varphi$ has \emph{bounded width} if $k(\varphi)<\infty$ and \emph{unbounded width} otherwise. It has \emph{maximal width} if $k_m(\varphi)=m$ for every $m\ge1$. Width statements for a mechanism family or class apply to every
member.

\medskip

\begin{observation}[Width one and singleton-attainability]
\label{obs:width-one}
Every SA mechanism has report width one. Conversely, every
individually rational mechanism with report width one satisfies SA.
\end{observation}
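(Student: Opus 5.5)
Both directions follow directly from the definitions of minimum report size and report width, so the plan is a short two-part argument.

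For the first claim, fix $m\ge1$, a market $E\in\mathcal E_m$, an agent $i$, a profile $R_{-i}$, and an attainable object $c\in O_i(R_{-i})$. By SA, the singleton report $(c)$ attains $c$, and $|A_i((c))|=1$. Hence $k_i^\varphi(c;R_{-i})\le 1$. Taking the maximum of $1$ and all such values gives $k_m(\varphi)=1$ for every $m$, and therefore $k(\varphi)=1$. Attainable objects whose minimum size is $0$ do not matter, because of the normalization. No individual rationality is needed here.

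For the converse, suppose $\varphi$ is individually rational with $k(\varphi)=1$. Fix $i$, $R_{-i}$ and $c\in O_i(R_{-i})$.
\begin{itemize}
\item Since $k_m(\varphi)\le 1$, we have $k_i^\varphi(c;R_{-i})\le 1$, so some report $\widehat R_i$ with $|A_i(\widehat R_i)|\le 1$ attains $c$.
\item By individual rationality, $\varphi_i(\widehat R_i,R_{-i})=c$ forces $c\in A_i(\widehat R_i)$. This excludes the empty report and forces $A_i(\widehat R_i)=\{c\}$.
\item Only the acceptable set and its order are reported, with the order of unacceptable objects immaterial. So $\widehat R_i$ is the report $(c)$, and $\varphi_i((c),R_{-i})=c$. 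This is exactly SA.
\end{itemize}

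The only delicate point is the role of the normalization and individual rationality. Without individual rationality, an object could be obtained from the empty report, or from a report not listing it. That would give a minimum size of $0$ or a one-element report of a different object, and SA could fail while the normalized width is still one. I will note this to explain why the converse needs the hypothesis. Beyond that, both directions are immediate.
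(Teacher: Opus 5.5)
Your proof is correct and follows the paper's argument essentially step for step. For the first direction, SA gives $k_i^\varphi(c;R_{-i})\le 1$ and the normalization yields $k_m(\varphi)=1$; for the converse, individual rationality rules out the empty report and forces the one-object report to be $(c)$, and your remarks on where individual rationality is and is not needed are accurate.
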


\begin{proof}
Suppose that $\varphi$ satisfies SA. Every attainable object can be
obtained from its singleton report, so for every $m \ge 1, \, k_m(\varphi) \le 1$. Then, by the normalization convention, for every $m\ge1, \, 
k_m(\varphi)=1$, and hence $k(\varphi)=1$.

Conversely, suppose that $\varphi$ is individually rational and
$k(\varphi)=1$. Fix an agent $i$, a profile $R_{-i}$, and an
attainable object $c\in O_i(R_{-i})$. Then
$k_i^\varphi(c;R_{-i})\leq1$, so some report declaring at most one
object acceptable assigns $c$ to $i$. This report cannot be empty,
since individual rationality implies that the empty report assigns
the outside option. It therefore declares exactly one object
acceptable. Since it assigns $c$, individual rationality implies that the report is $(c)$. This proves that $\varphi$ is SA.
\end{proof}

Thus, SA and report width one are equivalent for individually rational
mechanisms. Moreover, every strategyproof mechanism satisfies SA by
Proposition~\ref{prop:sp-inv} and therefore has report width one by
Observation~\ref{obs:width-one}.

For $b\ge1$, let $O_i^b(R_{-i})$ be agent $i$'s \textit{constrained option set} when $i$'s
report is restricted to at most $b$ acceptable objects, given $R_{-i}$. These constrained option sets are nested:
$$
O_i^1(R_{-i})\subseteq O_i^2(R_{-i})\subseteq\cdots\subseteq O_i^m(R_{-i})=O_i(R_{-i}).
$$

By construction, $k_m(\varphi)$ is the smallest positive integer
$b$ such that
$O_i^b(R_{-i})=O_i(R_{-i})$ for every market
$E\in\mathcal E_m$, every agent $i\in I(E)$, and every profile
$R_{-i}$ in $E$.

If $\varphi$ satisfies SA, then
$O_i^b(R_{-i})=O_i(R_{-i})$ for every $b\ge1$, since every attainable
object can be obtained from its singleton report, which remains
admissible. In particular, $O_i^1(R_{-i})=O_i(R_{-i})$.
Thus, every positive list-length constraint preserves
the unrestricted option set. Therefore, every positive list-length-constrained 
version of $\varphi$ remains SA.
However, unlike SA, strategyproofness need not survive a constraint on reports.
Haeringer and Klijn (2009) show that for DA and TTC, listing only
one's first $b$ acceptable objects in their true order need not be a
dominant strategy. A list-length constraint can therefore preserve
every attainable object while creating a strategic choice over which
objects to list.

While every positive list-length constraint preserves the unrestricted
option set for an SA mechanism, the same does not hold for an individually
rational non-SA mechanism. A one-object bound removes an attainable
object from some agent's option set in some market. If the mechanism has
unbounded width, no fixed list-length limit preserves all option sets
across markets: for every bound, there is a market in which some
attainable object requires a longer report.

Report width also governs best responses and equilibrium verification.
A report is \emph{$b$-bounded} if it declares at most $b$ objects
acceptable.

\begin{proposition}[Bounded best responses and equilibrium verification]\label{prop:best-response}
Fix $b\ge1$. For an individually rational mechanism, the following statements are equivalent:
\begin{enumerate}[label=(\roman*),nosep]
\item every attainable object can be attained by a $b$-bounded report;
\item for every agent $i$, every true preference $R_i$, and every
profile $R_{-i}$ of the other agents' reports, there exists a
$b$-bounded best response;
\item a submitted report profile is a Nash equilibrium if and only if no agent has a profitable $b$-bounded deviation.
\end{enumerate}
\end{proposition}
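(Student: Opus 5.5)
I would prove the cycle of implications (i)$\Rightarrow$(ii)$\Rightarrow$(iii)$\Rightarrow$(i), using individual rationality to handle the outside option and the empty report. Throughout, fix an agent $i$ and $R_{-i}$. The basic observation is that $i$'s payoff from any report depends only on her assignment. A best response therefore exists, since the report space is finite. Moreover, any report yielding the same assignment as a best response is itself a best response.

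\emph{(i)$\Rightarrow$(ii).} Let $\widehat R_i$ be a best response at true preference $R_i$, and let $c=\varphi_i(\widehat R_i,R_{-i})$. If $c\in C$, then $c$ is attainable, so by (i) some $b$-bounded report attains $c$, and that report is also a best response. If $c=\emptyset$, then by individual rationality the empty report yields $\emptyset$. The empty report is $b$-bounded, so it is a best response as well.

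\emph{(ii)$\Rightarrow$(iii).} The ``only if'' direction is immediate. For ``if'', suppose some agent $i$ has a profitable deviation from a submitted profile $R$, evaluated at her true preference. Then her current report is not a best response. By (ii) there is a $b$-bounded best response, and it gives an assignment weakly better than the profitable deviation, hence strictly better than the current one. So it is a profitable $b$-bounded deviation.

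\emph{(iii)$\Rightarrow$(i).} This step is the main obstacle, because (iii) concerns submitted profiles and true preferences, and we must engineer a game to extract attainability. I would argue by contraposition. Suppose $c\in O_i(R_{-i})$ is attained by some $\widehat R_i$ but by no $b$-bounded report. Let $i$'s true preference be $(c)$ and her submitted report be the empty report. By individual rationality she gets $\emptyset$, and $\widehat R_i$ is a profitable deviation. Every $b$-bounded deviation yields something other than $c$, which by her true preference $(c)$ is no better than $\emptyset$. So $i$ has no profitable $b$-bounded deviation. The remaining difficulty is the other agents: (iii) speaks about whole profiles, so I need every $j\neq i$ to have no profitable $b$-bounded deviation at the submitted profile $(\emptyset,R_{-i})$. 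I would take each $j$'s true preference to be a ranking with $\varphi_j(\emptyset,R_{-i})$ on top, declaring that object as her unique acceptable object, or declaring nothing acceptable if she is unmatched. Then $j$ cannot strictly improve by any deviation, because any other outcome is ranked at or below her current assignment. Note that true preferences need not coincide with the submitted reports $R_{-i}$; only true preferences enter the definition of profitability. The submitted profile $(\emptyset,R_{-i})$ thus has no profitable $b$-bounded deviation but is not a Nash equilibrium, contradicting (iii).
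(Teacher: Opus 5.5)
Your proposal is correct and follows the paper's proof essentially step by step. It uses the same cycle (i)$\Rightarrow$(ii)$\Rightarrow$(iii)$\Rightarrow$(i), the same role for the empty report via individual rationality, and the same construction in (iii)$\Rightarrow$(i): true preference $(c)$ for agent $i$, submitted profile $(\emptyset,R_{-i})$, and each other agent's true preference topped by her current assignment. The only cosmetic difference is in (i)$\Rightarrow$(ii), where you start from an arbitrary best response rather than from the $R_i$-best element of $O_i(R_{-i})\cup\{\emptyset\}$, as the paper does.
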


\begin{proof}
See Appendix~\ref{app:bounded-reports}.
\end{proof}

For an individually rational mechanism, Proposition~\ref{prop:best-response}
and the definition of $k_m(\varphi)$ imply that $k_m(\varphi)$ is
the smallest positive deviation-length bound sufficient to verify
Nash equilibrium across the markets in $\mathcal E_m$. Thus, unbounded
report width means that no fixed deviation-length bound suffices for
equilibrium verification as the market grows.\footnote{Proposition~\ref{prop:best-response} and this conclusion remain valid under the weaker condition that the empty
report assigns the outside option:
$\varphi_i(\emptyset,R_{-i})=\emptyset$ for every $i$ and $R_{-i}$.
This is the only consequence of individual rationality used in the
proof.}

\section{What causes singleton-attainability to fail?}\label{sec:failure}

The non-SA primary mechanisms and families lead to three broad sources
of indirect access. We first establish a general incompatibility
between SA and Pareto improvement over a
strategyproof, individually rational, and non-wasteful mechanism. We then introduce Positional
DA-TC mechanisms, a broad class of Pareto-efficient mechanisms that
Pareto-improve DA, and show that they have unbounded width. 
We also study two further structural classes which highlight
structural contrasts with their SA counterparts.
Stable Reverse Priority mechanisms contrast with Stable Monotone
Priority mechanisms, and Reported Rank Welfare mechanisms contrast
with Fixed Rank Welfare mechanisms. 
We show that both of these non-SA structural classes have maximal width.

\subsection{Pareto improvements over strategyproof mechanisms}\label{sec:da-width}

It is well known that DA is not Pareto efficient, but recent work
also demonstrates that its welfare losses can be considerable. In large
random markets, with high probability there is a Pareto improvement
over the DA outcome that makes almost all agents better off
(Ortega et al., 2026). Substantial gains in expected average
assignment ranks have also been established for Pareto-efficient
mechanisms that Pareto-improve DA (Ortega, Zhao, and Ziegler, 2026).

EADA and DA-TTC are two prominent mechanisms that Pareto-improve DA
and are Pareto efficient. We first illustrate that neither is
singleton-attainable.

\begin{example}[EADA and DA-TTC violate TRI and SA]\label{ex:da-improvements-ti}
Let $I=\{i,j,h\}$ and $C=\{c,d\}$ with unit capacities. The reports are
$R_i=(c,d)$, $R_j=(c)$, and $R_h=(d,c)$, and the priorities are
$\succ_c=(h,j,i)$ and $\succ_d=(i,h,j)$. DA assigns $(i\asgn d, j\asgn\emptyset, 
h\asgn c)$.

For EADA, agent $j$ is an interrupter at $c$. Deleting $c$ from her
list, so that she submits the empty report, and rerunning DA gives
$(i\asgn c, j\asgn\emptyset, h\asgn d)$. For DA-TTC, the DA assignments are the initial endowments. Since $i$ prefers $c$ to her endowment $d$ and $h$ prefers $d$ to her endowment $c$, the two agents point to one another and trade their endowed objects. This gives the same matching: $(i\asgn c, j\asgn\emptyset, h\asgn d)$.

If agent $i$ submits the singleton report $(c)$ instead of $(c,d)$,
DA assigns $(i\asgn \emptyset, j\asgn c, h\asgn d)$,  leaving $i$ unmatched.
This matching is already Pareto efficient, so both EADA and DA-TTC
leave it unchanged. Thus, listing $d$ below
$c$ triggers the DA rejection chain that enables $i$ to obtain $c$
through the subsequent Pareto improvement, whereas truncating below $c$ changes $i$'s assignment from $c$ to $\emptyset$ in both mechanisms. Hence, EADA and DA-TTC
violate TRI and SA. \hfill$\diamond$
\end{example}

The example reflects a more general impossibility which  does
not depend on stability or on the particular structure of DA. It uses only strategyproofness, together with individual
rationality and non-wastefulness, of the mechanism being
Pareto-improved.

\begin{theorem}[Pareto improvements over strategyproof mechanisms]\label{thm:pareto-improvement-nonsa}
Let $\varphi$ be strategyproof, individually rational, and non-wasteful. Every Pareto improvement over $\varphi$ violates SA.
Consequently, each Pareto improvement over $\varphi$ has width greater
than one.
\end{theorem}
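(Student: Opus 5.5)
The plan is to argue by contradiction. Suppose $\psi$ Pareto-improves $\varphi$ and $\psi$ satisfies SA. I will build a sequence of report profiles along which the total number of declared acceptable objects strictly decreases, while a certain invariant is maintained at every step. The invariant says that some agent $k$ reports a singleton $(x)$, receives $x$ under $\psi$, and is unmatched under $\varphi$. Because report sizes are nonnegative integers, this process cannot continue forever, which gives the contradiction. The final claim then follows from Observation~\ref{obs:width-one}. Since $\varphi$ is individually rational and $\psi$ weakly Pareto-improves it, $\psi$ is individually rational as well: $\psi_j(R)\,R_j\,\varphi_j(R)\,R_j\,\emptyset$. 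So $\psi$ has width one exactly when it is SA.

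\emph{Starting step.} Since $\psi$ is a Pareto improvement over $\varphi$, there is a profile $R$ and an agent $i$ with $c=\psi_i(R)\,P_i\,\varphi_i(R)$. By SA of $\psi$, $\psi_i((c),R_{-i})=c$. Strategyproofness of $\varphi$ at the true preference $R_i$ gives $\varphi_i(R)\,R_i\,\varphi_i((c),R_{-i})$. Hence $\varphi_i((c),R_{-i})\neq c$, and individual rationality then forces $\varphi_i((c),R_{-i})=\emptyset$. So the profile $R'=((c),R_{-i})$ satisfies the invariant with $k=i$ and $x=c$.

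\emph{Reduction step.} Suppose a profile $R'$ satisfies the invariant with agent $k$ and object $x$.
\begin{enumerate}[label=(\arabic*),nosep]
\item Agent $k$ strictly prefers $x$ to $\varphi_k(R')=\emptyset$ at $R'_k$. Non-wastefulness of $\varphi$ then gives $|\varphi(R')(x)|=q_x$.
\item Under $\psi(R')$, agent $k$ occupies one copy of $x$. So some agent $l\neq k$ with $\varphi_l(R')=x$ has $\psi_l(R')=y\neq x$.
\item Because $\psi$ Pareto-improves $\varphi$, $y\,P'_l\,x$, so $y$ is acceptable to $l$.
\item Thus $l$ declares both $x$ and $y$ acceptable, and $|A_l(R'_l)|\ge2$.
\end{enumerate}
Now replace $R'_l$ by $(y)$. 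SA of $\psi$ gives $\psi_l((y),R'_{-l})=y$. Strategyproofness of $\varphi$ at true preference $R'_l$ gives $x=\varphi_l(R')\,R'_l\,\varphi_l((y),R'_{-l})$. Since $y\,P'_l\,x$, this rules out $\varphi_l((y),R'_{-l})=y$, and individual rationality forces $\emptyset$. So the new profile satisfies the invariant with agent $l$ and object $y$. Its total number of declared acceptable objects has dropped by at least one.

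Iterating the reduction step therefore produces an infinite strictly decreasing sequence of nonnegative integers, which is impossible. The invariant may move from agent to agent, and earlier singleton agents may lose their objects; neither matters, since the argument only ever uses the current invariant agent.

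I expect step (2) to be the main obstacle. A single application of strategyproofness and non-wastefulness yields no immediate contradiction, because the displaced holder $l$ of $x$ may simply be better off elsewhere under $\psi$. The key observation is that such an $l$ must have a report with at least two acceptable objects, since singleton reporters cannot be strictly improved. This is what makes total report length a valid potential function for the iteration. It also matches the remark in the text that $\varphi$ leaves no room for Pareto improvement once all agents report at most one object.
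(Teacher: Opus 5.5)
Your proof is correct, and after the shared first step it takes a different route from the paper. Both arguments pass to $R'=((c),R_{-i})$ with $\psi_i(R')=c$ and $\varphi_i(R')=\emptyset$.

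The paper then finishes at this single profile by counting. Since $\psi(R')$ weakly Pareto-improves the individually rational matching $\varphi(R')$, no agent matched under $\varphi(R')$ becomes unmatched under $\psi(R')$, while $i$ becomes matched. So some object $d$ has more occupants under $\psi(R')$ than under $\varphi(R')$, and therefore has a vacant copy under $\varphi(R')$. Any agent $j$ who holds $d$ under $\psi(R')$ but not under $\varphi(R')$ strictly prefers $d$ to $\varphi_j(R')$, which contradicts non-wastefulness. SA and strategyproofness are used only once, for agent $i$.

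Your anticipated obstacle is therefore an artifact of applying non-wastefulness only locally at $x$. Applied to the whole matching, it closes the argument immediately.

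Your chain argument does buy something in exchange. You invoke non-wastefulness only for agents whom $\varphi$ leaves unmatched, in step (1). The paper applies it to an agent $j$ who may be matched under $\varphi(R')$. In your proof, the repeated appeals to SA of $\psi$ and to strategyproofness of $\varphi$ at the auxiliary profiles, together with the report-length potential, do the work that full non-wastefulness does in the paper. So your argument establishes the theorem under that formally weaker hypothesis. The invariant, the displaced-holder step, and the termination argument all check out.

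The paper's version is shorter and works at a single profile. It also isolates exactly the participation-maximality property of Alva and Manjunath that it strengthens. Your step that locates a displaced holder of $x$ relies on capacity counting at $x$, so it would need rethinking in the paper's extension to contracts and general feasibility constraints.
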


\begin{proof}
Let $\psi$ be a Pareto improvement over $\varphi$, and choose a profile
$R$ and agent $i$ for whom the improvement is strict. Let
$c=\psi_i(R)$
and $c\,P_i\,\varphi_i(R)$.
Since $\varphi$ is individually rational, $c\in A_i(R_i)$.
Strategyproofness of $\varphi$ gives
$\varphi_i(R)\,R_i\,\varphi_i((c),R_{-i})$.
Since $c\,P_i\,\varphi_i(R)$, the mechanism $\varphi$ cannot assign
$c$ to $i$ when she reports $(c)$. Individual rationality therefore
implies
$\varphi_i((c),R_{-i})=\emptyset$.

Suppose, toward a contradiction, that $\psi$ satisfies SA. Since
$\psi_i(R)=c$, the object $c$ is attainable for $i$ at $R_{-i}$ under
$\psi$. Hence
$\psi_i((c),R_{-i})=c$.
Let
$ R'=((c),R_{-i})$,
$ \, \mu=\varphi(R')$,
$\nu=\psi(R')$.
Then $\mu_i=\emptyset$ and $\nu_i=c$. Since $\psi$ Pareto-improves
$\varphi$, every agent weakly prefers her assignment in $\nu$ to her
assignment in $\mu$ at $R'$. Every agent matched in $\mu$ strictly
prefers her assignment there to the outside option, given that $\mu$
is individually rational. Hence no agent matched in $\mu$ can become
unmatched in $\nu$, while agent $i$ becomes matched. Therefore, $\nu$
matches strictly more agents than $\mu$.

It follows that, for some object $d$,
$|\nu(d)|>|\mu(d)|$.
Since $\nu$ is feasible, $|\mu(d)|<q_d$. Choose an agent
$j\in\nu(d)\setminus\mu(d)$. Pareto improvement implies that
$\nu_j\,R'_j\,\mu_j$, where $\nu_j = d$. 
Since $d\neq\mu_j$, $\,d\,P'_j\,\mu_j$. Thus agent $j$ strictly prefers $d$ to her
assignment in $\mu$, while $d$ has an unassigned copy in $\mu$. This
contradicts the non-wastefulness of $\mu$.
Therefore, $\psi$ violates SA. 

Finally, a Pareto improvement over an
individually rational mechanism is individually rational.
Observation~\ref{obs:width-one} therefore implies that $\psi$ has
width greater than one.
\end{proof}

The contradiction in the proof of the theorem is derived through a participation argument.
Alva and Manjunath (2019b, Corollary 2) show that an individually rational and strategyproof mechanism admits no strategyproof Pareto improvement if its outcomes are Pareto-constrained participation-maximal: no allocation can make every agent weakly better off while strictly expanding the set of participants. 
Non-wastefulness implies this condition in our model (Alva and Manjunath, 2019b, Remark 4). Their assumptions of richness of the outside option and no indifference with the outside option hold in our domain.
Theorem~\ref{thm:pareto-improvement-nonsa} rules out all singleton-attainable Pareto improvements and strengthens their
result in the present matching environment: a strategyproof, individually
rational, and non-wasteful mechanism admits no
Pareto improvement within the larger class of SA mechanisms
not just within the class of strategyproof mechanisms.

The object allocation framework of Alva and Manjunath (2019b, Section 5.1) allows both contracts, which specify an object and its assignment terms, and general feasibility constraints. Our proof extends to this setting when agents can
list any single contract as their only acceptable assignment,
just as they can list a single object in our model.
This requirement is satisfied on the unrestricted domain of
strict preferences.\footnote{Their concept of
non-wastefulness, namely that no Pareto improvement can assign more agents
to some object, coincides with our definition of non-wastefulness in the
school choice model.}

Theorem~\ref{thm:pareto-improvement-nonsa} identifies an
incompatibility between singleton-attainability and Pareto improvement over a strategyproof mechanism with basic additional properties. It demonstrates a three-way tension among Pareto efficiency, transparent access, and preservation of individual welfare guarantees of a baseline strategyproof mechanism.  This is not a general conflict between Pareto efficiency and SA: TTC and its generalizations are Pareto efficient and singleton-attainable. The conflict arises from the individual welfare guarantees. 
In fact, the improving mechanism need not even be Pareto efficient. 

The corollary stated below is an important implication of this general incompatibility for the case where the guarantees are the DA assignments.

\begin{corollary}[Pareto improvements over DA]
\label{cor:da-improvement-nonsa}
Every Pareto improvement over DA violates SA. Consequently, each
Pareto improvement over DA has width greater than one.
\end{corollary}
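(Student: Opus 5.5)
The plan is to obtain the corollary directly from Theorem~\ref{thm:pareto-improvement-nonsa} by checking that DA meets its three hypotheses. Once DA is known to be strategyproof, individually rational, and non-wasteful, the theorem applies with $\varphi$ equal to DA. It then says that every Pareto improvement over DA violates SA and has width greater than one.

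The first step is strategyproofness. This is the classical result of Dubins and Freedman and of Roth (1982a) for the agent-proposing DA. The paper already relies on it in Table~\ref{tab:class} and in the discussion after Proposition~\ref{prop:sp-inv}, so I will simply cite it.

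The second step is individual rationality and non-wastefulness. DA produces the agent-optimal stable matching with respect to the fixed priorities $\succ$, so both properties follow from stability.
\begin{itemize}
\item \emph{Individual rationality:} a stable matching is individually rational by definition.
\item \emph{Non-wastefulness:} suppose $c\,P_i\,\mu_i$ and $|\mu(c)|<q_c$. Then $(i,c)$ is a blocking pair under the paper's definition, which contradicts stability. Hence $c\,P_i\,\mu_i$ implies $|\mu(c)|=q_c$.
\end{itemize}

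The final step is the width claim. A Pareto improvement over an individually rational mechanism is itself individually rational, as noted at the end of the proof of Theorem~\ref{thm:pareto-improvement-nonsa}. Observation~\ref{obs:width-one} then shows that any Pareto improvement over DA has width greater than one. This step is already part of the theorem's conclusion.

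No step is a real obstacle. The only point needing care is that strategyproofness must hold in every market, for arbitrary capacities and incomplete reports. The standard theorem covers exactly this many-to-one setting, so I will state it in that generality when I cite it.
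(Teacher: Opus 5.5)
Your proposal is correct and matches the paper's proof: both cite strategyproofness of DA (Dubins--Freedman, Roth), derive individual rationality and non-wastefulness from stability, and apply Theorem~\ref{thm:pareto-improvement-nonsa}. Your explicit check that an unfilled preferred object would create a blocking pair simply spells out the step the paper states in one line.
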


\begin{proof}
DA is strategyproof (Dubins and Freedman, 1981; Roth, 1982a) and stable (Gale and Shapley, 1962). Stability implies individual rationality and non-wastefulness, so the result follows from Theorem~\ref{thm:pareto-improvement-nonsa}.
\end{proof}

Thus, a Pareto improvement over DA cannot preserve even  SA.  Corollary~\ref{cor:da-improvement-nonsa} implies the impossibility established by
Abdulkadiro\u{g}lu, Pathak, and Roth (2009, Theorem~1) with strategyproofness, instead of the weaker singleton-attainability requirement.

Relatedly, when reports are restricted to at most one acceptable object, whether
by constraint or voluntarily, DA is Pareto efficient. Every matched
agent receives her only acceptable object, so any Pareto improvement
would have to match an agent whom DA leaves unmatched. Such an agent
was rejected from her only acceptable object, which is therefore fully
assigned. Assigning it to her would displace an agent who would then
be worse off. DA therefore admits no Pareto improvement on this
restricted singleton preference domain.
This is consistent with the familiar understanding of the source of efficiency losses in the DA mechanism. Objects ranked above an agent's eventual assignment can create priority claims that affect other agents even though the reporting agent does not ultimately obtain those objects. 
Kojima and Manea's (2010) IR-monotonicity property formalizes this
effect: when agents withdraw some claims to acceptable objects ranked
above their respective original DA assignments, all agents are weakly
better off at the resulting DA outcome.

Kesten (2010) identifies an important procedural form of the same phenomenon: an agent can be held temporarily by an object she cannot ultimately obtain, causing another agent to be rejected and initiating a rejection chain whose downstream effects generate an efficiency loss. EADA removes the claims of such interrupters for these specific objects in reverse order of the DA procedure, undoing the associated rejection cycles. On the singleton-report domain, every agent receives either her only acceptable object or the outside option. A rejected agent has no other acceptable object to which she can move, so no rejection cycles can be set off and no Pareto improvement over the DA matching is possible. Once richer reports are allowed, such claims can generate the efficiency losses that Pareto improvements over DA seek to recover. Corollary~\ref{cor:da-improvement-nonsa} shows that this recovery comes
with its own loss of access transparency:  every Pareto improvement over DA violates singleton-attainability.

Corollary~\ref{cor:da-improvement-nonsa} does not determine the
severity of indirect access, which depends on the improvement
procedure. In particular, a Pareto improvement over DA can have
bounded width greater than one. For example, let a mechanism
select $(i\asgn c, j\asgn\emptyset, h\asgn d)$ at the initial
report profile in the market of
Example~\ref{ex:da-improvements-ti}, and select the DA matching
at every other profile and in every other market.
This mechanism Pareto-improves DA. At the initial profile, 
agent $i$ obtains $c$ with the report $(c,d)$ but not with
the singleton report $(c)$, so the mechanism has width at least
two. 
Report width is at most two in this market because there are
only two objects. In every other market, the mechanism coincides
with DA, which is singleton-attainable. Hence, this mechanism
has report width exactly two.
Importantly, this mechanism is not Pareto efficient: if $j$
reports $(c,d)$ while the other reports remain unchanged,
it selects the original, Pareto-inefficient DA matching.  
The next theorem shows that every mechanism in a broad
Pareto-efficient subclass of Pareto improvements over DA has
unbounded width, including EADA and DA-TTC. Whether every
Pareto-efficient Pareto improvement over DA must have unbounded
width remains an open question.

We use \emph{DA-TC} (DA Trading Cycles) for the broader class of
Pareto-efficient mechanisms that start from the DA matching and
Pareto-improve it through trading cycles. We define next a positional
subclass whose members have unbounded report width. To formulate
this subclass, we first introduce the trading structure and the
positional rules it uses.

For the trading stage, treat each object copy assigned by DA as a
distinct endowment held by its assigned agent. Only agents matched
by DA and their endowed copies enter the trading stage. Agents
continue to rank objects rather than individual object copies.

\paragraph{Improvement graph and cyclic agent set.}
At any stage, let $\mu$ denote the current matching of remaining
endowed agents, with $\mu_i$ denoting the object currently held by
agent $i$. The \textit{improvement graph} $G(\mu)$ has the remaining
endowed agents as vertices and contains an edge $i\to j$ whenever
$\mu_j \,P_i\, \mu_i$.
We call such an edge a \emph{claim}: agent $i$ is the
\emph{claimant}, and agent $j$ is the \emph{holder} of the
claimed copy.
Thus every directed cycle is a Pareto-improving trade.
The \emph{cyclic agent set} $\mathcal C(\mu)$ is the set of agents
that belong to some directed cycle of $G(\mu)$.
Before another trade is selected, every agent outside $\mathcal C(\mu)$
is settled at her current object and removed together with her endowed
copy; we refer to this as \textit{settlement}. 
As shown by Lemma~\ref{lem:cyclic-settlement} in
Appendix~\ref{app:positional}, a settlement is without loss:
these are exactly the remaining endowed agents whose assignments do
not change in any Pareto-improving reallocation of the current
endowed copies among the remaining endowed agents.

\bigskip

After settlement, when agents outside the cyclic agent set have been removed, claims
can be selected from either the claimant side or the holder side.
A claimant orders her remaining strict improvements by preference,
while a holder's remaining claimants are ordered by the object's
priority. In either orientation, we consider rules that select a
position in the ordered set. The selected position may depend on
the number of alternatives, but the same position is selected
whenever that number is the same.

A \emph{position rule} is a function
$f:\mathbb{N}_{+}\to\mathbb{N}_{+}$ satisfying
$1\leq f(s)\leq s$ for every $s\geq 1$.
When an ordered set contains $s$ alternatives, $f(s)$ specifies
the position selected from that set. The top and bottom rules
are $f(s)=1$ and $f(s)=s$, respectively.

\medskip

\defhead{Structural Class: Positional DA-TC}
A mechanism $\varphi$ is a \textit{Positional DA-TC mechanism} if it
uses one common position rule $f$ and one of the two orientations
below, both fixed independently of the report profile and used
throughout the trading stage. At every profile $R$, $\varphi(R)$ is
the final matching produced by the following two stages.

\medskip

\noindent \emph{Stage 1: DA endowments.}
Run DA and use its assigned object copies as the initial endowments.

\smallskip

\noindent \emph{Stage 2: Positional trading.}
At the current matching, first settle and remove the agents outside
the cyclic agent set and their current endowed copies (settlement).
If no agents remain, the trading stage ends.
Otherwise, apply the fixed orientation as follows:

\medskip

\begin{enumerate}[nosep]
\item \emph{Claimant-positional:}
Each remaining agent orders the distinct remaining objects that she
strictly prefers to her current endowment according to her report.
If there are $s$ such objects, she selects her $f(s)$-th
most-preferred one and points to the highest-priority remaining
holder of a copy of that object.

\item \emph{Holder-positional:}
For each remaining holder, order the remaining claimants for her
endowed copy of object $c$ according to $\succ_c$.
If there are $s$ such claimants, retain the claim of the
$f(s)$-th highest-priority claimant for $c$.
\end{enumerate}

The function $f$ is evaluated separately for each remaining agent
or holder, using the number of alternatives in that particular
ordered set. In the claimant-positional orientation, $s$ counts
the distinct objects that are held by remaining agents and that
the claimant strictly prefers to her current endowment. In the holder-positional
orientation, $s$ counts the remaining agents who strictly prefer
the holder's endowed object to their own current endowment.

\medskip

\noindent Since every remaining agent belongs to the cyclic agent
set, each remaining agent strictly prefers at least one remaining
object to her current endowment, and each remaining holder has at
least one remaining claimant for her endowed copy.
In the claimant-positional orientation, every remaining agent
therefore selects exactly one outgoing claim. In the holder-positional
orientation, every remaining holder retains exactly one incoming
claim. Thus, the selected claims form at least one directed cycle in each round, 
and these cycles are disjoint. 
Carry out all such
cycles simultaneously, with each participant receiving the endowed
copy held by the next agent along the directed cycle. Treat the
received copies as the new endowments. Agents who do not participate
in a selected cycle keep their current endowments. 

Update the improvement graph and the cyclic agent set at the
new matching after each trading round, and repeat these trading rounds as part of Stage~2 as long as any agent remains. Agents leave Stage~2 only when they are settled, so they may participate in a trade in  multiple consecutive trading rounds.  

\bigskip

The structural class extends substantially beyond the primary mechanisms
EADA and DA-TTC, and includes every positional claimant rule and
every positional holder rule.
For example, the claimant-positional bottom rule $f(s)=s$ has each remaining agent
select her least-preferred remaining strict improvement.
The top holder rule $f(s)=1$ is outcome-equivalent to full-consent
EADA through the Top Priority representation of Dur, Gitmez, and
Y{\i}lmaz (2019), while the top claimant rule $f(s)=1$ is
outcome-equivalent to DA-TTC. These equivalences are established
in the proof of Corollary~\ref{cor:eada-dattc-width} in
Appendix~\ref{app:positional}.
This class also allows for intermediate positions. For example, fixing
an integer $r\geq 1$ and setting $f(s)=\min\{r,s\}$ selects the
$r$-th alternative whenever at least $r$ alternatives are available,
and the last available alternative otherwise.

\begin{theorem}[Positional DA-TC mechanisms]\label{thm:positional-width}
Every Positional DA-TC mechanism is Pareto efficient, Pareto-improves
DA, and has unbounded report width.
\end{theorem}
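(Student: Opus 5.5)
The proof splits into three parts: Pareto improvement over DA, Pareto efficiency, and unbounded width.

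\emph{Pareto improvement over DA.} Stage~1 yields the DA matching. Every trade in Stage~2 is a directed cycle of the current improvement graph, in which each participant receives a copy she strictly prefers to her current one. Agents unmatched by DA never enter the trading stage, and non-participants keep their copies. So each agent's assignment weakly improves round by round, and the final matching is individually rational and weakly Pareto-improves $\mathrm{DA}(R)$ at every $R$. Termination holds because every round executes at least one cycle, each executed cycle strictly improves some agent, and the state space is finite. To get a strict improvement at some profile, I would reuse Example~\ref{ex:da-improvements-ti}. After DA, $i$ holds $d$ and $h$ holds $c$, and both form the unique cycle. Each has exactly one alternative, so $f(1)=1$ forces the trade in either orientation.

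\emph{Pareto efficiency.} Let $\lambda$ be the final matching and suppose $\lambda'$ Pareto-improves it.
\begin{enumerate}[nosep]
\item Agents unmatched by DA stay unmatched in $\lambda'$. Otherwise $\lambda'$, which also Pareto-improves $\mathrm{DA}(R)$, would match more agents than DA; the argument in the proof of Theorem~\ref{thm:pareto-improvement-nonsa} then yields an object with a free copy in DA that some agent strictly prefers to her DA assignment, contradicting non-wastefulness of DA.
\item Since every agent is weakly better off and strictly better off agents take copies from others, $\lambda'$ reallocates the DA endowed copies among the agents matched by DA.
\item I then argue round by round. By Lemma~\ref{lem:cyclic-settlement}, the agents settled at a round keep their objects in every Pareto-improving reallocation of the current copies. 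Hence, by induction, $\lambda'$ must agree with $\lambda$ on every settled agent.
\item Since every agent is eventually settled, $\lambda'=\lambda$, a contradiction.
\end{enumerate}
The delicate point is that $\lambda'$ improves the final matching rather than the intermediate ones. It remains a weak improvement over each intermediate matching, however, since assignments only improve along the process. So the lemma applies at every stage.

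\emph{Unbounded width.} This is the main obstacle. For each $m$ I would build a chain market generalizing Example~\ref{ex:da-improvements-ti}. Agent $i$ targets $c_1$. Each additional object $c_2,\dots,c_k$ in her report displaces one more agent in the DA rejection chain, and only the complete chain creates the trading cycle that returns $c_1$ to $i$. I would try a market with agents $i$, interrupters $j_\ell$, and holders $h_\ell$, with priorities arranged so that the following hold.
\begin{enumerate}[nosep]
\item If $i$ lists $(c_1,\dots,c_k)$, the DA endowments form one long cycle through $i$.
\item In this cycle every agent has exactly one strict improvement among remaining objects and every holder exactly one claimant. The improvement graph is then a single cycle, $s=1$ everywhere, and $f(1)=1$ makes the outcome independent of $f$ and of the orientation.
\item If $i$ omits any $c_\ell$ or lists fewer than $k$ objects, the DA matching is already Pareto efficient, or at least no cycle gives $i$ the object $c_1$.
\end{enumerate}
Designing the market so that the improvement graph has out-degree and in-degree one is what makes a single construction cover every position rule. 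The hard step is verifying the last condition, namely that every shorter report fails to attain $c_1$. I would check this by showing that DA under any report of length below $k$ leaves some holder in the chain with no claimant, so that the settlement step removes the chain before $i$ can be reached. This gives $k_m\ge k\to\infty$.
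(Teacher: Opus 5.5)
Your arguments that the mechanism Pareto-improves DA and is Pareto efficient are correct and match the paper's. Non-wastefulness of DA excludes newly matched agents and unused copies, and Lemma~\ref{lem:cyclic-settlement} is then applied along the settlement rounds. Using Example~\ref{ex:da-improvements-ti} for strictness is also fine.

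The gap is in the width part. You give no market. Worse, the principle meant to let one construction serve every $f$ and both orientations cannot work alongside DA-relevant triggers. That principle is that every remaining agent, $i$ included, has exactly one remaining strict improvement. To see why it fails, let $\mu$ be the DA matching at $i$'s report, $w=\mu_i$, and let $g\neq c_1$ be ranked above $w$. Suppose deleting $g$ changes DA to $\mu^0$.
\begin{enumerate}[nosep]
\item IR-monotonicity and strategyproofness of DA imply that $\mu^0$ weakly Pareto-improves $\mu$ with $\mu^0_i=w$. Non-wastefulness of $\mu$ then makes $\mu^0$ a reallocation of the copies assigned in $\mu$.
\item Since $\mu$ is agent-optimal and $\mu^0\neq\mu$, $\mu^0$ is not stable at the original profile, so $(i,g)$ blocks it.
\item Because $g$ is filled in $\mu$ by agents with higher priority than $i$, some $p\in\mu(g)$ receives a different object in $\mu^0$. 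Lemma~\ref{lem:cyclic-settlement} then gives $p\in\mathcal C(\mu)$.
\end{enumerate}
So every trigger that affects DA survives settlement, and $i$ has at least two remaining strict improvements.

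If instead all such objects are held only by agents outside $\mathcal C(\mu)$, the same argument shows that deleting them one at a time never changes DA. Since $i$'s deleted claims pointed only to non-cyclic agents, the cyclic agent set and the whole trading stage are also unchanged. Hence $(c_1,w)$ already attains $c_1$. Your degree-one design therefore cannot push the minimum report size above two.

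This is why the paper's proof has to track the position rule explicitly. In its block construction, each trigger $g_t$ ends up held by the cyclic agent $p_t$, so $i$ has $\ell+1$ remaining improvements.
\begin{enumerate}[nosep]
\item In the claimant orientation, $c$ is placed at position $f(\ell+1)$ of $i$'s report. Each $p_t$ orders $z_t$ and $x_{t+1}$ according to $f(2)$, which does not affect the DA endowments.
\item In the holder orientation, $g_t$ and $x_t$ each have two cyclic claimants. The paper checks two cases separately. If $f(2)=2$, the long cycle through $i,h,a_1,p_1,\ldots,a_\ell,p_\ell$ executes at once. If $f(2)=1$, the cycles $C_\ell,\ldots,C_1,C_0$ run backward through the blocks.
\end{enumerate}
The matching lower bound is orientation-free. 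If block $t$ is the first inactive block, the agents in $S_t$ claim only objects in $D_t$, and $D_t$ stays within $S_t$ under every trade. So $c$ never reaches $i$, and every report attaining $c$ must contain $c,w,g_1,\ldots,g_\ell$. Your plan needs a construction of this kind, with attainment verified separately for each orientation and value of $f$, in place of the degree-one design.
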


\begin{proof}
See Appendix~\ref{app:positional}.
\end{proof}

The theorem identifies a procedural source of severe indirect access within the broader class of Pareto improvements over DA. Positional selection rules out an unrestricted global choice among Pareto-improving cycles: claims are selected locally from a claimant's
preference ordering or a holder's priority ordering. 
At the same time, the class remains broad enough to encompass
both priority-based and preference-based procedures, including
EADA and DA-TTC.

For the two primary mechanisms, the additional objects that matter are naturally in the tail, below the object ultimately obtained. These lower-ranked objects are not the assignments the agent seeks to obtain. Rather, they can trigger the rejection and trading steps that make the target object attainable, thereby allowing the agent to improve on her DA assignment.
The width construction shows how this logic can be repeated: as the market grows, more reported objects can become necessary to trigger successive rejection and trading steps, and omitting one can prevent an agent from obtaining the targeted object.

\begin{corollary}[Primary mechanisms in the Positional DA-TC class] \label{cor:eada-dattc-width}
EADA and DA-TTC have unbounded report width.
\end{corollary}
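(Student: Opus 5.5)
The corollary follows from Theorem~\ref{thm:positional-width} once EADA and DA-TTC are shown to be members of the Positional DA-TC class, up to outcome equivalence. Report width depends only on the mapping from report profiles to matchings, so outcome-equivalent mechanisms have identical minimum report sizes in every market and hence identical widths. The plan therefore has two steps. First, show that DA-TTC coincides with the claimant-positional mechanism with the top rule $f(s)=1$. Second, show that full-consent EADA coincides with the holder-positional mechanism with the top rule $f(s)=1$.

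\emph{DA-TTC.} DA-TTC runs TTC from the DA endowments: each agent points to the holder of her most-preferred remaining object, cycles are traded, and agents leave once they have traded. In the top-claimant procedure, an agent's most-preferred remaining strict improvement is exactly her TTC pointing target, provided she is not pointing to her own endowment. Agents whose own endowment is their top remaining object are removed by settlement. I would show by induction on rounds that the two procedures produce the same trades, in three steps.
\begin{itemize}[nosep]
\item Agents pointing to themselves in TTC are outside the cyclic agent set, since they have no outgoing claim. Conversely, an agent outside $\mathcal C(\mu)$ never trades in any Pareto-improving reallocation (Lemma~\ref{lem:cyclic-settlement}). Removing her early therefore changes no TTC cycle.
\item The cycles formed by top claims are exactly the TTC cycles among the remaining agents.
\item After a trade, an agent holds her top remaining object. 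She becomes settled in the positional procedure, matching her removal in TTC.
\end{itemize}
Ties among multiple copies are handled by pointing to the highest-priority holder. Since copies of an object are identical for outcome purposes, this choice does not affect assignments.

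\emph{EADA.} For EADA I would invoke the Top Priority representation of Dur, Gitmez, and Y{\i}lmaz (2019). It states that full-consent EADA equals the procedure that, starting from DA, repeatedly lets each holder retain her highest-priority claimant among agents who prefer her object, trades the resulting cycles, and settles agents who cannot gain. The work is to check that their procedure matches ours in three respects: the definition of claims (strict improvement over the current endowment), the settlement rule (removing agents outside $\mathcal C(\mu)$), and the treatment of capacities via distinct copies. For settlement I would again use Lemma~\ref{lem:cyclic-settlement}, which says that settled agents never move in any Pareto improvement of the current endowments. This lets the two removal conventions be interchanged without altering later cycles.

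\emph{Main obstacle.} The EADA equivalence is the delicate step. The cited representation is stated for one-to-one markets and may use a slightly different removal rule, and the many-to-one extension of EADA defined in this paper has to be matched with the copy-level holder procedure. If a direct citation does not suffice, I would fall back on building a width witness for EADA directly. The construction would generalize Example~\ref{ex:da-improvements-ti} to a chain of $m$ objects, in which agent $i$ must list every lower object so that each successive interrupter is removed and each rejection chain is released, and omitting any of them leaves $i$ unmatched. This gives $k_m(\mathrm{EADA})\to\infty$ directly.
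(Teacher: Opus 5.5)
Your reduction is the same as the paper's: report width depends only on the outcome function, so it suffices to show that DA-TTC and full-consent EADA are outcome-equivalent to the claimant-positional and holder-positional top rules and then apply Theorem~\ref{thm:positional-width}. The gap is in how you bridge the two removal conventions. Lemma~\ref{lem:cyclic-settlement} says that an agent outside $\mathcal C(\mu)$ never trades. It does not say that removing her early leaves the other agents' selected claims unchanged, and in general it does not.

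In TTC such an agent stays in the market and absorbs pointers. A cyclic agent may point to her as a higher-priority holder of the object she wants, or as the holder of a preferred object that is absent from the positional procedure. An agent whose top remaining object is her own type may also point to a higher-priority holder of another copy rather than to herself; your first bullet treats only self-pointing. So the TTC target is not ``exactly'' the top-claimant target, and a round-by-round induction with identical trades fails. For example, suppose $a$'s top object is held by an agent $k$ whose own object is her top choice, and $a$ has a second-best improvement with which she forms a cycle. The positional rule settles $k$ and executes $a$'s cycle in round~1, while TTC executes it only in round~2. What you need is the paper's argument. Before any member of a claimant-positional cycle leaves TTC, each member points either to her successor in that cycle or to a non-cyclic holder. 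The first member to leave does so in a nontrivial TTC cycle. Such a cycle consists of strict-improvement edges and hence only of cyclic agents, so it is the positional cycle itself, possibly executed later.

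For EADA the same issue is sharper. Top Priority may still be considering a non-cyclic agent it has not yet settled, and she can be the highest-priority claimant at some holder. The holder-positional rule instead retains the highest-priority \emph{cyclic} claimant. Moreover, Top Priority executes one cycle at a time. The two procedures therefore select different claim graphs at the same matching, and ``interchanging the removal conventions without altering later cycles'' is not justified by the settlement lemma. The paper closes this with two steps:
\begin{enumerate}
\item[(i)] Every Top Priority cycle is among the holder-positional selected cycles, since its claimants are cyclic and are highest-priority among a superset of the cyclic claimants.
\item[(ii)] Selected cycles are disjoint and remain selected after any other selected cycle is executed. A newly appearing higher-priority cyclic claimant would, by Lemma~\ref{lem:cyclic-settlement} and weak improvement, already have been one. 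Hence the final matching does not depend on execution order.
\end{enumerate}
Following the Top Priority sequence within the holder-positional rule then reaches its Pareto-efficient outcome, which simultaneous execution must also reach. Your fallback chain construction is only an outline. Proving that each listed lower object is indispensable is exactly what the block construction in the proof of Theorem~\ref{thm:positional-width} accomplishes, so the fallback does not fill this gap.
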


\begin{proof}
See Appendix~\ref{app:positional}.
\end{proof}

\subsection{Stable Reverse Priority mechanisms}\label{sec:srp}

Rank-Equity provides a prominent example of the structural class we
study next. Like Stable Monotone Priority mechanisms, members of this
class select a stable matching with respect to report-induced priorities
at every profile, but generate these priorities from reported
preferences in a very different way. Ekici, Ertemel, and Yenmez (2026)
propose an equity criterion for a one-to-one assignment problem with
complete strict rankings and obtain matchings satisfying it by selecting
stable matchings with respect to preference-induced priorities.
Agents who rank an object lower have higher priority for it, with
equal ranks resolved by an exogenous object-specific tie-breaker.
We extend this construction to our many-to-one model with incomplete
reports and apply DA using these induced priorities, as defined in
Appendix~\ref{app:mechanisms}.
Interestingly, Rank-Equity is the only primary mechanism
that violates TLI. As the following example illustrates, its failure
of SA arises from ranking the obtained object first, rather than from
deleting objects below it.

\begin{example}[Rank-Equity violates TLI and SA]\label{ex:rank-equity}
Let $I=\{1,2,3\}$ and $C=\{a,b,c\}$ with unit capacities, let all three agents report $(a,b,c)$, and use tie-breakers $\pi_a=(2,1,3)$ and $ \pi_b=\pi_c=(1,2,3)$.
At this profile the induced priority at each object is its tie-breaker, so Rank-Equity assigns $(2 \asgn a, 1 \asgn b, 3 \asgn c)$. 
If agent 1 lifts $b$ to the top, both other agents rank $b$ second while agent 1 ranks it first. Hence both have strictly higher induced priority than agent 1 at $b$. 
If a matching that is stable with respect to these induced priorities assigned $b$ to agent 1, each of the other two agents would have to receive $a$ to avoid blocking with $b$. This is not possible. Thus, Rank-Equity violates TLI. The same argument applies if agent $1$ submits the  singleton report $(b)$, proving a violation of SA. These implications follow because moving $b$ upward weakens the agent's claim to it. Furthermore, the argument is independent of the displayed tie-breakers. To see this, note that the same argument applies to any admissible tie-breaking profile when we let $i$ be the agent assigned $b$ at this common-report profile. \hfill$\diamond$
\end{example}

The example motivates the following condition.

\defhead{Reverse priority dominance}
A report-induced priority rule $\Gamma$ satisfies
\emph{reverse priority dominance} if, for every object $c$, every
report profile $R$, and every pair of distinct agents $i$ and $j$
with $c\in A_i(R_i)$, $c\in A_j(R_j)$,
\[
U_i(c;R_i)\subsetneq U_j(c;R_j)
\quad\ \mbox{implies} \quad
j\,\Gamma_c(R)\,i.
\]
Thus, when two agents' upper contour sets at $c$ are strictly nested,
the agent who lists at least one additional object above $c$ has
higher induced priority at $c$.

\medskip

The comparison with priority monotonicity becomes clear when one agent
removes some objects ranked above $c$. Suppose two agents have the
same nonempty head at $c$, with $i$ ranked above $j$ in the induced priority at $c$.
Now let only $i$ change her report so that her upper contour set at
$c$ becomes a proper subset of her original upper contour set, while
she continues to declare $c$ acceptable. Priority monotonicity
preserves $i$'s priority over $j$. Reverse priority dominance requires
the opposite priority comparison at the modified profile: $j$ must
be ranked above $i$ at $c$.
Thus, the two conditions point in opposite
directions, but they impose different types of restrictions.
Priority monotonicity only
preserves the reporting agent's priority over agents who were previously ranked 
below her, while reverse priority dominance determines the priority
comparison whenever one agent's upper contour set strictly contains
the other's.

A stronger cardinality-based requirement, closer to the priority
criterion used by Rank-Equity, would require that whenever $|U_i(c;R_i)|<|U_j(c;R_j)|$, we have  $j\,\Gamma_c(R)\,i$. This would also determine the priority comparison when the two upper
contour sets are not nested. Reverse priority dominance imposes the
comparison only under strict set inclusion and therefore leads to a
broader class of report-induced priority rules.

\defhead{Structural Class: Stable Reverse Priority}
A mechanism $\varphi$ is a \textit{Stable Reverse Priority mechanism}
if there is a report-induced priority rule $\Gamma$ satisfying reverse
priority dominance such that, for every profile $R$, $\varphi(R)$ is
stable at $R$ with respect to the priority profile $\Gamma(R)$.

\medskip

As with Stable Monotone Priority mechanisms, no restriction is imposed
on how $\varphi$ selects among stable matchings at $R$ with respect
to $\Gamma(R)$.

The contrast between the priority rules of the two classes is not
simply a reversal of direction. Merely replacing priority monotonicity
by the weak opposite condition, namely that reducing the head cannot
strengthen the reporting agent's priority, would not be enough to
imply a violation of SA. Fixed priorities satisfy this condition,
while fixed-priority stable mechanisms are SA by
Proposition~\ref{prop:stable}. Reverse priority dominance directly
prescribes the priority comparison whenever one agent's upper
contour set strictly contains another's. 

\begin{theorem}[Stable Reverse Priority mechanisms]\label{thm:srp-width}
Every Stable Reverse Priority mechanism has maximal report width.
\end{theorem}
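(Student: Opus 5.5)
The upper bound $k_m(\varphi)\le m$ holds for every mechanism, since no report declares more than $m$ objects acceptable. The plan is therefore to prove the matching lower bound. For every $m\ge1$ and every Stable Reverse Priority mechanism $\varphi$ with rule $\Gamma$, I will build a market $E\in\mathcal E_m$ with an agent $i$, reports $R_{-i}$ and a target object $c$ such that $c\in O_i(R_{-i})$ but no report with fewer than $m$ acceptable objects assigns $c$ to $i$. The case $m=1$ is immediate from the normalization, so take $m\ge2$. Because the selection among stable matchings is arbitrary and $\Gamma$ is pinned down only on strictly nested upper contour sets, every step has to hold for all such $\Gamma$ and all stable selections. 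I will use two tools repeatedly: blocking-pair arguments, and the rural hospitals theorem applied to stability with respect to $\Gamma(R)$, exactly as in the proof of Theorem~\ref{thm:smp}.

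\emph{Construction.} Let $C=\{c,d_1,\dots,d_{m-1}\}$ with unit capacities. Agent $i$'s intended access report is $(d_1,\dots,d_{m-1},c)$, which makes her head at $c$ the full set $C\setminus\{c\}$. The other agents serve as competitors for $c$. Each competitor lists $c$ with a head that is a proper subset of $C\setminus\{c\}$, and is given a safe alternative so that she blocks any matching in which she has higher induced priority at $c$ than its holder. The heads will be chosen so that every competitor's head at $c$ is strictly contained in $C\setminus\{c\}$.

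\emph{Attainability.} With $i$ reporting the full list, reverse priority dominance gives $i\,\Gamma_c\,j$ for every competitor $j$. I will then write down one explicit matching that assigns $c$ to $i$, check that it is stable with respect to $\Gamma(R)$, and invoke the rural hospitals theorem. Under every admissible $\Gamma$ this forces $i$ to be matched, and I will arrange $R_{-i}$ so that the only object available to $i$ in a stable matching is $c$.

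\emph{Necessity of all $m$ entries.} Suppose $i$ submits a report $\widehat R_i$ with fewer than $m$ acceptable objects that contains $c$. Then her head $H$ at $c$ satisfies $|H|\le m-2$. If $H$ omits at least two of the $d_\ell$, I choose competitors with heads of the form $C\setminus\{c,d_\ell\}$. For a suitable $\ell$ we get $H\subsetneq C\setminus\{c,d_\ell\}$, so that competitor outranks $i$ at $c$ and blocks any stable matching giving $c$ to $i$.

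\emph{Main obstacle.} The hard case is $H=C\setminus\{c,d\}$ for a single omitted $d$, with $c$ ranked last. Here $H$ equals the head of the competitor $j_d$, so $\Gamma_c$ is unrestricted between $i$ and $j_d$, and deleting or keeping $d$ below $c$ makes no difference to $\Gamma_c$ at all. The missing entry $d$ must therefore matter through priorities at $d$ itself. My first attempt is a chain. Listing $d$ in $i$'s tail makes $i$'s head at $d$ strictly contain some other agent's head at $d$, so $i$ outranks that agent at $d$. That agent is then displaced toward $c$'s competitor structure, which frees $j_d$ from competing for $c$.

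Concretely, I would give $j_d$ the report $(\dots,d,c)$, ranking $d$ directly above $c$. I would add a further agent $k_d$ for whom $d$ is the only acceptable object, with head smaller than $i$'s head at $d$ when $i$ lists $d$. The intended effect is that $i$ listing $d$ (even below $c$) has two consequences: $k_d$ loses $d$ by reverse priority dominance, and $j_d$ obtains $d$ and no longer contests $c$. The delicate part is checking stability and rural hospitals for every completion of $\Gamma$ on the unnested comparisons. I would verify this step by step, one $d_\ell$ at a time, so that each of the $m-1$ objects other than $c$ is individually indispensable.
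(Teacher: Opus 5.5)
There is a genuine gap, and it sits in the case you flag as the main obstacle. That case cannot be closed as long as every step must hold for all admissible $\Gamma$ and all stable selections. Your attainability step requires every rival for $c$ to have a head strictly inside $C\setminus\{c\}$, because only then does reverse priority dominance force $i\,\Gamma_c\,j$ at the full report. But then, when $i$ drops a single $d$ and reports the access list with $d$ deleted, her head at $c$ is $C\setminus\{c,d\}$. No agent's head at $c$ strictly contains it, so reverse priority dominance forces nobody above $i$ at $c$.

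Since $\Gamma$ may depend on the whole profile and no independence is assumed, a Stable Reverse Priority mechanism can respond as follows. Let $\mu$ be the matching it selects at the full report. At the shortened report, it uses the same priorities, except that $i$ is placed at the top of $\Gamma_c$ and moved down, where reverse priority dominance now requires it, at the objects she listed after $d$. Then $\mu$ is still stable:
\begin{itemize}
\item $i$ keeps $c$, and nobody outranks her there;
\item at each object she still prefers to $c$, the holder outranked her before, and lowering $i$ preserves this;
\item $i$ no longer lists $d$, so she cannot block at $d$;
\item all comparisons among the other agents are unchanged.
\end{itemize}
Selecting $\mu$ at both profiles, and any stable matching elsewhere, gives a Stable Reverse Priority mechanism under which a report of length $m-1$ attains $c$.

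So no choice of $j_d$, $k_d$, or chains through priorities at $d$ can rescue your market. The chain also fails for a more basic reason: in the length-$(m-1)$ report $d$ is not listed at all, and nothing ties priorities at $d$ across the two profiles. Separately, your case where $H$ omits two objects needs a counting argument to ensure that the rival who outranks $i$ at $c$ actually prefers $c$ to her assignment. A safe alternative below $c$ does not supply this, since she may instead obtain an object from her head.

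The missing idea is to give up uniformity in the designated agent. The paper takes $m+1$ agents who all submit the same complete list $(a_1,\ldots,a_m)$, and lets $i$ be whichever agent $\varphi$ assigns the last object $a_m$. Attainability then comes for free, with no appeal to reverse priority dominance: every stable matching assigns all $m$ objects, since an unmatched agent and an unassigned object would block. Every rival has the full head $\{a_1,\ldots,a_{m-1}\}$ at $a_m$. Any report of length at most $m-1$ therefore gives $i$ a strictly smaller head, and reverse priority dominance puts all $m$ rivals above her at $a_m$. Only $m-1$ other objects exist, so if $i$ held $a_m$ some rival would be unmatched and would block with $a_m$.
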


\begin{proof}
Let $\varphi$ be a Stable Reverse Priority mechanism with
report-induced priority rule $\Gamma$. The statement is immediate for
$m=1$. Fix $m\geq 2$. Take $m+1$ agents and $m$ unit-capacity objects
$a_1,\ldots,a_m$, and let $R$ be the profile at which every agent
submits the common complete report $(a_1,a_2,\ldots,a_m)$. Every
stable matching at $R$ with respect to $\Gamma(R)$ assigns all $m$
objects, since otherwise an unmatched agent and an unassigned
acceptable object would form a blocking pair. Let $i$ be the agent
assigned $a_m$ by $\varphi(R)$.

Consider any report $R_i'$ of length at most $m-1$, and write
$R'=(R_i',R_{-i})$. If $R_i'$  omits $a_m$, individual rationality rules out
assigning $a_m$ to $i$. If it contains $a_m$, then it ranks at most $m-2$
objects above $a_m$. Hence, for every $j\neq i$,
\[
U_i(a_m;R_i')\subsetneq\{a_1,\ldots,a_{m-1}\}=U_j(a_m;R_j).
\]
Reverse priority dominance gives
$j\,\Gamma_{a_m}(R')\,i$ for every $j\neq i$.

Suppose that a matching $\mu$ which is stable at $R'$ with respect to
$\Gamma(R')$ assigns $a_m$ to $i$. There are $m$ other agents but only
$m-1$ other objects, so some other agent $j$ is unmatched in $\mu$.
Since $R_j$ is the complete report, $a_m\in A_j(R_j)$, and
$j\,\Gamma_{a_m}(R')\,i$. Hence $(j,a_m)$ blocks $\mu$,
a contradiction. 
Therefore, no stable matching at $R'$ with respect to $\Gamma(R')$
assigns $a_m$ to $i$.
This holds for every report $R_i'$ of length at most $m-1$, whereas
$\varphi_i(R)=a_m$. Hence $k_i^\varphi(a_m;R_{-i})=m$. Report length is
at most $m$, so $k_m(\varphi)=m$.

Since unit capacities are a special case of arbitrary capacities, the construction establishes the same lower bound for the many-to-one model.
\end{proof}

The reason for maximal width is especially simple. Given reverse priority dominance, objects ranked above the assigned object can keep the agent from losing priority at it. In the proof, everyone initially reports the same complete ranking.
If the agent who receives the last object removes even one object
from above it, every other agent has higher induced priority than her
at this object. The agent therefore has to keep all of these objects in her report, and their number grows with the market. 
By contrast, for the singleton-attainable Stable Monotone Priority mechanisms, ranking the assigned object higher cannot weaken the agent's induced priority for it, so the target is reached by its singleton report, without any need to rank objects above it.

\begin{corollary}[Primary mechanism in the Stable Reverse Priority class]\label{cor:rank-equity-srp}
Rank-Equity has maximal report width. 
\end{corollary}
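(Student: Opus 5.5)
The plan is to show that Rank-Equity belongs to the Stable Reverse Priority class; Theorem~\ref{thm:srp-width} then gives maximal report width directly. Recall the definition in Appendix~\ref{app:mechanisms}. Rank-Equity runs DA on report-induced priorities $\Gamma(R)$. At each object $c$, agents who declare $c$ acceptable are ordered by the rank at which they list $c$, with a higher rank position (listing $c$ lower) giving higher priority. Equal ranks are broken by the exogenous object-specific tie-breaker $\pi_c$. Agents who do not declare $c$ acceptable are placed below in some fixed way, say by $\pi_c$. Because DA selects a stable matching with respect to the priorities it is run on, $\varphi(R)$ is stable at $R$ with respect to $\Gamma(R)$ at every profile $R$.

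The key step is to verify reverse priority dominance for $\Gamma$. Fix $c$, a profile $R$, and distinct agents $i,j$ who both declare $c$ acceptable, with $U_i(c;R_i)\subsetneq U_j(c;R_j)$. A strict inclusion of finite sets gives
\[
|U_i(c;R_i)|<|U_j(c;R_j)|,
\]
so $\rk_j(c;R_j)=|U_j(c;R_j)|+1>\rk_i(c;R_i)$. Agent $j$ therefore ranks $c$ strictly lower than $i$ does, and under the Rank-Equity rule she has strictly higher induced priority at $c$, that is, $j\,\Gamma_c(R)\,i$. The tie-breaker is never invoked here, because the ranks differ. This is exactly the cardinality-based strengthening noted after the definition of reverse priority dominance, and it implies the inclusion-based condition.

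With both facts in place, Rank-Equity is a Stable Reverse Priority mechanism, and Theorem~\ref{thm:srp-width} yields $k_m=m$ for every $m$.

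The main obstacle is making sure the many-to-one, incomplete-report extension in Appendix~\ref{app:mechanisms} matches the description above: induced priority must depend on rank only among agents who list $c$, and ties must be broken independently of rank. If the appendix instead uses an "unacceptable counts as rank $m+1$" convention, nothing changes, since reverse priority dominance only constrains pairs in which both agents declare $c$ acceptable. As an alternative check, the construction in the proof of Theorem~\ref{thm:srp-width} can be verified directly for Rank-Equity, in the spirit of Example~\ref{ex:rank-equity}.
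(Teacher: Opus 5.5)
Your proposal is correct and follows essentially the same route as the paper. You note that DA on the induced priorities gives a matching stable with respect to $\Gamma(R)$, verify reverse priority dominance via $U_i(c;R_i)\subsetneq U_j(c;R_j)\Rightarrow \rk_i(c;R_i)<\rk_j(c;R_j)$ (with the tie-breaker irrelevant), and then invoke Theorem~\ref{thm:srp-width}, exactly as the paper does.
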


\begin{proof}
Let $\Gamma$ denote the report-induced priority rule of Rank-Equity.
At every profile $R$, Rank-Equity selects the agent-optimal stable
matching with respect to $\Gamma(R)$. Fix a profile $R$, an object
$c$, and distinct agents $i$ and $j$ who both declare $c$ acceptable.
If $U_i(c;R_i)\subsetneq U_j(c;R_j)$, then
$\rk_i(c;R_i)<\rk_j(c;R_j)$, so
$j\,\Gamma_c(R)\,i$. The object-specific tie-breaker is irrelevant
because the reported ranks differ. Thus, $\Gamma$ satisfies reverse
priority dominance, and Rank-Equity is a Stable Reverse Priority
mechanism. Theorem~\ref{thm:srp-width} therefore implies that it has
maximal report width.
\end{proof}

\subsection{Reported Rank Welfare mechanisms}\label{sec:reported-outside}

Another source of indirect access arises when rank-welfare comparisons use the reported position of the outside option. 
We retain the welfare requirements of
Section~\ref{sec:fixed-outside} but use the reported assignment-rank list. In an
individually rational matching $\mu$, agent $i$'s \emph{reported assignment rank}
$\rho_i^P(\mu_i;R_i)$ is $\rk_i(\mu_i;R_i)$ if she is matched and $|A_i(R_i)|+1$ if she
receives the outside option. Write $\rho^P(\mu;R)=(\rho_i^P(\mu_i;R_i))_{i\in I}$. The
only change from the fixed convention is that the outside option has rank $|A_i(R_i)|+1$
rather than $m+1$. This is the \textit{reported} convention for treating the outside option, and it is used by three families of primary mechanisms, Reported Rawlsian, Reported Additive Rank Scoring, and Reported Rank-Maximal mechanisms.

Reported Rawlsian mechanisms apply the Rawlsian criterion to the reported rank list $\rho^P$.
Reported Additive Rank Scoring mechanisms apply the additive rank scoring criterion to the reported rank list $\rho^P$, with $g:\{1,\ldots,m+1\}\to\mathbb R$ strictly increasing. Reported Rank-Maximal mechanisms apply the rank-maximal criterion to $\rho^P$. All three families use report-independent tie-breaks. We show next that these families do not satisfy TRI and SA.

\begin{example}[Reported Rawlsian, Reported Additive Rank Scoring, and Reported Rank-Maximal mechanisms violate TRI and SA]\label{ex:reported}
Take three agents and three unit-capacity objects $a,b,c$, and let all three agents report $(a,b,c)$. 

A Reported Rawlsian matching assigns all three objects: a complete assignment has worst reported rank three, whereas any matching leaving an agent unmatched has worst reported rank four. Let $i$ be the agent assigned $a$ by a Reported Rawlsian mechanism. If $i$ truncates below $a$ and reports only $(a)$, her outside option moves to rank two. Any matching that still gives $a$ to $i$ either assigns some other agent to $c$ or leaves an agent unmatched at rank four, so its worst reported rank is at least three. By contrast, leaving $i$ unmatched and assigning the other two agents to $a$ and $b$ has worst reported rank two. Hence no Reported Rawlsian matching at the truncated profile assigns $a$ to $i$.

At the same profile, let $i$ be the agent assigned $a$ by a Reported Additive Rank Scoring mechanism. At the complete profile the minimum total rank score is $g(1)+g(2)+g(3)$. If $i$ truncates to $(a)$, any matching that still assigns $a$ to $i$ has total rank score at least $g(1)+g(2)+g(3)$, whereas leaving $i$ unmatched and assigning the other two agents to $a$ and $b$ has total rank score $g(1)+2g(2)$, which is strictly lower because $g(2)<g(3)$. 
Thus, no score-minimizing matching at the truncated profile assigns $a$ to $i$. 

The same profile gives the result for Reported Rank-Maximal mechanisms. At the complete profile, a rank-maximal matching assigns all three objects. Let $i$ be the agent assigned $a$. If $i$ truncates to $(a)$, any matching that still assigns $a$ to $i$ has one rank-one assignment and at most one rank-two assignment. By contrast, leaving $i$ unmatched and assigning $a$ and $b$ to the other two agents gives one rank-one assignment and two rank-two assignments, since $i$'s outside option now has rank two. The latter rank-count list is lexicographically better. Hence no Reported Rank-Maximal mechanism assigns $a$ to $i$ at the truncated profile. 

For all three families, deleting the objects ranked below $a$ from the report changes the reported rank of the outside option and alters the agent's assignment. Hence, none satisfy TRI or SA. 
 \hfill$\diamond$
\end{example}

All three families in Example~\ref{ex:reported} point to the same source of the violation of singleton-attainability: the treatment of the outside option. Under the reported convention, deleting objects below the assignment moves the rank of the outside option upward, while adding such objects pushes it downward. Tail entries can therefore change the welfare comparison without changing the rank of the assignment itself.

Recall the definitions of a welfare ordering $\unrhd$ and rank
monotonicity, together with the strictness requirement, from
Section~\ref{sec:fixed-outside}. The definition of the next structural
class, Reported Rank Welfare, uses the same welfare requirements but
applies them to reported rather than fixed assignment ranks.
Accordingly, an individually rational matching $\mu$ is
\textit{welfare-maximal} at profile $R$ if,
for every individually rational
matching $\nu$ at $R$,
$\rho^P(\mu;R)\unrhd\rho^P(\nu;R)$. 
Fix a strict report-independent ordering $\sigma$ of feasible
matchings to break welfare ties.

\defhead{Structural Class: Reported Rank Welfare}
A mechanism $\varphi$ is a \textit{Reported Rank Welfare mechanism}
if there exists a rank-monotonic welfare ordering $\unrhd$ satisfying
the strictness requirement such
that, at every profile $R$, $\varphi(R)$ is the $\sigma$-highest
among the individually rational matchings $\mu$ whose reported
assignment-rank list $\rho^P(\mu;R)$ is maximal according to
$\unrhd$.

\medskip

Thus, at every profile, the selected matching is welfare-maximal when
welfare is evaluated using reported assignment ranks, and it is the
$\sigma$-highest among all welfare-maximal matchings.

\begin{theorem}[Reported Rank Welfare]\label{thm:reported-outside}
Every Reported Rank Welfare mechanism has maximal report width.
\end{theorem}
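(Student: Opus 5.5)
The plan is to adapt the construction in the proof of Theorem~\ref{thm:srp-width}, using $m+1$ agents competing for $m$ objects. The case $m=1$ is immediate from the normalization, since report length is at most one. Fix $m\ge2$. Take $m+1$ agents and $m$ unit-capacity objects $a_1,\ldots,a_m$, and let $R$ be the profile at which every agent reports the common complete list $(a_1,\ldots,a_m)$. The steps are: show that $\varphi(R)$ assigns all $m$ objects, pick the agent $i$ who receives $a_1$, and show that no report of length at most $m-1$ gives $a_1$ to $i$. Then $k_i^\varphi(a_1;R_{-i})=m$, and since report length never exceeds $m$, we get $k_m(\varphi)=m$.

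\emph{Step 1: all objects are assigned at $R$.} Let $\nu$ be any individually rational matching at $R$. Its matched agents hold distinct objects and therefore have distinct reported ranks in $\{1,\ldots,m\}$. Every unmatched agent has reported rank $m+1$, and at least one agent is unmatched because there are $m+1$ agents and $m$ objects. So the sorted list satisfies $\rho^P(\nu;R)^\rightarrow_\ell\ge\ell$ for every $\ell$. A matching assigning all $m$ objects has sorted list $(1,2,\ldots,m,m+1)$, which therefore rank-dominates every individually rational matching. Whenever $\nu$ leaves an object unassigned, $\nu$ has at least two entries equal to $m+1$, while the complete assignment has exactly one. The strictness requirement then makes the complete assignment strictly better. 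Hence every welfare-maximal matching, and in particular $\varphi(R)$, assigns $a_1$ to some agent. Call this agent $i$; the object $a_1$ is then attainable for $i$ at $R_{-i}$.

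\emph{Step 2: short reports fail.} Let $R_i'$ declare $k\le m-1$ objects acceptable, and write $R'=(R_i',R_{-i})$. If $a_1\notin A_i(R_i')$, individual rationality rules out $a_1$. Otherwise, compare two matchings at $R'$:
\begin{itemize}
\item $\nu$ leaves $i$ unmatched and assigns $a_1,\ldots,a_m$ to the $m$ other agents. Its rank multiset is $\{1,\ldots,m\}\cup\{k+1\}$ with $k+1\le m$, so its sorted list is pointwise at most $(1,2,\ldots,m,m+1)$ and has no entry $m+1$.
\item $\mu$ is any individually rational matching with $\mu_i=a_1$. Agent $i$ has rank at least $1$. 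The other $m$ agents share only the $m-1$ objects $a_2,\ldots,a_m$, whose common ranks are distinct values in $\{2,\ldots,m\}$. So at least one other agent is unmatched with rank $m+1$, and the sorted list of $\mu$ is pointwise at least $(1,2,\ldots,m,m+1)$.
\end{itemize}
Thus $\rho^P(\nu;R')\succeq_{\mathrm{rk}}\rho^P(\mu;R')$, and $\nu$ has fewer entries equal to $m+1$. The strictness requirement makes $\nu$ strictly better, so no welfare-maximal matching at $R'$ assigns $a_1$ to $i$, whatever the tie-break $\sigma$. This gives $k_i^\varphi(a_1;R_{-i})=m$, and hence $k_m(\varphi)=m$ for every $m$.

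The main obstacle is that the welfare ordering is only assumed rank-monotonic with the strictness requirement. A mere weak rank-dominance comparison could be undone by $\sigma$. This is why the construction uses $m+1$ agents: any matching giving $a_1$ to $i$ is then forced to leave another agent unmatched at rank $m+1$, which triggers the strictness requirement. Moving $i$'s outside option to rank $k+1\le m$ is what makes dropping $i$ preferable. The delicate part of the check is the pointwise sorted comparison after inserting $k+1$ into $\{1,\ldots,m\}$.
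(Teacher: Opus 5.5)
Your proof is correct and follows essentially the same route as the paper: the same market with $m+1$ agents and $m$ unit-capacity objects under the common complete report, the same choice of the agent who receives $a_1$, and the same use of the strictness requirement through the extra entry equal to $m+1$. The only difference is in how rank dominance is checked: you compare every relevant sorted list with the fixed benchmark $(1,2,\ldots,m,m+1)$, whereas the paper uses local moves, namely assigning an unassigned object to an unmatched agent, and reassigning $a_1$ from $i$ to an unmatched agent $j$, which turns the rank pair $(p,m+1)$ into $(k+1,1)$.
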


\begin{proof}
Let $\varphi$ be a Reported Rank Welfare mechanism with welfare
ordering $\unrhd$.
The statement is immediate for $m=1$. Fix $m\ge2$. Take $m+1$ agents and $m$ unit-capacity objects $a_1,\ldots,a_m$, and let $R$ be the profile at which every agent reports $(a_1,a_2,\ldots,a_m)$.

Every welfare-maximal matching at $R$ assigns all objects. Otherwise there is both an unmatched agent and an unassigned object. Assigning that object to the unmatched agent produces another individually rational matching and changes one reported rank from $m+1$ to at most $m$, leaving all other ranks unchanged. The new rank list rank-dominates the old one and has fewer coordinates equal to $m+1$, so $\unrhd$ ranks it strictly above the old one.

Let $i$ be the agent whom the mechanism assigns $a_1$ at $R$. Now let $R_i'$ have length $k\le m-1$, and write $R'=(R_i',R_{-i})$. If $R_i'$ omits $a_1$, individual rationality rules out assigning $a_1$ to $i$. Suppose instead that $a_1$ appears in position $p\le k$, and consider any individually rational matching $\mu$ at $R'$ that assigns $a_1$ to $i$. Since there are $m+1$ agents and only $m$ objects, some other agent $j$ is unmatched. Reassign $a_1$ from $i$ to $j$, leave $i$ unmatched, and leave every other assignment unchanged. Call the resulting matching $\nu$, which is individually
rational.

Only the reported ranks of $i$ and $j$ change. Agent $i$'s
report has length $k$, so leaving her unmatched gives her
reported rank $k+1$. Agent $j$ still reports
$(a_1,a_2,\ldots,a_m)$, so assigning her $a_1$ changes her
reported rank from $m+1$ to $1$. Thus, the pair of reported
ranks for $(i,j)$ changes from $(p,m+1)$ to $(k+1,1)$. After sorting, $(1,k+1)$ is componentwise weakly below $(p,m+1)$ 
because $1 \le p$ and $k + 1 \le m < m + 1$. 
Since all other reported ranks are unchanged, replacing the
sorted pair $(p,m+1)$ by $(1,k+1)$ weakly lowers every entry
of the full sorted reported rank list.
Thus, the reported rank list of $\nu$ rank-dominates that
of $\mu$.
It also has one fewer coordinate equal to $m+1$. Therefore, $\unrhd$ ranks the reported rank list of $\nu$ strictly above that of $\mu$.

Thus, no welfare-maximal matching at $R'$ assigns $a_1$ to $i$. This holds for every report $R_i'$ of length at most $m-1$, while the complete report of length $m$ assigns $a_1$ to $i$ at $R$. Hence $k_i^\varphi(a_1;R_{-i})=m$. Report length is at most $m$, so $k_m(\varphi)=m$. Since unit capacities are a special case of arbitrary capacities, the construction establishes the same lower bound for the many-to-one model.
\end{proof}

The contrast with the fixed class is especially revealing.
Fix a welfare ordering $\unrhd$ satisfying the requirements of both
rank-welfare classes and a strict report-independent ordering $\sigma$
of feasible matchings, and let $\varphi^F$ and $\varphi^P$ be the
corresponding fixed and reported mechanisms.
They select the same matching at every complete report
profile, since their rank lists coincide there. The only difference is the rank
assigned to the outside option in incomplete reports. By
Theorems~\ref{thm:fixed-outside} and~\ref{thm:reported-outside} and
Observation~\ref{obs:width-one}, for every $m\ge 1$,
$\, k_m(\varphi^F)=1$ and $k_m(\varphi^P)=m$. 
Thus, changing only the outside option convention changes report width from one to maximal in the two structural classes. The fixed convention keeps its rank at $m+1$, while the reported convention makes it depend on report length. With the reported convention, preserving access to an object may require listing all $m$ objects, resulting in maximal width.  With the fixed convention, shortening the tail leaves the outside option rank unchanged, eliminating this source of indirect access.

\begin{corollary}[Primary mechanisms and families in the Reported Rank Welfare class]\label{cor:reported-rank-welfare-applications}
Reported Rawlsian mechanisms, Reported Additive Rank Scoring mechanisms, and Reported Rank-Maximal mechanisms have maximal report width.
\end{corollary}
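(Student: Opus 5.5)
The plan is to reduce the corollary to Theorem~\ref{thm:reported-outside}. I would show that each of the three primary families is a Reported Rank Welfare mechanism, so that maximal width follows directly. Membership requires two things. First, each family must select, at every profile, the $\sigma$-highest individually rational matching among those whose reported rank list $\rho^P(\mu;R)$ is maximal under its welfare ordering. Second, that welfare ordering must be rank-monotonic and satisfy the strictness requirement.

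The first requirement holds by construction. Each family applies its criterion to $\rho^P$ over individually rational matchings and breaks ties with a report-independent ordering, so we take $\sigma$ to be that ordering.

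The second requirement is a property of the welfare ordering as an ordering of rank lists in $\{1,\ldots,m+1\}^I$. It does not depend on how the lists arise from reports. I would therefore reuse the argument in the proof of Corollary~\ref{cor:fixed-rank-welfare-applications} verbatim. The three checks are as follows.
\begin{itemize}
\item \emph{Rawlsian.} Compare the maxima of the two lists, then the counts at the common maximum. A list with fewer entries equal to $m+1$ has either a smaller maximum or fewer entries at the maximum $m+1$.
\item \emph{Additive scoring.} Sum the inequalities $g(x^\rightarrow_\ell)\le g(y^\rightarrow_\ell)$. Strictness follows because some sorted coordinate must drop below $m+1$.
\item \emph{Rank-maximal.} Use the cumulative counts $|\{i:x_i\le r\}|\ge|\{i:y_i\le r\}|$ and argue from the first rank at which the counts differ.
\end{itemize}

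One point needs care, and it is the main thing to check. In the Fixed Rank Welfare corollary, entries equal to $m+1$ corresponded exactly to unmatched agents. Under the reported convention this fails: an unmatched agent has rank $|A_i(R_i)|+1$, which may be smaller than $m+1$. However, the strictness requirement is stated purely in terms of entries equal to $m+1$, and the three verifications never used the identification with unmatched agents. The only property needed is that rank $m+1$ is the largest possible rank. The proof of Theorem~\ref{thm:reported-outside} likewise uses strictness only in this purely list-theoretic form. So no new argument is needed, only the observation that the earlier verifications are convention-free.

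With membership established, Theorem~\ref{thm:reported-outside} gives $k_m(\varphi)=m$ for every $m\ge1$, which is maximal width for every member of each family. Example~\ref{ex:reported} already illustrates the $m=3$ case.
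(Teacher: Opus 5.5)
Your proposal is correct and follows the paper's proof. The paper establishes membership of the three families in the Reported Rank Welfare class by reusing the rank-monotonicity and strictness checks from the proof of Corollary~\ref{cor:fixed-rank-welfare-applications}, and then applies Theorem~\ref{thm:reported-outside}. Your observation that those checks are purely list-theoretic, using only that $m+1$ is the largest possible rank, makes explicit a point the paper leaves implicit: the verifications are unaffected by the reported outside-option convention.
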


\begin{proof}
The proof of Corollary~\ref{cor:fixed-rank-welfare-applications}
verifies rank monotonicity and the strictness requirement for all
three welfare orderings. Their reported versions are therefore
Reported Rank Welfare mechanisms, and
Theorem~\ref{thm:reported-outside} implies that they have maximal report width. 
\end{proof}

\subsection{One-sided invariance}\label{sec:complementary-invariance}

Examples~\ref{ex:da-improvements-ti}--\ref{ex:reported} identify whether the failure of singleton-attainability arises from the head or the tail of the report. We now ask whether the same mechanisms remain invariant to changes on the other side of the assignment. The dependence turns out to be one-sided for all the non-SA primary mechanisms and families. In addition, TLI holds for every mechanism in the Reported Rank Welfare class.

\begin{proposition}[One-sided invariance]\label{prop:complementary-invariance}
EADA, DA-TTC, and every Reported Rank Welfare mechanism satisfy TLI. Rank-Equity
satisfies TRI.
\end{proposition}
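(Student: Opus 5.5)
The four claims split into two easy ones and one hard one, and I would take the easy ones first. \emph{Reported Rank Welfare, TLI.} I would copy the TLI part of the proof of Theorem~\ref{thm:fixed-outside} almost verbatim. Lifting $c$ to the top of $\widehat R_i$ does not change $A_i(\widehat R_i)$, so $|A_i|$ is unchanged and the reported rank $|A_i(\widehat R_i)|+1$ of the outside option is unchanged. The three facts that argument needs then hold for $\rho^P$ as they did for $\rho^F$:
\begin{itemize}
\item in $\mu=\varphi(\widehat R)$, agent $i$'s rank weakly drops to $1$;
\item in any other individually rational $\nu$, agent $i$'s rank weakly rises;
\item the set of individually rational matchings is the same at both profiles.
\end{itemize}
The chain of $\unrhd$-comparisons and the report-independent tie-break $\sigma$ then give $\varphi_i(\widehat R_i\lift{c},R_{-i})=c$. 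No strictness requirement is needed here.

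\emph{Rank-Equity, TRI.} Let $\widehat R_i$ be a report with $\varphi_i(\widehat R_i,R_{-i})=c$, and let $\widehat R_i\trunc{c}$ be its truncation. I would run agent-proposing DA with the induced priorities $\Gamma(\widehat R)$ and compare it with the run under $\Gamma(\widehat R_i\trunc{c},R_{-i})$. Two preliminary facts are needed:
\begin{itemize}
\item An agent $j\neq i$'s induced priority at an object depends only on $j$'s own reported rank and the tie-breaker, so comparisons among agents other than $i$ are unaffected by $i$'s report.
\item Truncation below $c$ leaves $\rk_i(d;\cdot)$ unchanged for $c$ and for every $d$ ranked above $c$. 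So $i$'s induced priority at every object she still lists is unchanged.
\end{itemize}
In the original DA run, $i$ ends at $c$, and DA never lets an agent propose below her final assignment. So she never proposes to a deleted object, and her priority at deleted objects never enters any comparison. I would then show by induction on rounds that the proposal and rejection sequence is identical in the truncated run. It follows that $i$ again obtains $c$.

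\emph{EADA and DA-TTC, TLI.} I expect this to be the main obstacle, because DA itself is bossy: lifting $c$ can change other agents' DA assignments and hence the endowments for the trading stage. Let $d=\varphi^{DA}_i(\widehat R)$; then $d$ is $c$ or ranked below $c$.

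First I would handle $d=c$. TLI for DA (Proposition~\ref{prop:sp-inv}) keeps $i$ at $c$ after the lift. At the lifted profile $i$ holds her top choice, so she has no outgoing claim, lies in no cycle, and is settled at $c$ in the first settlement under either orientation.

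When $d\neq c$, I would first try to use the representations established in Appendix~\ref{app:positional}. DA-TTC is the top claimant rule and EADA is the top holder (Top Priority) rule. In the top-claimant orientation, with the same DA endowments, lifting $c$ moves $c$ to the top of $i$'s order and changes nothing else. I would compare the two executions cycle by cycle: before $i$ trades, her pointing in the original run was to objects above $c$ that she did not get, and in the lifted run she points at $c$ directly.

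The difficulty is that the endowments themselves may change. To control this I would use the Kesten interrupter characterization of EADA. EADA's outcome is DA after the interrupter claims are removed, and $i$'s claims to objects above $c$ are either removed as interrupter claims or irrelevant. I would aim to show that lifting $c$ only deletes claims of $i$ that EADA neutralizes anyway. That would give equal final matchings, and I would try to transfer the conclusion to DA-TTC through the same claim-removal picture.
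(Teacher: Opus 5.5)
Your arguments for TLI of Reported Rank Welfare mechanisms and TRI of Rank-Equity are correct and coincide with the paper's. The gap is in the main case for EADA and DA-TTC, $d\neq c$. You do not prove TLI there, and the obstacle you name does not actually arise.

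\emph{Why the endowments do not change.} Both mechanisms weakly improve on DA, so $c\,P_i\,d$. Hence the lift only permutes objects inside $U_i(d;\widehat R_i)$ and leaves that set, and everything from $d$ downward, unchanged.
\begin{itemize}
\item Strategyproofness of DA keeps $i$ at $d$. If she received $d'\neq d$ at $(\widehat R_i\lift{c},R_{-i})$, strategyproofness at the lifted report would put $d'$ in $U_i(d;\widehat R_i\lift{c})=U_i(d;\widehat R_i)$. That contradicts strategyproofness at $\widehat R_i$.
\item The whole DA matching is then unchanged. The two DA matchings give $i$ the same object with the same upper contour set, so each is stable at both profiles. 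Agent-optimality at each profile then forces every $j\neq i$ to receive the same object in both. The paper obtains this step from IR-monotonicity (Kojima and Manea, 2010, Theorem~1).
\end{itemize}
So the trading-stage endowments are identical, and the bossiness of DA plays no role here. The detour through Kesten's interrupter characterization is unnecessary, and it is never carried out. Your plan to transfer the EADA conclusion to DA-TTC ``through the same claim-removal picture'' also has no evident counterpart, since DA-TTC has no interrupter-deletion representation.

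\emph{Finishing from fixed endowments.}
\begin{itemize}
\item For DA-TTC, TTC with fixed endowments is strategyproof (Roth, 1982b). It therefore satisfies TLI by Proposition~\ref{prop:sp-inv}, so $i$ still receives $c$.
\item For EADA, the paper simply invokes positive association (Kesten, 2010, Lemma~A.4) together with individual rationality.
\item Alternatively, for EADA you can use your own top-holder representation. While $i$ holds an object ranked below $c$, her set of strict improvements is the same under $\widehat R_i$ and $\widehat R_i\lift{c}$. Holder-side selection depends only on these sets and on priorities, so the two runs coincide until she receives $c$. After that she holds her top choice and is settled.
\end{itemize}
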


\begin{proof}
See Appendix~\ref{app:complementary-invariance}.
\end{proof}

Together with the violations demonstrated in Examples~\ref{ex:da-improvements-ti}--\ref{ex:reported}, this proposition completes the head-tail description of the primary mechanisms and families classified in Table~\ref{tab:class}.  
EADA, DA-TTC, Reported Rawlsian mechanisms, Reported Additive Rank Scoring mechanisms, and Reported Rank-Maximal mechanisms satisfy TLI but violate TRI: they are tail-dependent.\footnote{There is a parallel one-sided pattern for EADA in terms of incentives. Afacan et al.\ (2026) show that EADA is upper-manipulation-proof, while every Pareto improvement over DA must violate lower-manipulation-proofness. These are incentive properties, whereas TLI and TRI are assignment-invariance conditions.} Rank-Equity satisfies TRI but violates TLI: it is head-dependent. Put more concretely, lifting the assigned object does not change the assignment for the five tail-dependent mechanisms and families, while deleting everything below the assigned object does not change the assignment under Rank-Equity. Although one-sided invariance holds for all of the non-SA primary mechanisms and families classified in Table~\ref{tab:class}, mechanisms in general may violate both TLI and TRI.

\section{Structural classification and the width dichotomy}\label{sec:class-revisited}

\subsection{Structural classification}

The primary classification in Table~\ref{tab:class} can now be understood through the lens of the structural results presented in Sections~\ref{sec:causes-sa} and~\ref{sec:failure} (Theorems~\ref{thm:smp}, \ref{thm:fixed-outside}, \ref{thm:positional-width}, \ref{thm:srp-width}, and~\ref{thm:reported-outside}). 
We studied three pairs of contrasting structural classes.
Strategyproof mechanisms have width one, whereas Positional DA-TC
mechanisms have unbounded width. Stable Monotone Priority mechanisms have width one, whereas Stable Reverse Priority mechanisms have maximal width. Fixed Rank
Welfare mechanisms have width one, whereas Reported Rank Welfare
mechanisms have maximal width.
Table~\ref{tab:classes} groups the primary mechanisms and families
according to these six structural classes. 

The structural classes need not be disjoint, and the table does not
display every formal class membership. It groups the primary
mechanisms and families according to the structural explanations
developed in the paper. In particular, DA and Serial Dictatorship are
both individually rational and strategyproof and also belong to the
Stable Monotone Priority class.\footnote{Serial Dictatorship is a special case of DA when
every object uses the fixed agent order as its priority.
Its Stable Monotone Priority representation uses these common
priorities and does not satisfy stability with respect to an arbitrary priority profile.}

\begin{table}[H]
\centering
\scriptsize
\renewcommand{\arraystretch}{1.20}
\begin{tabularx}{\textwidth}{
>{\raggedright\arraybackslash}X
>{\centering\arraybackslash}p{0.14\textwidth}|
>{\raggedright\arraybackslash}X
>{\centering\arraybackslash}p{0.14\textwidth}}
\toprule
\multicolumn{2}{c|}{\textbf{Width one: SA}}&
\multicolumn{2}{c}{\textbf{Unbounded width}}\\
\cmidrule(lr){1-2}\cmidrule(lr){3-4}
\textit{Structural class} & \textit{Width} &
\textit{Structural class} & \textit{Width}\\
\hspace*{0.9em}\textit{and included primary} &  &
\hspace*{0.9em}\textit{and included primary} & \\
\hspace*{0.9em}\textit{mechanisms and families} &  &
\hspace*{0.9em}\textit{mechanisms and families} & \\
\midrule
\begin{minipage}[t]{\linewidth}\raggedright
\textbf{Strategyproof mechanisms}\\[0.18em]
\hspace*{0.9em}DA\\[0.23em]
\hspace*{0.9em}Serial Dictatorship\\[0.23em]
\hspace*{0.9em}TTC
\end{minipage}
 & \begin{minipage}[t]{\linewidth}\vspace{-0.70em}\centering
Width one: SA\\
(Proposition~\ref{prop:sp-inv},\\
Observation~\ref{obs:width-one})
\end{minipage}
&
\begin{minipage}[t]{\linewidth}\raggedright
\textbf{Positional DA-TC mechanisms}\\[0.18em]
\hspace*{0.9em}EADA\\[0.23em]
\hspace*{0.9em}DA-TTC
\end{minipage}
& \begin{minipage}[t]{\linewidth}\vspace{-0.70em}\centering Unbounded width\\
(Theorem~\ref{thm:positional-width},\\ Corollary~\ref{cor:eada-dattc-width})
\end{minipage}
\\
\midrule
\begin{minipage}[t]{\linewidth}\raggedright
\textbf{Stable Monotone Priority mechanisms}\\[0.18em]
\hspace*{0.9em}Object-Proposing DA\\[0.23em]
\hspace*{0.9em}IA\\[0.23em]
\hspace*{0.9em}PRP mechanisms\\[0.23em]
\hspace*{0.9em}Taiwan Deduction\\
\hspace*{1.8em}mechanisms
\end{minipage}
& \begin{minipage}[t]{\linewidth}\vspace{-0.70em}\centering Width one: SA\\(Theorem~\ref{thm:smp}, \\   Corollary~\ref{cor:smp-applications})\end{minipage}
&
\begin{minipage}[t]{\linewidth}\raggedright
\textbf{Stable Reverse Priority mechanisms}\\[0.18em]
\hspace*{0.9em}Rank-Equity
\end{minipage}
& \begin{minipage}[t]{\linewidth}\vspace{-0.70em}\centering Maximal width\\
(Theorem~\ref{thm:srp-width},\\ Corollary~\ref{cor:rank-equity-srp})
\end{minipage}
\\
\midrule
\begin{minipage}[t]{\linewidth}\raggedright
\textbf{Fixed Rank Welfare mechanisms}\\[0.18em]
\hspace*{0.9em}Fixed Rawlsian\\
\hspace*{1.8em}mechanisms\\[0.23em]
\hspace*{0.9em}Fixed Additive Rank\\
\hspace*{1.8em}Scoring mechanisms\\[0.23em]
\hspace*{0.9em}Fixed Rank-Maximal\\
\hspace*{1.8em}mechanisms
\end{minipage}
& \begin{minipage}[t]{\linewidth}\vspace{-0.70em}\centering Width one: SA\\(Theorem~\ref{thm:fixed-outside},\\ Corollary~\ref{cor:fixed-rank-welfare-applications})\end{minipage}
&
\begin{minipage}[t]{\linewidth}\raggedright
\textbf{Reported Rank Welfare mechanisms}\\[0.18em]
\hspace*{0.9em}Reported Rawlsian\\
\hspace*{1.8em}mechanisms\\[0.23em]
\hspace*{0.9em}Reported Additive Rank\\
\hspace*{1.8em}Scoring mechanisms\\[0.23em]
\hspace*{0.9em}Reported Rank-Maximal\\
\hspace*{1.8em}mechanisms
\end{minipage}
& \begin{minipage}[t]{\linewidth}\vspace{-0.70em}\centering Maximal width\\
(Theorem~\ref{thm:reported-outside},\\ Corollary~\ref{cor:reported-rank-welfare-applications})
\end{minipage}
\\
\bottomrule
\end{tabularx}
\caption{Structural classes and their width}
\label{tab:classes}
\end{table}

\subsection{Width dichotomy}\label{sec:width-dichotomy}

The preceding structural theorems reveal a distinct pattern at both
the primary mechanism and structural class levels, as summarized in
Tables~\ref{tab:class} and~\ref{tab:classes}: no intermediate bounded
report width case arises naturally among the broad range of mechanisms
and mechanism classes that we study.

\medskip
\begin{center}
\begin{minipage}{0.9\textwidth}
\textbf{Width dichotomy.} \emph{Every primary mechanism  and family classified in Table~\ref{tab:class} and every structural mechanism class classified in Table~\ref{tab:classes} has either width one or unbounded width.}
\end{minipage}
\end{center}
\medskip

\noindent 
The division is significant, but it is not a verdict against non-SA
mechanisms. For two of the non-SA sources we study, the loss of SA
is inherent in the objective. No SA mechanism can Pareto-improve a strategyproof, individually rational, and non-wasteful mechanism
(Theorem~\ref{thm:pareto-improvement-nonsa}), and no SA mechanism can
meet the reverse-rank equity criterion through stable selection
(Theorem~\ref{thm:srp-width}). Our results show that for the corresponding structural classes
in Table~\ref{tab:classes}, the gap is large: Positional DA-TC
mechanisms have unbounded width, while Stable Reverse Priority
mechanisms have maximal width.
The contrast between the Stable Monotone Priority and Stable Reverse
Priority classes is especially clear: in the former, moving the target
object upward cannot weaken the agent's priority for it, whereas in the
latter, ranking objects above the target may be needed to keep rivals from
moving ahead of her.

The dichotomy is more surprising for the two rank-welfare classes and the rank-welfare-based primary families. For these mechanisms the contrast is driven by a natural but contestable design
choice: whether the rank of the outside option is fixed or depends
on the submitted report.
For the primary families, making this rank report-dependent creates
tail dependence (Example~\ref{ex:reported} and
Proposition~\ref{prop:complementary-invariance}), while for the
entire Reported Rank Welfare class it generates maximal report width.
Therefore, this severe form of indirect access is not inherent in the underlying rank-welfare
objective: for each Reported Rank Welfare mechanism, keeping its
welfare ordering and tie-break but fixing the outside option rank
at $m+1$ gives a Fixed Rank Welfare mechanism, restoring
singleton-attainability and width one. However, since this change
also affects outcomes, the fixed convention need not be preferable.

Violating SA does not logically imply unbounded width. The artificial Pareto improvement over DA constructed in Section~\ref{sec:da-width} has report width exactly two, illustrating Theorem~\ref{thm:pareto-improvement-nonsa} which holds without requiring Pareto efficiency.  The example demonstrates this by modifying DA
only at one report profile in a two-object market.
In that market an agent can obtain an object by listing one additional
object, although she cannot obtain it with its singleton report.
In every other market, the mechanism coincides with DA and
provides singleton access to every attainable object. Thus,
bounded width greater than one can be engineered, for example, by confining the
departure from DA to a fixed small market. Similar bounded-width examples can also be found easily, such as variants of Serial Dictatorship which order agents according to the size of their report.

By contrast, the broad collection of primary mechanisms and the mechanisms in the structural classes are motivated by objectives and design principles
unrelated to report width, making it notable that none exhibits
intermediate bounded width.
Our structural results suggest why the dichotomy arises: the three unbounded-width results
(Theorems~\ref{thm:positional-width},~\ref{thm:srp-width},
and~\ref{thm:reported-outside}) have a common logic despite their different structural origins. In Positional DA-TC, additional reported objects can trigger further DA rejection chains, creating the trades that eventually bring the target to the agent. In Stable Reverse Priority mechanisms, objects ranked above the target can keep the agent from losing priority at it. In Reported Rank Welfare mechanisms, objects ranked below the target push the outside option farther down. These are very different reasons for listing additional objects, but what is common is that the same impact can be replicated through additional objects in the report: as the market grows, there may always be a role for more objects. This repeated role for additional objects results in unbounded and even maximal width for broad structural mechanism classes.

\section{Practical implications}\label{sec:implications}

The distinction between SA and non-SA mechanisms matters in practice. SA gives applicants direct access to attainable assignments, while with a non-SA mechanism access may depend on the broader report. Our analysis shows that these differences need not be marginal: in the broad non-SA classes we study, the reports required for access can grow arbitrarily long with the size of the market. This has practical implications for information requirements, disclosure, and robustness to reporting constraints. We also discuss how the resulting classification relates to the mechanisms used in practice.

\subsection{Information requirements}\label{sec:information}

Constructing a preference report is itself costly. Applicants and their families may need to learn which schools or programs exist, determine which ones are acceptable, form beliefs about admission prospects, and compare alternatives along multiple dimensions. Evidence from university and school admissions challenges the assumption that applicants enter the process with complete information even about their own preferences. Applicants may discover their preferences during the process, hold systematically inaccurate beliefs about admission chances, and make consequential reporting errors even if the mechanism is strategyproof.\footnote{See Grenet, He, and K\"ubler (2022) on preference discovery in university admissions, Arteaga et al.\ (2022) on beliefs and school search, Hakimov and Khanna (2025) on reporting errors when using strategyproof mechanisms, and Larroucau et al.\ (2025) on information frictions and application mistakes in centralized college admissions.} The length and content of a report therefore represent genuine informational demands rather than merely features of the formal message space.

Learning and ordering the alternatives one genuinely wishes to consider may be difficult under any mechanism. Under a non-SA mechanism, however, even an agent who knows her own preferences and has identified an attainable assignment may need to report other objects because of how they alter the course of the procedure or the criterion applied to the resulting matching. Identifying such entries therefore requires considering additional alternatives and, when their procedural role matters, understanding the mechanism and how one's report interacts with those of others.  Report width measures how large this procedural reporting requirement can become. Unbounded report width reflects substantial additional demand created by the procedure itself, since reporting additional objects may trigger rejections, interruptions, or endowment changes. For mechanisms with a maximal report width, all objects may have to be listed in the report, as each of the reported objects may change report-induced priorities, or alter the reported rank of the outside option. This additional reporting burden is absent from SA mechanisms.

The reporting burden may also be distributed unevenly. Information about schools, priorities, admission prospects, and procedural details may be easier to acquire for families with greater access to counselors, advice, or prior experience. Field interventions in school choice show that changing the amount and presentation of information can alter application and enrollment decisions, and that relatively simple decision supports may have larger effects than more extensive information provision.\footnote{See Corcoran et al.\ (2018) and Cohodes et al.\ (2025) on informational interventions in New York City high-school choice, and Hastings and Weinstein (2008) on simplified information about school quality.} A mechanism that requires applicants to understand indirect routes to an attainable school may therefore amplify differences in access to information and advice. Singleton-attainability removes this particular source of disparity.

\subsection{Disclosure and privacy}\label{sec:disclosure}

Disclosure is a related but distinct aspect of reporting. What an applicant must know to construct a report is one question; what the submitted report then reveals about her is another. A rank-order list may disclose which schools a family would be willing to accept and how it ranks them. To isolate disclosure from strategic reporting, suppose an applicant wishes to preserve the assignment she would receive with sincere reporting. Under an individually rational SA mechanism, once this school has been identified, the singleton report preserves the assignment without disclosing other acceptable schools and preferences over them. Singleton-attainability therefore permits the strongest possible form of report minimization for the individual. 
Observation~\ref{obs:replication} strengthens this to replication of the entire matching if the mechanism is nonbossy. 

When SA is not satisfied, preserving an assignment may require reporting additional schools and ordering information, including entries that matter because they alter the course of the procedure rather than because the applicant hopes to receive them. These entries nevertheless become part of the submitted record. Applicants may also differ substantially in how much reported preference information they must disclose in order to preserve their assignments. Report width measures minimum disclosure in terms of report length: the greatest number of schools that may be indispensable in a shortest report preserving access to an attainable school.

Submitted preference lists also matter for school districts and clearinghouses, which use them to estimate demand, assess welfare, plan capacity, and evaluate reforms (Abdulkadiro\u{g}lu, Agarwal, and Pathak, 2017; Agarwal and Somaini, 2018). Yet submitted rankings need not coincide with true preferences, especially when a mechanism rewards strategic reporting. For the Boston (IA) mechanism, for example, a stated first choice need not be the applicant's true first choice, as documented empirically by Abdulkadiro\u{g}lu et al. (2006) and Agarwal and Somaini (2018). Unbounded report width creates a related interpretive problem: some entries may be indispensable because of their procedural effects on access to a desired school, and yet the submitted lists are later used as preference data. SA therefore provides a data-minimization benchmark for both applicants and the institutions that use their reports.

\subsection{Reporting constraints and school choice reforms}\label{sec:reforms}

Report width clarifies one effect of limiting the number of schools that students may list. Such limits are already known to matter for incentives and welfare: they lead families to strategize about which schools to list, including whether to reserve space for safer options, and they can affect the stability and efficiency of the resulting assignment (Haeringer and Klijn, 2009; Calsamiglia, Haeringer, and Klijn, 2010; Decerf and Van der Linden, 2021). Report width captures a different effect of list-length constraints: whether they remove schools from the student's unrestricted option set.
As shown in Section~\ref{sec:width}, every constrained version of an SA mechanism remains SA, and every positive list-length constraint preserves the student's unrestricted option set, holding the other students' reports fixed. 
Without SA, a list-length constraint can prevent a student from
obtaining a school even when she includes it in her list.
The additional schools needed to obtain it may not all fit
within the permitted list length.

The width dichotomy makes this distinction particularly important.
For every non-SA primary mechanism and family in
Table~\ref{tab:class} and every non-SA structural class in
Table~\ref{tab:classes}, report width is unbounded. Thus, the
additional reporting needed to obtain an attainable school is
not confined to a few extra entries: the number of schools that
may have to be listed can grow without bound as the market grows.

The list-constrained school-choice reforms studied by Pathak and
S\"onmez (2013) provide a useful application of the SA distinction to changes in actual assignment systems, adding another dimension along which school-choice reforms can be evaluated. Following the prohibition of first-preference-first admissions in England, many local authorities replaced constrained versions of First-Preference-First, including Boston (IA) as its all-schools special case, with constrained DA, often retaining the same list-length constraint. Chicago's selective-high-school reform replaced a four-choice Boston mechanism with a four-choice Serial Dictatorship in 2009, and then increased the permitted list from four to six schools in 2010 while retaining Serial Dictatorship. Brighton and Hove replaced a three-choice Boston mechanism with three-choice DA, and Newcastle increased its DA from three to four choices by 2010. Pathak and S\"onmez (2013) compare these reforms through vulnerability to manipulation, Bonkoungou and Nesterov (2023) revisit them by counting manipulating students, and Bonkoungou and Nesterov (2025) evaluate them from the perspective of fairness.

First-Preference-First, DA, and Serial Dictatorship are all SA. Thus, the mechanism changes in these reforms did not create or eliminate the need to list additional schools for procedural reasons. Likewise, relaxing the list-length constraint under DA or Serial Dictatorship did not remove a procedural access barrier related to report width. Since constrained SA mechanisms remain singleton-attainable, these reforms stayed within the SA class: the mechanism or the list-length constraint changed, but the transparent access property did not. Had any of these school reforms replaced an SA mechanism with a non-SA one, the reform would also have introduced the indirect-access problem measured by report width and could have altered the information and disclosure requirements associated with access.

\subsection{Alignment with practice}\label{sec:practice}

The substantial structural gap in access transparency that we study theoretically also matters for how easily mechanisms can be understood. This may help explain why SA mechanisms are more readily adopted and sustained in practice, even when non-SA mechanisms serve objectives that SA mechanisms cannot meet. 
Among the mechanisms in operational use, past or present, that we were able to classify from the available documentation,\footnote{All cases considered here concern mechanisms that map report profiles consisting of ranked object lists to a matching,
the framework in which we study singleton-attainability.
Our discussion does not extend to procedures whose reports
or dynamics fall outside this framework, such as the couples
version of the residency match, in which reports rank pairs
of positions, or iterated multi-offer admissions systems.}
all but one belong to the SA mechanism classes we study.
This pattern also holds for constrained versions, since list-length constraints preserve singleton-attainability. The school-choice reforms discussed in Section 10.3 reinforce this pattern: although they altered the mechanism or the permitted list length, they remained within the SA class.

Stable mechanisms provide one large group: DA is used in New York City and Boston and in centralized admissions in a number of countries (Roth, 2008; Pathak, 2011), variants of Object-Proposing DA are used in England, Finland, and French high-school admissions (Cantillon, 2017), and stable score-limit or related cutoff systems are used in Hungary, Ireland, Spain, Turkey, and Chile (Bir\'o and Kiselgof, 2015; Rios et al., 2021). The pattern extends beyond stable mechanisms. Immediate Acceptance, Parallel mechanisms, Serial Dictatorship, and the Taiwan Deduction mechanisms have all been used in centralized assignment and are singleton-attainable (Pathak and S\"onmez, 2013; Chen and Kesten, 2017; Dur et al., 2022).

This pattern is not limited to current practice. 
First-Preference-First, a PRP mechanism, was used by many English local authorities before being banned in 2007 (Pathak and S\"onmez, 2013), while the earlier Pan-London mechanism described by Agarwal and Somaini (2018), which used DA with rank-dependent priority adjustments at some schools, is a Stable Monotone Priority mechanism. 
French Priority mechanisms, studied by Bonkoungou (2020),
were formerly used for French university admissions before being  replaced by a dynamic sequential system. New Haven's former
mechanism also belongs to the French Priority subfamily of PRP mechanisms
(Agarwal and Somaini, 2018; Kapor, Neilson, and Zimmerman, 2020).
It was replaced by list-length-constrained DA in 2019
(Arteaga et al., 2022). 
New Orleans used TTC, a strategyproof mechanism, in the first year of OneApp before switching to DA in 2013 (Abdulkadiro\u{g}lu et al., 2020). 
The replacement mechanisms were also SA in England, New Haven, and New Orleans, so these changes preserved direct access to attainable schools.

Among the non-SA mechanisms that we study, a modified version of EADA provides the only verified operational use we have found: the Flemish government's secondary-school admissions system uses student-proposing DA and has recently introduced an optional optimization in the form of a partial-consent EADA which protects statutory priorities (Vlaamse overheid, 2025; Commissie inzake Leerlingenrechten, 2025).\footnote{Such selective protection of priorities has also been studied theoretically in related restricted versions of EADA (Dur, Gitmez, and Y{\i}lmaz, 2019; Kitahara and Okumura, 2024), while Cerrone, Hermstr\"uwer, and Kesten (2024) present an experimental study motivated directly by the Flemish initiative.} The partial-consent design with school-specific lotteries described by
Havermans and Wouters (2020) has unbounded report width, since it
reduces to full-consent EADA when no statutory priorities apply
(Corollary~\ref{cor:eada-dattc-width}).
Interestingly, the Flemish design studies note that schools ranked below those that students aim for may serve as a ``springboard'' to a higher-ranked assignment and describe the optimization as technically complicated and less explainable (Wouters, Havermans, and Groenez, 2019; Havermans and Wouters, 2020). 
This springboard metaphor captures how lower-ranked schools
can enable access to a higher-ranked assignment. Our analysis formalizes this access restriction as EADA's tail dependence and failure of singleton-attainability.
A real-world implementation of EADA, even in a restricted form, is especially informative, as EADA has received substantial theoretical attention (Troyan, Delacr\'etaz, and Kloosterman, 2020; Ehlers and Morrill, 2020; Reny, 2022, among others). 
Thus, Flanders provides a natural test of whether one of the most prominent non-SA mechanisms can operate successfully in practice.

Overall, the practice we survey aligns closely with our classification, while the one documented exception highlights the indirect access phenomenon and its transparency cost.

\section{Conclusion}\label{sec:conclusion}

Our results expose a striking fact. All primary mechanisms and families in
our classification and all structural mechanism classes we study fall on
one of two sides, with nothing in between: either an agent can reach any attainable object by listing it as the only acceptable one, or the number of objects an agent may have to list to reach one grows without bound as the market grows. Singleton-attainability is the property that draws this line. It identifies a form of access transparency and reporting simplicity not captured by stability, efficiency, fairness, or manipulability, namely how much of the agent's preference must be reported to obtain an attainable object.

This paper provides an option-set interpretation of SA, a head-tail
decomposition through top-lift and truncation invariance, and
implications for manipulation, best responses, and equilibrium
verification. Strategyproofness and stability each imply SA, while
Object-Proposing DA, PRP mechanisms, Taiwan Deduction mechanisms,
and the three primary fixed rank-welfare families we study provide
further familiar examples of SA mechanisms, all of which also satisfy the two invariances. 
Among the non-SA primary mechanisms and
families, EADA, DA-TTC, and the three primary reported rank-welfare
families are tail-dependent, while Rank-Equity is head-dependent.
We also show that no Pareto improvement over a
strategyproof, individually rational, and non-wasteful mechanism is SA. Since DA has all
three properties, every Pareto improvement over DA violates SA.
Our structural results explain the sources of width one and of
unbounded or maximal width through three pairs of contrasting classes:
strategyproof versus Positional DA-TC mechanisms, Stable Monotone
versus Stable Reverse Priority mechanisms, and Fixed versus
Reported Rank Welfare mechanisms. 
On the whole, our analysis develops
a general theory of access transparency in
matching, identifying the structural sources of direct and indirect
access and quantifying the severity of indirect access through
report width.
In addition, we discuss practical implications: singleton-attainability 
minimizes the reporting requirement and potential disclosure costs,
and retains access to attainable objects even with list-length
constraints. We also note the strong alignment between SA mechanisms
and operational use.

Studying the three broad non-SA structural classes reveals a common
reason for the width dichotomy: in each class, once a source of indirect access is present, it can be replicated. If SA holds, no object other than the assigned one is needed in the report to obtain that assignment. In the non-SA structural classes we examine, an additional reported object may be required because of some feature of the mechanism: it may trigger another rejection chain needed to improve on the DA assignment, keep the agent from losing priority at her assigned object, or push the outside option another rank lower. 
These effects need not stop after a fixed number of objects in the report
and can be repeated on a larger scale as the market grows.
The lower-bound constructions establishing unbounded and maximal width exploit precisely this possibility. Thus, for these mechanisms, unbounded width is the repeated manifestation of the same structural feature that causes the initial failure of singleton-attainability.

These findings raise an interesting design question. Can one design a mechanism with bounded report width greater than one in which the bound follows from a substantive feature of the procedure, rather than from an ad hoc restriction imposed solely to produce bounded width? Such a mechanism would permit indirect access while limiting the number of
additional reported objects that may be needed to obtain an attainable
object, and could be proposed on its own merits. A particularly interesting question concerns Pareto-efficient mechanisms that Pareto-improve DA. Every such mechanism must violate SA, and the canonical representatives of this class, EADA and DA-TTC, have unbounded report width. However, it remains an open question whether bounded width greater than one is possible in this class and, if so, whether such a mechanism would be appealing. More generally, one may ask which procedural features, if any, permit indirect access to remain bounded while achieving objectives that existing SA mechanisms do not achieve.

Beyond report width, a distinct direction is order complexity:
how precisely must the objects in a successful report be ranked,
and how robust is access to ordering mistakes?
Singleton-attainability also applies beyond the school choice model.
The stability argument extends directly to one-to-one two-sided matching. 
More generally, SA applies naturally to allocation and matching
environments in which each agent receives at most one object or
contract, has an outside option, and can declare any single object
or contract as her only acceptable assignment. Such environments
can accommodate richer feasibility constraints, as in refugee
matching, where each family is assigned to one location but its
size affects the capacity required.

Other environments require adapting the notion of direct access itself. With random assignment, where an agent's assignment is a lottery rather than an object, one can ask whether the same attainable lottery remains accessible when the agent reports only the objects in its support. With multi-unit demand or complementarities, as with couples and siblings, the relevant outcome may be a bundle or joint assignment. Dynamic and sequential markets and more complex exchange environments, such as kidney exchange, raise further questions about how direct access should be defined when assignments depend on earlier events or linked exchanges. 
Investigating whether a similar pattern arises in these additional
settings may clarify whether the width dichotomy we identified
is specific to the matching environment studied here or reflects
a broader feature of matching and allocation procedures.

\section*{Appendix}

\appendix
\section{Definitions of mechanisms}\label{app:mechanisms}
This appendix collects definitions and references for primary mechanisms and families classified in Table~\ref{tab:class} that are central to the analysis.

\paragraph{Deferred Acceptance.}
Given the priority orderings of the objects, (Agent-Proposing) DA (Gale and Shapley, 1962) proceeds in rounds. Each unmatched agent applies to her highest-ranked acceptable object
that has not rejected her, if one remains. Each object considers its
new applicants together with the agents it is already tentatively
holding, tentatively holds the highest-priority agents up to its capacity,
and rejects the rest. The procedure ends when no rejection occurs. The resulting matching is the agent-optimal stable matching.

\paragraph{Object-Proposing DA.}
Object-Proposing DA (Gale and Shapley, 1962) is the version of DA in which objects apply to agents according to their priority orderings. 
In each round, each object with unfilled capacity applies, according to its priority
order, to agents to whom it has not previously applied, up to its
remaining capacity. Each agent considers the new applications together
with the object she is currently holding, tentatively holds her
most-preferred acceptable object among these if there is one, and
rejects the rest. The procedure ends when no rejection occurs. The resulting matching is the object-optimal stable matching.

\paragraph{Immediate Acceptance.}
IA uses the same application process as DA, but every acceptance is permanent. In each round, each unmatched agent applies to her highest-ranked acceptable object that has not rejected her, if one remains. Each object permanently accepts its highest-priority applicants up to its remaining capacity and rejects the rest. The procedure ends when no rejection occurs.

\paragraph{PRP mechanisms.}

A Preference Rank Partitioned (PRP) mechanism is specified by
report-independent partitions of priority ranks and preference ranks
into consecutive classes. The priority rank partitions may differ across objects, and the
preference rank partitions may differ across agents. At each report profile, each object orders agents who declare it acceptable first by priority rank class, then
within each such class by the preference rank class in which agents
place the object, and finally by its strict priority. In both
partitions, earlier rank classes have higher priority. The mechanism
applies DA with these report-induced priorities
(Ayoade and P\'apai, 2023).

\paragraph{Taiwan Deduction mechanisms.}
We use the following version of the Taiwan mechanism of
Dur et al. (2022), adapted to our incomplete-report framework
and allowing a report-independent strict tie-breaker for score
ties. A Taiwan Deduction mechanism fixes report-independent
base scores $s_{ic}$ for every agent-object pair, a weakly
increasing deduction schedule
$0=d_1\le d_2\le\cdots\le d_m$, and a report-independent
strict tie-breaker at each object.
 If agent $i$ lists object $c$ at rank $r$,
her adjusted score at $c$ is $s_{ic}-d_r$. Each object ranks the
agents who declare it acceptable in decreasing order of their
adjusted scores, breaking ties by its tie-breaker. A Taiwan Deduction
mechanism applies DA with these report-induced priorities.

\bigskip

\noindent\textit{Notation.} For the three rank-based criteria below, let $\rk_i(c;R_i)$ denote the position of an acceptable object $c$ in $R_i$. 
For a market with $m$ objects, a report $R_i$, and an individually rational matching $\mu$, define agent $i$'s \textit{fixed} and \textit{reported assignment ranks} as follows:
\[
\rho_i^F(\mu_i;R_i)=
\begin{cases}
\rk_i(\mu_i;R_i), & \mu_i\in A_i(R_i),\\
m+1, & \mu_i=\emptyset
\end{cases}
\qquad
\rho_i^P(\mu_i;R_i)=
\begin{cases}
\rk_i(\mu_i;R_i), & \mu_i\in A_i(R_i),\\
|A_i(R_i)|+1, & \mu_i=\emptyset
\end{cases}
\]
The corresponding \textit{fixed} and \textit{reported rank lists} are the following:
\[
\rho^F(\mu;R)=\bigl(\rho_i^F(\mu_i;R_i)\bigr)_{i\in I}
\qquad \qquad
\rho^P(\mu;R)=\bigl(\rho_i^P(\mu_i;R_i)\bigr)_{i\in I}
\]
For the next three pairs of mechanism families, optimization is over
the individually rational matchings at the reported profile.
Remaining ties are resolved by selecting the $\sigma$-highest
matching for a strict report-independent ordering $\sigma$
of feasible matchings.

\paragraph{Fixed and Reported Rawlsian mechanisms.}
Afacan and Dur (2024) study Rawlsian matching in an object-allocation
model with capacities and an outside option. Given a rank list
$\rho(\mu;R)=(\rho_i(\mu_i;R_i))_{i\in I}$, the Rawlsian criterion
first minimizes the worst assignment rank,
$\max_{i\in I}\rho_i(\mu_i;R_i)$,
and, subject to that, minimizes the number of agents receiving
that rank. Object priorities do not enter the criterion.
A \textit{Fixed Rawlsian mechanism} applies this criterion to the
fixed rank list $\rho^F$, so the outside option has rank $m+1$,
independently of the number of objects reported as acceptable.
A \textit{Reported Rawlsian mechanism} applies the same criterion
to the reported rank list $\rho^P$, so the outside option has rank
$|A_i(R_i)|+1$ for agent $i$. Given our report representation,
Afacan and Dur's assignment-rank list for individually rational
matchings coincides with $\rho^P$. They allow for selecting
arbitrarily among Rawlsian matchings. For the mechanisms considered
here, we resolve remaining ties using the report-independent ordering
$\sigma$ as specified above.

\paragraph{Fixed and Reported Additive Rank Scoring mechanisms.}
Given a rank list $(\rho_i(\mu_i;R_i))_{i\in I}$ and a fixed strictly
increasing rank scoring function $g:\{1,\ldots,m+1\}\to\mathbb R$,
the additive rank scoring criterion evaluates an individually
rational matching $\mu$ by 
\[
\sum_{i\in I} g\bigl(\rho_i(\mu_i;R_i)\bigr),
\]
with lower values preferred. A \textit{Fixed Additive Rank Scoring
mechanism} minimizes this sum using the fixed rank list $\rho^F$,
and a \textit{Reported Additive Rank Scoring mechanism} minimizes
the same sum using the reported rank list $\rho^P$.
The special case $g(r)=r$ gives the rank-sum criterion studied
by Ortega and Klein (2023).

\paragraph{Fixed and Reported Rank-Maximal mechanisms.}
We apply the rank-maximal criterion of Irving et al.\ (2006) to
the assignment-rank lists defined above. The criterion compares
matchings lexicographically by the numbers of agents receiving
successive assignment ranks, first maximizing the number receiving
rank~1, subject to that maximizing the number receiving rank~2,
and so on through rank~$m+1$. A \textit{Fixed Rank-Maximal mechanism}
applies this criterion to the fixed rank list $\rho^F$, and a
\textit{Reported Rank-Maximal mechanism} applies the same criterion
to the reported rank list $\rho^P$. Agents receiving the outside
option are included in these counts at rank $m+1$ in the fixed
version and at rank $|A_i(R_i)|+1$ in the reported version.

\paragraph{EADA.}
We study the full-consent EADA mechanism of Kesten (2010) in the many-to-one school-choice model. The mechanism first runs DA. An agent is an interrupter for an object if, while she is tentatively held by the object, another agent is rejected from it, and she is herself rejected by that object at a later step. The mechanism identifies the last step at which an interrupter is rejected from an object for which she is an interrupter. For every such interrupter-object pair at that step, it deletes the object from the agent's report, preserving the relative order of the remaining objects, and reruns DA with the updated reports. It repeats this until no interrupter remains, and selects the resulting DA matching. Tang and Yu (2014) give a simplified outcome-equivalent formulation, and Dur, Gitmez, and Y{\i}lmaz  (2019) provide an outcome-equivalent cycle formulation, the full-consent Top Priority mechanism.

\paragraph{DA-TTC.}
DA-TTC first runs DA and then applies TTC using the DA assignments
as initial endowments (Cantala and P\'apai, 2014; Alcalde and
Romero-Medina, 2017). Each agent matched by DA
is endowed with the object copy assigned to her. Agents left
unmatched by DA and object copies left unassigned by DA do not
enter the trading stage. In each TTC round, every remaining agent
identifies her most-preferred object among the remaining
endowments, including her own, and points to the highest-priority
remaining holder of a copy of that object, according to the
object's priority. All directed cycles are executed simultaneously,
with each participating agent receiving the endowed copy held by
the agent to whom she points. A self-cycle therefore leaves the
agent with her own endowment. The participating agents and their
endowed copies are then removed. The procedure repeats until no
agents remain in the trading stage.

\paragraph{Rank-Equity.}
Ekici, Ertemel, and Yenmez (2026) apply DA to
preference-induced priorities in a one-to-one
model with equally many agents and objects and no outside option. 
We extend their construction
to the many-to-one model with incomplete reports and call
the mechanism \emph{Rank-Equity}.
Fix a report-independent strict tie-breaker $\pi_c$ at each
object $c$. At every report profile $R$, the report-induced
priority $\Gamma_c(R)$ orders the agents who declare $c$
acceptable by the rank position at which they place $c$, with
lower-ranked positions receiving higher priority and equal ranks
broken by $\pi_c$. Agents who do not declare $c$ acceptable are ineligible
at $c$ and are placed last, in the order $\pi_c$.
Rank-Equity applies DA to the priorities $\Gamma(R)$ at each profile $R$.

\section{Omitted proofs}\label{app:proofs}

This appendix contains the proofs omitted from the main text.

\subsection{Extremal stable selection}
\label{app:proof-extremal-selection}

We first state two implications of independence and priority
monotonicity.

\begin{lemma}[Induced-priority changes]
\label{lem:induced-priority-changes}
Let $\Gamma$ satisfy independence and priority monotonicity. Fix an
agent $i$, an object $x$, reports $R_i$ and $R_i'$, and a profile
$R_{-i}$ such that $x \in A_i(R_i)$ and $x \in A_i(R'_i)$.
\begin{enumerate}[label=(\alph*),nosep]
\item If $U_i(x;R_i')=U_i(x;R_i)$, then
$\Gamma_x(R_i',R_{-i})=\Gamma_x(R_i,R_{-i})$.
\item If $U_i(x;R_i')\subseteq U_i(x;R_i)$, then every priority
comparison between two agents other than $i$ is unchanged, and
$i$ can only move weakly upward in $\Gamma_x(R_i',R_{-i})$ compared to $\Gamma_x(R_i,R_{-i})$.
\end{enumerate}
\end{lemma}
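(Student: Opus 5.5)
The plan is to derive both parts directly from the two defining conditions of $\Gamma$. Independence is applied to the pair of reports $R_i$, $R_i'$ with $R_{-i}$ fixed. Priority monotonicity is applied in one or both directions, according to the inclusion between the upper contour sets.

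For part (b), I would proceed in two steps. First, independence gives directly that for every pair of distinct agents $j,h\neq i$ the comparison between $j$ and $h$ in $\Gamma_x(R_i,R_{-i})$ coincides with that in $\Gamma_x(R_i',R_{-i})$. This part needs no assumption on upper contour sets. Second, since $x$ is acceptable in both reports and $U_i(x;R_i')\subseteq U_i(x;R_i)$, priority monotonicity yields, for every $j\neq i$, that $i\,\Gamma_x(R_i,R_{-i})\,j$ implies $i\,\Gamma_x(R_i',R_{-i})\,j$. Hence the set of agents ranked below $i$ at $x$ weakly expands, which is what ``$i$ moves weakly upward'' means.

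Combining the two steps, the new order is obtained from the old one by removing $i$ and reinserting her at a weakly higher position. The relative order of all other agents is preserved. Because the orders are strict and complete, this description is justified: an order is determined by the relative order of the agents other than $i$ together with the set of agents below $i$.

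For part (a), equality of the upper contour sets gives inclusion in both directions. Applying part (b) twice, once from $R_i$ to $R_i'$ and once from $R_i'$ to $R_i$, shows that the set of agents below $i$ is both weakly larger and weakly smaller. So that set is unchanged, and the relative order of the others is unchanged by independence. By the same determination argument, the two strict orders coincide.

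There is no real obstacle here. The only point needing care is the determination fact about strict orders, which is elementary: for any two agents, their comparison is either a comparison between two non-$i$ agents or a comparison involving $i$, and both kinds have been pinned down.
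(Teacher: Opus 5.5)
Your proof is correct and takes essentially the same approach as the paper. Independence fixes every comparison among agents other than $i$. Priority monotonicity, applied in both directions when the upper contour sets coincide, fixes the comparisons involving $i$ for part (a), and a single application gives the weak upward move in part (b). Your remark that a strict order is determined by these pairwise comparisons only makes explicit a step the paper leaves implicit.
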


\begin{proof}
Independence preserves every priority comparison between two agents other
than $i$. If $U_i(x;R_i')=U_i(x;R_i)$, every comparison involving $i$ is also unchanged, so 
$\Gamma_x(R_i',R_{-i})=\Gamma_x(R_i,R_{-i})$, proving part~(a). If $U_i(x;R_i')\subseteq U_i(x;R_i)$, then priority monotonicity implies that for all $j \neq i$,  
if $i\, \Gamma_x(R_i,R_{-i}) j$ then $i\, \Gamma_x(R'_i,R_{-i}) j$, proving part~(b). 
\end{proof}

For two matchings $\mu$ and $\nu$ that assign the same number of
agents to an object $x$, say that $\mu(x)$ \emph{priority-dominates}
$\nu(x)$ with respect to a given priority order if, after the two
sets are listed in priority order, every position in $\mu(x)$ is
occupied by a weakly higher-priority agent than the corresponding
position in $\nu(x)$. At a fixed preference and priority profile,
the object-optimal stable matching priority-dominates every other
stable matching at each object and gives every agent her worst stable
assignment (Roth and Sotomayor, 1990).

\begin{proof}[Proof of Proposition~\ref{prop:extremal-selection}]
Fix a Stable Monotone Priority mechanism $\varphi$ with
report-induced priority rule $\Gamma$, an agent $i$, and a profile
$R=(R_i,R_{-i})$. Let $\mu=\varphi(R)$ and $\mu_i=c$. Since $\mu$
is stable at $R$, the object $c$ is acceptable to $i$. Write
$R_i=(H,c,T)$, where $H$ and $T$ denote the possibly empty ordered
lists of acceptable objects above and below $c$. For simplicity,
we also write $H$ and $T$ for the corresponding sets.

\bigskip

\noindent TLI: 
Let $R'=(R_i^{\uparrow c},R_{-i})$, where
$R_i^{\uparrow c}=(c,H,T)$. We first show that $\mu$ remains stable
at $R'$ with respect to $\Gamma(R')$. No blocking pair can involve
$i$, since she is assigned her first choice. At $c$, her upper
contour set shrinks to the empty set, so
Lemma~\ref{lem:induced-priority-changes}(b) implies that her priority
can only improve, while every comparison among the other agents
remains unchanged. At every other object, $i$ is not assigned, so
changes in her priority cannot create a blocking pair involving
another agent, while independence preserves all comparisons among
the other agents. Hence $\mu$ remains stable at $R'$.

\smallskip

\noindent \emph{Agent-optimal selection.}
Suppose first that $\varphi$ selects the agent-optimal stable
matching. Since the stable matching $\mu$ assigns $i$ her first
choice at $R'$, the agent-optimal stable matching at $R'$ also
assigns her $c$. Thus $\varphi_i(R')=c$.

\smallskip

\noindent \emph{Object-optimal selection.}
Suppose now that $\varphi$ selects the object-optimal stable matching,
and let $\nu=\varphi(R')$. Suppose, toward a contradiction, that
$\nu_i\neq c$.

We first show that $\nu$ is stable at $R$. Returning from
$R_i^{\uparrow c}$ to $R_i$ cannot create a new blocking pair
involving an agent other than $i$. At $c$, $i$'s priority can only
move weakly downward, and $i$ is not assigned there. At an object in $H$, her
priority can only improve. At every object in $T$, her upper contour
set is unchanged. Together with independence, these observations
rule out any new blocking pair involving another agent. A new
blocking pair involving $i$ can therefore only be of the form
$(i,d)$ for some $d\in H$.

Suppose that $(i,d)$ blocks $\nu$ at $R$. Since $\nu_i\neq c$, the
comparison of $d$ with $\nu_i$ is the same at $R_i$ and
$R_i^{\uparrow c}$. Thus $i$ also prefers $d$ to $\nu_i$ at $R'$.
Stability of $\nu$ at $R'$ implies that $d$ is fully assigned.
The blocking pair at $R$ therefore provides an agent
$j\in\nu(d)$ such that
$i\,\Gamma_d(R)\,j$.

Since $\mu_i=c$ and $d\,P_i\,c$, stability of $\mu$ at $R$ implies
that $d$ is fully assigned in $\mu$ and every member of $\mu(d)$ has
higher priority than $i$ at $\Gamma_d(R)$. Hence every member of
$\mu(d)$ has higher priority than $j$ at $\Gamma_d(R)$. Independence
preserves these comparisons at $R'$. Since both $\mu(d)$ and
$\nu(d)$ fill the capacity of $d$, $\nu(d)$ cannot priority-dominate
$\mu(d)$ at $\Gamma_d(R')$. This contradicts object-optimality of
$\nu$ at $R'$. Therefore, $\nu$ is stable at $R$.

Object-optimality of $\mu$ at $R$ now implies that $\mu(c)$
priority-dominates $\nu(c)$ at $\Gamma_c(R)$. Moving from $R$ to
$R'$ can only improve $i$'s priority at $c$, while independence
preserves every comparison among the other agents. Since
$i\in\mu(c)\setminus\nu(c)$, $\mu(c)$ also priority-dominates
$\nu(c)$ at $\Gamma_c(R')$. Object-optimality of $\nu$ at $R'$
gives the reverse priority dominance. The rural hospitals theorem
implies that the two sets have the same size, and two equally sized
sets can priority-dominate each other in a strict priority order only
if they are identical. This contradicts
$i\in\mu(c)\setminus\nu(c)$. Hence $\nu_i=c$.

\smallskip
Thus, TLI holds for either extremal stable selection.

\bigskip

\noindent TRI:
Let $R''=(R_i^{\le c},R_{-i})$, where
$R_i^{\le c}=(H,c)$. We first show that $\mu$ remains stable at
$R''$ with respect to $\Gamma(R'')$. Agent $i$ remains assigned
$c$ and deletes only objects below it, so no new blocking pair can
involve her. At every object in $H\cup\{c\}$, her upper contour set
is unchanged, and Lemma~\ref{lem:induced-priority-changes}(a)
implies that the entire induced priority order is unchanged. At
every other object, $i$ is not assigned, while independence
preserves every priority comparison among the other agents. Hence
no new blocking pair involving another agent can arise.

\smallskip

\noindent \emph{Agent-optimal selection.}
Suppose that $\varphi$ selects the agent-optimal stable matching.
In the agent-proposing DA procedure that produces $\mu$ at $R$,
agent $i$ never applies to an object in $T$, since her final
assignment is $c$. Truncation leaves the induced priority order
unchanged at every object to which she applies, while independence
preserves every comparison among the other agents. Thus the same
proposals, tentative holdings, and rejections occur at $R''$ as at
$R$. Hence $\mu$ is also the agent-optimal stable matching at
$R''$, and $\varphi_i(R'')=c$.

\smallskip

\noindent \emph{Object-optimal selection.}
Suppose that $\varphi$ selects the object-optimal stable matching.
Since $\mu$ is stable at $R''$ and matches $i$, the rural hospitals
theorem implies that every stable matching at $R''$ matches her.
The object $c$ is her last-ranked acceptable object at $R_i^{\le c}$ and
is assigned to her in the stable matching $\mu$. It is therefore
her worst stable assignment at $R''$. Hence the object-optimal
stable matching at $R''$ also assigns $c$ to $i$.

\smallskip

\noindent Thus, TRI holds for either extremal stable selection. 
\end{proof}

\subsection{Bounded reports, best responses, and equilibrium verification}\label{app:bounded-reports}

\begin{proof}[Proof of Proposition~\ref{prop:best-response}]

\leavevmode

\emph{(i) implies (ii).}
Fix $i$, $R_i$, and $R_{-i}$, and let $x$ be the best element of $O_i(R_{-i})\cup\{\emptyset\}$ according to $R_i$. If $x\in C$, condition~(i) supplies a $b$-bounded report attaining it. If $x=\emptyset$, the empty report attains it by individual rationality. Thus a $b$-bounded best response exists.

\emph{(ii) implies (iii).}
If a submitted report profile is not a Nash equilibrium, some agent has a profitable deviation. Her best response is therefore also profitable, and by (ii) she has a $b$-bounded best response. Hence she has a profitable $b$-bounded deviation. The converse is immediate, since every profitable $b$-bounded deviation is a profitable deviation.

\emph{(iii) implies (i).} 
Suppose that \textup{(i)} is
violated for some $i$, $R_{-i}$, and attainable object $c$, and let
$R_i=(c)$ be agent $i$'s true preference. Consider the submitted
profile $(\emptyset,R_{-i})$. By individual rationality, agent $i$
is unmatched. Since $c$ is her only acceptable object and no
$b$-bounded report attains it, she has no profitable $b$-bounded
deviation. Since $c$ is attainable, however, some unrestricted
report attains it and is a profitable deviation.
For every other agent $j$, choose her true preference so that her
assignment at $(\emptyset,R_{-i})$ is her top choice. If she is
unmatched, this means declaring every object unacceptable. No other
agent then has a profitable deviation. Thus, the submitted profile
has no profitable $b$-bounded deviation but is not a Nash
equilibrium, contradicting \textup{(iii)}. Therefore,
\textup{(i)} holds.
\end{proof}

\subsection{Positional DA-TC mechanisms}  \label{app:positional}

We first justify the settlement used in the definition of Positional DA-TC mechanisms and in the proof of Theorem~\ref{thm:positional-width}.
We then prove Theorem~\ref{thm:positional-width} and Corollary~\ref{cor:eada-dattc-width}.

\begin{lemma}[Settlement]\label{lem:cyclic-settlement}
For an endowed agent $i$ at an individually rational matching $\mu$ of endowed copies, the following are equivalent:
\begin{enumerate}[label=(\roman*),nosep]
\item $i$ belongs to a directed cycle of the improvement graph $G(\mu)$;
\item there is a matching $\nu$, obtained by reallocating the endowed copies among the endowed agents, that is a Pareto improvement over $\mu$ and satisfies $\nu_i\neq\mu_i$.
\end{enumerate}
Consequently, an agent $j$ outside the cyclic agent set $\mathcal C(\mu)$ at $\mu$ does not participate in any Pareto-improving trade after any sequence of Pareto-improving trades starting from $\mu$. Every matching $\nu$ obtained by a Pareto-improving sequence satisfies $\nu_j=\mu_j$.
\end{lemma}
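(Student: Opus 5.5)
The plan is to prove (i)$\Rightarrow$(ii) by executing the cycle and (ii)$\Rightarrow$(i) by decomposing a Pareto-improving reallocation into cycles of the improvement graph, and then to derive the consequence by induction along the sequence of trades.

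For (i)$\Rightarrow$(ii), take a directed cycle $i=i_1\to i_2\to\cdots\to i_k\to i_1$ of $G(\mu)$. Give each $i_\ell$ the copy held by $i_{\ell+1}$, indices taken modulo $k$, and leave everyone else unchanged. Call the result $\nu$. Each edge $i_\ell\to i_{\ell+1}$ means $\mu_{i_{\ell+1}}\,P_{i_\ell}\,\mu_{i_\ell}$, so every cycle member is strictly better off and all other agents are unaffected. Hence $\nu$ is a Pareto improvement over $\mu$ with $\nu_i\neq\mu_i$.

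For (ii)$\Rightarrow$(i), let $\nu$ be a Pareto-improving reallocation of the endowed copies with $\nu_i\neq\mu_i$. Represent $\nu$ as a bijection on the set of endowed copies: each agent's copy in $\mu$ is mapped to her copy in $\nu$. If several copies of the same object exist, first fix the bijection so that an agent who keeps the same object type keeps her own copy. Consider the agents whose object type changes. For each such agent $j$, strict preferences give $\nu_j\,P_j\,\mu_j$, because $\nu_j\,R_j\,\mu_j$ and $\nu_j\neq\mu_j$. The copy $j$ receives in $\nu$ was held in $\mu$ by some agent $j'$. That agent holds an object of type $\nu_j$, so $j\to j'$ is an edge of $G(\mu)$. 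Moreover, $j'$ also changes type, since she gave her copy away, and by the convention she did not keep her own type. Thus every type-changing agent has an outgoing edge to another type-changing agent. Since the set of such agents is finite and contains $i$, following the map $j\mapsto j'$ from $i$ must return to $i$, because the map is injective. This yields a directed cycle through $i$.

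The main obstacle is this copy-bookkeeping. With multiple copies of the same object, the naive map $j\mapsto j'$ could send a type-changing agent to someone who swapped copies but not object types. Choosing the bijection so that type-preserving agents keep their own copies is what makes the map a permutation of the type-changing agents, and that is what ensures the orbit of $i$ closes into a cycle.

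For the consequence, let $j\notin\mathcal C(\mu)$. Any sequence of Pareto-improving trades starting from $\mu$ composes into a single reallocation $\nu$ that Pareto-improves $\mu$, since weak improvements compose, and at least one agent strictly improves if any trade occurs. If $\nu_j\neq\mu_j$, then (ii)$\Rightarrow$(i) would place $j$ in $\mathcal C(\mu)$, a contradiction. Hence $\nu_j=\mu_j$. This argument applies to the endpoint of every prefix of the sequence, so $j$ never changes object. Whenever a copy swap of a type-preserving agent occurs, it can be treated as no participation.
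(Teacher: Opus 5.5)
Your proof is correct and follows the paper's argument. You execute the cycle for (i)$\Rightarrow$(ii), decompose a Pareto-improving reallocation into cycles of strict improvements for (ii)$\Rightarrow$(i), and compose the trades into a single Pareto improvement for the consequence; in addition, your convention that type-preserving agents keep their own copies spells out the multi-copy bookkeeping that the paper leaves implicit, and your starting bijection is justified because individual rationality of $\mu$ forces every endowed agent to receive some copy in $\nu$.
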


\begin{proof}

\leavevmode

\emph{(i) implies (ii).}
If $i$ belongs to a directed cycle of $G(\mu)$, carrying out this
cycle and leaving all other assignments unchanged gives a
Pareto-improving reallocation that changes $i$'s assignment.
Thus condition~(ii) holds.

\emph{(ii) implies (i).}
Suppose a matching $\nu$ satisfies condition~(ii). Then $\nu$
reallocates the same endowed copies among the same endowed
agents. The changed assignments decompose into disjoint
cycles. Along each cycle, every agent strictly prefers the
object she receives in $\nu$ to her endowment in $\mu$.
Since $\nu_i\neq\mu_i$ by condition~(ii), agent $i$ belongs
to one of these cycles. This is a directed cycle of $G(\mu)$
containing $i$, which implies~(i).

\smallskip

\noindent This proves the equivalence of (i) and (ii).

\medskip

\noindent For the final claim, let $j\notin\mathcal C(\mu)$ and consider
any matching $\nu$ obtained by a sequence of Pareto-improving
trades starting from $\mu$. Every agent weakly prefers her
assignment in $\nu$ to her assignment in $\mu$. If
$\nu_j\neq\mu_j$, condition~(ii) holds for $j$.
By the equivalence of (i) and (ii), $j\in\mathcal C(\mu)$,
a contradiction. Thus every such matching satisfies
$\nu_j=\mu_j$. Since each trade strictly improves every
participant's assignment, $j$ cannot participate in any of
these trades.
\end{proof}

\begin{proof}[Proof of Theorem~\ref{thm:positional-width}]

\leavevmode
\medskip

We first establish that the trading stage is well defined and produces
a Pareto-efficient matching that every agent weakly prefers to the DA
matching. We then construct a family of markets in which attaining a
particular object requires reports of increasing length. After
determining the DA outcome in these markets, we prove a lower bound
on report size that holds for both positional orientations and show
separately that the bound is attained in each orientation. Finally, the
construction establishes unbounded report width.

\noindent \emph{Welfare properties.}
Every executed cycle in $G(\mu)$ strictly improves the assignment of all of its
participants. After settlement, every remaining
agent belongs to the current cyclic agent set. Hence a
claimant-positional graph has one outgoing edge from every remaining
vertex, while a holder-positional graph has one incoming edge into
every remaining vertex. Therefore, in either orientation, a cycle
exists whenever agents remain, and the procedure terminates since the
market is finite. Every agent is weakly better off at the final
matching than at the DA matching.

Any copy left unassigned by DA remains unassigned throughout the
procedure. By non-wastefulness of the DA matching, no agent strictly
prefers the object corresponding to such a copy to her DA assignment.
Since every agent is weakly better off at the final matching, no agent
strictly prefers such an object to her final assignment either.

Consider a matching $\nu$ that Pareto-improves the final matching.
Every agent matched by DA receives an acceptable object there and is
weakly better off at the final matching, so she cannot become
unmatched in $\nu$. Agents unmatched by DA do not enter the trading
stage and remain unmatched at the final matching. If one of them
became matched in $\nu$, then $\nu$ would match strictly more agents
than the final matching. Since the final matching uses every copy
assigned by DA, this would require assigning some agent a copy left
unassigned by DA, which the preceding paragraph rules out. Any
remaining Pareto improvement would therefore only reallocate the
copies assigned by DA among the agents endowed with them. Repeated
application of Lemma~\ref{lem:cyclic-settlement} during settlement
rules this out. Hence the final matching is Pareto efficient. Since DA is not Pareto efficient in general, the Pareto improvement
is strict at some profile.

\medskip
\noindent\emph{Construction and DA outcome.}
Fix $\ell\ge1$, and write $x_{\ell+1}=w$. There are agents
$i,h,r,\ a_t,b_t,p_t$ $(t=1,\ldots,\ell)$ and $3\ell+2$
unit-capacity objects $c,w,\ g_t,x_t,z_t$ $(t=1,\ldots,\ell)$.
For each $t$, call the agents $a_t,b_t,p_t$, together with the
objects $g_t,x_t,z_t$, \emph{block $t$}. Agent $i$ is the designated
agent and $c$ is the target object.

We will show that $i$ can obtain $c$ only if every block is active,
where block $t$ is active if $i$ applies to $g_t$ in the DA procedure.
Since this requires $g_t$ to appear in $i$'s report, every report
attaining $c$ must list $g_1,\ldots,g_\ell$ as acceptable. We
construct reports of length $\ell+2$ that activate all blocks and
deliver $c$ in either orientation.
Leaving $i$'s report unspecified, let $R_h=(x_1,c)$, $R_r=(w)$,
and, for every $t$, $R_{a_t}=(g_t,x_t)$ and
$R_{b_t}=(x_t,z_t)$. Agent $p_t$ reports
$z_t,x_{t+1},g_t$ as her only acceptable objects and ranks both
$z_t$ and $x_{t+1}$ above $g_t$. We will specify the relative order
of these two objects later according to the orientation and position
rule. Thus block $t$ is linked to the next block through $x_{t+1}$,
with $w=x_{\ell+1}$ closing the final block.

Let the priorities be
$\succ_c=(h,i,\ldots)$,
$\succ_w=(i,r,p_\ell,\ldots)$,
$\succ_{g_t}=(p_t,i,a_t,\ldots)$,
$\succ_{z_t}=(b_t,p_t,i,\ldots)$,
$\succ_{x_1}=(a_1,b_1,h,i,\ldots)$, and, for
$t=2,\ldots,\ell$,
$\succ_{x_t}=(a_t,b_t,p_{t-1},i,\ldots)$.
With these reports and priorities, the final DA assignments within
block $t$ depend only on whether the block is active. The table below
shows the assignments when the block is inactive versus active:
$$
\begin{array}{c|ccc}
&g_t&x_t&z_t\\ \hline
\text{inactive}&a_t&b_t&p_t\\
\text{active}&p_t&a_t&b_t
\end{array}
$$

If block $t$ is inactive, DA assigns $a_t\asgn g_t$ and
$b_t\asgn x_t$. Agent $p_t$ cannot be assigned to $x_{t+1}$:
for $t<\ell$, $x_{t+1}$ is eventually held by either $a_{t+1}$
or $b_{t+1}$, both of whom have higher priority than $p_t$ at
$x_{t+1}$, while for $t=\ell$, $w$ is assigned to either $i$ or
$r$, both above $p_\ell$. Since $b_t\asgn x_t$, $p_t$ can obtain
$z_t$, yielding the inactive row in the table.

If block $t$ is active, $i$ displaces $a_t$ from $g_t$, $a_t$ then
displaces $b_t$ from $x_t$, and $b_t$ moves to $z_t$, where she has
higher priority than $p_t$. Agent $p_t$ therefore cannot be assigned
to either $z_t$ or $x_{t+1}$ and eventually reaches $g_t$, where she
displaces $i$. This gives the active row in the table. Thus, reversing
the relative order of $z_t$ and $x_{t+1}$ in $p_t$'s report does not
change the displayed DA assignments. We will use this freedom below
to adapt the trading-stage claims to the orientation and position
rule without changing the DA endowments.

\medskip
\noindent\emph{Orientation-independent lower bound.}
Regardless of which blocks are active, $h$ is rejected from $x_1$
and DA assigns $h\asgn c$. Suppose that blocks
$1,\ldots,t-1$ are active and block $t$ is the first inactive block.
Define
$S_t=\{h,b_t\}\cup\{a_s,b_s,p_s:1\leq s<t\}$.
At the DA matching, these agents hold exactly
$D_t=\{c,x_t\}\cup\{g_s,x_s,z_s:1\leq s<t\}$.

Every object that a member of $S_t$ strictly prefers to her DA
assignment lies in $D_t$. Agent $h$ strictly prefers only $x_1$ to
her DA assignment $c$, while $b_t$ holds her first choice $x_t$.
For $1\leq s<t$, agent $a_s$ strictly prefers only $g_s$ to her DA
assignment $x_s$, agent $b_s$ strictly prefers only $x_s$ to her DA
assignment $z_s$, and agent $p_s$ strictly prefers exactly $z_s$ and
$x_{s+1}$ to her DA assignment $g_s$. Hence, no member of $S_t$
strictly prefers an object held by an agent outside $S_t$ to her DA
assignment.

We show by induction that the objects in $D_t$ remain assigned to
agents in $S_t$ throughout the trading stage. This holds at the DA
matching. Suppose it holds before a trade. Every member of $S_t$ is
weakly better off than at DA, so any object she strictly prefers to
her current assignment was also strictly preferred to her DA
assignment. By the characterization of the agents' strict
improvements above, that object lies in $D_t$ and, by the induction
hypothesis, is held by another member of $S_t$. Thus, any
Pareto-improving cycle involving a member of $S_t$ consists entirely
of agents in $S_t$. Executing such cycles keeps all objects in $D_t$
with agents in $S_t$, completing the induction. Since $c\in D_t$
and $i\notin S_t$, agent $i$ can never receive $c$.

Every report by $i$ that attains $c$ must therefore activate every
block and hence must list $g_1,\ldots,g_\ell$. Since DA assigns $c$
to $h$, agent $i$ must receive $c$ in the trading stage. Individual
rationality therefore requires her report to list $c$.
When every block is active, DA assigns
$a_t\asgn x_t$, $b_t\asgn z_t$, and $p_t\asgn g_t$ for every $t$,
while $h$ holds $c$. These assignments exhaust every object except
$w$. Since an unmatched agent does not enter the trading stage,
agent $i$ must be matched by DA, so her DA assignment must be $w$.
Individual rationality of DA therefore requires her report to list
$w$. Hence every report by $i$ that attains $c$ contains
$c,w,g_1,\ldots,g_\ell$ and has length at least $\ell+2$.

\medskip
\noindent\emph{Claimant-positional attainment.}
Suppose first that $\varphi$ is claimant-positional, with position
rule $f$. For each $t$, let
$R_{p_t}=(x_{t+1},z_t,g_t)$ if $f(2)=1$, and
$R_{p_t}=(z_t,x_{t+1},g_t)$ if $f(2)=2$.
Order $c,g_1,\ldots,g_\ell$ in $i$'s report so that $c$ occupies
position $f(\ell+1)$, and place $w$ last. This report has length
$\ell+2$ and activates every block. DA therefore gives
\[
i\asgn w,\quad h\asgn c,\quad
a_t\asgn x_t,\quad b_t\asgn z_t,\quad p_t\asgn g_t
\quad (t=1,\ldots,\ell).
\tag{1}
\]
At this matching, the full improvement graph contains the cycle
\[
i\to h\to a_1\to p_1\to a_2\to p_2\to\cdots
\to a_\ell\to p_\ell\to i.
\tag{2}
\]
For each $t$, it also contains the cycle
$b_t\to a_t\to p_t\to b_t$. Hence, all endowed agents and their
endowments remain after settlement.

The claimant-positional rule selects every claim in (2). In her
report, agent $i$ ranks exactly $\ell+1$ objects above her DA
assignment $w$, namely $c,g_1,\ldots,g_\ell$. Since $c$ occupies
position $f(\ell+1)$ among these objects, the claimant-positional
rule makes her select $c$. Agent $h$ strictly prefers only $x_1$ to
her DA assignment $c$, and each $a_t$ strictly prefers only $g_t$
to her DA assignment $x_t$. Holding $g_t$, agent $p_t$ has exactly
two strict improvements, $z_t$ and $x_{t+1}$, and $R_{p_t}$ makes
her select $x_{t+1}$. Thus every claim in (2) is selected, so the
cycle is carried out. Since $h$ holds $c$ in (1), agent $i$
receives $c$.

After this cycle, every $a_t$ holds her first choice $g_t$, so each
is settled at that object. Consequently, all $g_t$ are removed before
$i$ can participate in another trade. Once they are removed, $i$,
now holding $c$, has no strict improvement among the remaining
endowments and is therefore settled at $c$, even if some of the
$g_t$ objects were ranked above $c$. Thus a report of length
$\ell+2$ attains $c$.

\medskip
\noindent\emph{Holder-positional attainment.}
At a given matching, call an agent a \emph{cyclic claimant} for an
object if she belongs to the cyclic agent set and strictly prefers
that object to her current assignment.

Suppose next that $\varphi$ is holder-positional, with position rule
$f$. Let $R_{p_t}=(z_t,x_{t+1},g_t)$ for
$t=1,\ldots,\ell$, and let
$R_i=(c,g_1,\ldots,g_\ell,w)$.
All blocks are active, so the DA matching is again (1). At this
matching, (2) is a cycle in the full improvement graph, and for each
$t$, $b_t\to a_t\to p_t\to b_t$ is another improvement cycle.
Hence all agents in these cycles belong to the cyclic agent set.

The endowed objects with two cyclic claimants have the following
claimants, listed from highest to lowest priority:
\[
\begin{array}{c|c}
\text{object}&\text{cyclic claimants in priority order}\\ \hline
g_t&i,\ a_t\\
x_1&b_1,\ h\\
x_t\ (t\geq2)&b_t,\ p_{t-1}
\end{array}
\]
Every other endowed object has only one cyclic claimant: only $i$
claims $c$, only $p_t$ claims $z_t$, and only $p_\ell$ claims $w$
among endowed agents, since agent $r$ is unmatched. Since trades only
improve assignments, every object an agent claims later is also
strictly preferred to her DA assignment. Thus, every remaining object
has at most two cyclic claimants throughout the trading stage.
Since $f(1)=1$, the only possibilities to consider are $f(2)=1$ and
$f(2)=2$.

Suppose first that $f(2)=2$. The retained claims at the objects in
the table are $a_t\to p_t$ at $g_t$, $h\to a_1$ at $x_1$, and
$p_{t-1}\to a_t$ at $x_t$ for $t\geq2$. The claims $i\to h$ at
$c$ and $p_\ell\to i$ at $w$ are also retained. Thus every claim
in (2) is retained, so that cycle is carried out. Since $h$ holds
$c$ in (1), agent $i$ receives $c$. The claims
$p_t\to b_t$ at $z_t$ are also retained, but they cannot belong to
another trading cycle because no claim made by $b_t$ is retained when
$f(2)=2$.

Suppose now that $f(2)=1$, so each remaining holder retains the claim
of her highest-priority cyclic claimant. The trading stage then works
backward through the blocks. The successive selected cycles are
$C_\ell=(i,p_\ell)$, then
$C_t=(i,p_t,a_{t+1})$ for $t=\ell-1,\ldots,1$, and finally
$C_0=(i,h,a_1)$.
Initially $i$ is the highest-priority cyclic claimant for $g_\ell$,
while $p_\ell$ is the only cyclic claimant for $w$, so $C_\ell$ is
carried out. No claim made by $h$ or any $a_t$ is retained, so no
other trading cycle arises at this stage. If $\ell=1$, this first
cycle is already $C_1$. For $1\leq t<\ell$, suppose inductively that
$C_\ell,\ldots,C_{t+1}$ have been carried out and the agents already
shown to be non-cyclic have been settled and removed. At the beginning
of this round, the assignments are
$i\asgn g_{t+1}$, $a_{t+1}\asgn x_{t+1}$, and
$p_t\asgn g_t$.

Before the next trade, $b_{t+1}$ and $p_{t+1}$ are non-cyclic
agents. Agent $b_{t+1}$ strictly prefers only $x_{t+1}$ to her
assignment, and this object is held by $a_{t+1}$, who strictly
prefers only $g_{t+1}$ to her assignment. The latter object is held
by $i$. Agent $i$ strictly prefers only
$c,g_1,\ldots,g_t$ to her assignment. Any improvement path from
$i$ stays among $i$, $h$, the agents in blocks $1,\ldots,t$, and
$a_{t+1}$. It therefore cannot reach $p_{t+1}$, the only agent who
claims $z_{t+1}$, and hence cannot return to $b_{t+1}$. Thus
$b_{t+1}$ is not in a cycle. Agent $p_{t+1}$ holds $x_{t+2}$ and
strictly prefers only $z_{t+1}$ to it. Since $z_{t+1}$ is held by
$b_{t+1}$, she is non-cyclic as well. Both are therefore settled and
removed.

The claims $i\to p_t$, $p_t\to a_{t+1}$, and
$a_{t+1}\to i$ are all retained: $i$ is the highest-priority
remaining cyclic claimant for $g_t$, $p_t$ is the highest-priority
remaining cyclic claimant for $x_{t+1}$, and $a_{t+1}$ is the only
cyclic claimant for $g_{t+1}$. No other trading cycle is carried out,
since $i$'s claim is retained at each $g_s$ with $1\leq s\leq t$,
so no claim made by these $a_s$ is retained, and $b_1$'s claim is
retained at $x_1$, so $h$'s claim is not retained. Hence $C_t$ is
carried out. It assigns $g_{t+1}$ to $a_{t+1}$, her first choice, so
she is settled. Repeating the argument gives $C_1$.

After $C_1$, the same argument makes $b_1$ and then $p_1$ non-cyclic.
Both are settled and removed. The claims $i\to h$, $h\to a_1$, and
$a_1\to i$ are all retained, so $C_0=(i,h,a_1)$ is carried out.
Since $h$ still holds $c$, agent $i$ receives $c$.

In both cases, $c$ is $i$'s first choice, so she is settled. Thus,
for either value of $f(2)$, the report of length $\ell+2$ attains
$c$.

\medskip

\noindent\emph{Width conclusion.}
For either orientation, the lower-bound and attainment arguments
give fixed reports $R_{-i}$ for which
$k_i^\varphi(c;R_{-i})=\ell+2$. Since the constructed market
has $3\ell+2$ objects,
\[
k_{3\ell+2}(\varphi)\ge \ell+2
\qquad\text{for every }\ell\ge1.
\]
Thus, for the constructed market sizes $m=3\ell+2$, we obtain
the linear lower bound $k_m(\varphi)\ge(m+4)/3$.
Therefore, $k(\varphi)=\infty$.
\end{proof}

\medskip

\noindent \textit{Remark: Position-rule extensions. }
The claimant-positional part of the theorem continues to hold if each agent $j$ has her own position rule $f_j$. The welfare argument is unchanged, and in the width construction one places $c$ in position $f_i(\ell+1)$ of $i$'s report and orders $z_t$ and $x_{t+1}$ in $p_t$'s report so that $p_t$ selects $x_{t+1}$ according to $f_{p_t}(2)$. The holder-positional width proof does not extend directly to object-specific position rules, because its attainment argument relies on the same value of $f(2)$ at every object.

\medskip

\begin{proof}[Proof of Corollary~\ref{cor:eada-dattc-width}]
We show that the holder-positional top rule $f(s)=1$ is
outcome-equivalent to EADA, while the claimant-positional top rule
$f(s)=1$ is outcome-equivalent to DA-TTC. Both mechanisms then belong
to the Positional DA-TC class, so the result follows from
Theorem~\ref{thm:positional-width}.

\medskip

\noindent\emph{EADA.}
Dur, Gitmez, and Y{\i}lmaz (2019) define the full-consent Top
Priority procedure, whose outcome coincides with the Pareto-efficient
full-consent EADA outcome. Starting from the DA matching, Top Priority
selects at each holder the claim of the highest-priority remaining
claimant for the holder's object, carries out one resulting cycle,
and continues. An agent settled by Top Priority cannot belong to a directed cycle
of strict improvement claims at that step or at any later step
(Dur, Gitmez, and Y{\i}lmaz, 2019, Section~4.2 and Lemma~8).
By comparison, the holder-positional top rule first settles every agent
outside $\mathcal{C}(\mu)$ and then selects, at each remaining
holder, the claim of the highest-priority cyclic claimant. We compare
the two procedures in two steps to show that the holder-positional top rule $f(s)=1$ is
outcome-equivalent to EADA.

\smallskip

\noindent\emph{(i) Every Top Priority cycle is selected by the
holder-positional top rule.}
Fix a current matching $\mu$ and a cycle carried out by Top Priority.
Every edge of this cycle is a strict improvement claim, so every
participant belongs to $\mathcal{C}(\mu)$. Moreover, every cyclic
agent is among the agents considered by Top Priority, since an agent
settled by Top Priority cannot belong to a directed cycle of
strict improvement claims. For each edge of the Top Priority cycle,
its claimant has highest priority among the claimants considered by
Top Priority and therefore also among the cyclic claimants. Hence
every edge of the cycle is selected by the holder-positional top
rule.

\smallskip

\noindent\emph{(ii) The order of selected cycles does not affect the
outcome.}
Suppose that $\alpha$ and $\beta$ are distinct cycles selected by
the holder-positional top rule at a matching $\mu$. The two cycles
are disjoint. After $\alpha$ is carried out, the assignments of the
agents in $\beta$ are unchanged, so its edges remain
strict improvement claims and its participants remain cyclic agents.
Consider an edge $i\to j$ of $\beta$. Suppose that after $\alpha$
a higher-priority cyclic claimant $i'$ is selected in place of $i$
at the object held by $j$. Since $i'$ is a cyclic agent after the trade,
Lemma~\ref{lem:cyclic-settlement} implies that she was already a cyclic agent
at $\mu$. Moreover, $\alpha$ weakly improves her assignment, so if
she still strictly prefers the object held by $j$, she also strictly
preferred it to her assignment at $\mu$. 
Agent $i'$ would therefore have
been selected ahead of $i$ before $\alpha$, a contradiction. 
Thus, $\beta$ will still be selected after $\alpha$ is carried out.
The same argument applies whenever two cycles are selected at the
same matching. Since the cycles are disjoint, carrying out both
in either order gives the same matching. Every trade strictly
improves its participants, and there are finitely many matchings,
so the sequential procedure terminates. These observations imply
that the order of trades does not affect the final matching.
In particular, carrying out selected cycles one at a time or
simultaneously gives the same outcome.

Now consider a sequential version of the holder-positional top rule
that carries out the Top Priority cycles in their original order.
Part~(i) shows that each such cycle is available when it is to be
carried out, while Lemma~\ref{lem:cyclic-settlement} ensures that
settlement cannot remove a participant in a later Top Priority cycle.
Following the Top Priority sequence of cycles therefore produces
the Top Priority matching, which is Pareto efficient. Hence no
improvement cycle remains at that matching. By part~(ii), carrying
out the selected cycles simultaneously produces the same matching.
Hence the holder-positional top rule is outcome-equivalent to EADA.

\medskip

\noindent\emph{DA-TTC.}
We show that the trading stage of the claimant-positional top rule is
outcome-equivalent to the TTC stage of DA-TTC. Fix a current
endowment matching $\mu$.

\smallskip

\noindent\emph{(i) Every nontrivial TTC cycle is selected by the
claimant-positional top rule.}
In TTC, an agent may point to a higher-priority holder of another
copy of her current object. Such an edge cannot be in a nontrivial
TTC cycle: while the higher-priority holder remains, no agent
selecting that object points to the lower-priority holder, so the
latter has no incoming edge. Every nontrivial TTC cycle therefore consists of strict improvement edges and is a directed cycle of $G(\mu)$. Hence, every agent in
such a cycle belongs to $\mathcal{C}(\mu)$. Thus, removing the agents outside
$\mathcal{C}(\mu)$ leaves this cycle unchanged.
Given that each agent in such a cycle points to the
highest-priority holder of her most-preferred remaining object, it is exactly
a claimant-positional cycle. Therefore, every nontrivial TTC cycle is
selected by the claimant-positional top rule.

Lemma~\ref{lem:cyclic-settlement} also implies that an agent outside
$\mathcal{C}(\mu)$ cannot participate in a later Pareto-improving
trade. Together with the preceding observation, such an agent keeps
her current object in TTC and eventually leaves through a self-cycle.

\smallskip

\noindent\emph{(ii) Every claimant-positional cycle is carried out
by TTC.}
Consider a cycle selected by the claimant-positional top rule after
the agents outside $\mathcal{C}(\mu)$ have been settled. Before any
member of this cycle leaves TTC, each member points either to the
next member of the cycle or to a holder outside
$\mathcal{C}(\mu)$. Consider the first TTC round in which a member
leaves. She cannot leave through a self-cycle, since she strictly
prefers the next agent's object in the cycle to her own. By part~(i), a
nontrivial TTC cycle cannot contain an agent outside
$\mathcal{C}(\mu)$. Hence, every agent in this TTC cycle follows
the claimant-positional cycle, and TTC carries out this cycle,
possibly in a later round.

After the claimant-positional rule carries out the cycle, no
participant strictly prefers any remaining object to the object she
has received, so each is settled at the next step. This corresponds to the removal of the participants and their endowed copies
in TTC. Other claimant-positional cycles selected at the same
matching are disjoint and are unchanged, so they can
be treated one at a time. Repeating the argument for the remaining
agents shows that the two procedures carry out the same nontrivial
cycles, possibly at different rounds, and produce the same final
matching. Hence the claimant-positional top rule is
outcome-equivalent to DA-TTC.
\end{proof}

\subsection{One-sided invariance}\label{app:complementary-invariance}

\begin{proof}[Proof of Proposition~\ref{prop:complementary-invariance}]
\leavevmode
\medskip

\noindent \emph{EADA.}
Kesten (2010, Lemma~A.4) shows that EADA satisfies positive
association. Given that it is also individually rational,
repeatedly interchanging the assigned object with the object
immediately above it moves it to the top without changing the
assignment. Hence EADA satisfies TLI.

\medskip
\noindent\emph{DA-TTC.}
Suppose DA-TTC assigns $c$ to $i$ at $(R_i,R_{-i})$. 
Let $w$ be agent $i$'s DA endowment at $(R_i,R_{-i})$.
Consider $(R_i^{\uparrow c},R_{-i})$. Since TTC weakly improves
every agent's assignment, either $c=w$ or $c\,P_i\,w$.
If $c=w$, strategyproofness of DA (Dubins and Freedman, 1981; Roth, 1982a) implies that moving $c$ to the top leaves her DA assignment unchanged. Since $c$ is now $i$'s
first choice, the TTC stage also assigns her $c$.
Suppose now that $c\,P_i\,w$. Moving $c$ to the top reorders
only objects above $w$, so strategyproofness of DA implies
that $i$'s DA assignment remains $w$. Since the upper contour
set of $w$ is unchanged, IR-monotonicity of DA implies that
the entire DA matching remains the same at
$(R_i^{\uparrow c},R_{-i})$
(Kojima and Manea, 2010, Theorem~1).
The endowments entering the TTC stage are therefore unchanged.
Since TTC with fixed endowments is strategyproof (Roth, 1982b),
if $i$ lifts her TTC assignment $c$ to the top she still obtains $c$. Hence, $i$ is assigned $c$ at $(R_i^{\uparrow c},R_{-i})$ and DA-TTC satisfies TLI.

\medskip

\noindent\emph{Reported Rank Welfare mechanisms.}
Suppose a Reported Rank Welfare mechanism assigns $c$ to $i$ at $(R_i,R_{-i})$.
Consider $(R_i^{\uparrow c},R_{-i})$.
The top lift preserves the set of acceptable objects and the length of
$i$'s report, so her reported outside option rank $|A_i(R_i)|+1$  is unchanged.
Agent $i$'s reported assignment rank weakly improves in the
originally selected matching at $(R_i,R_{-i})$.
In every matching that does not assign $c$ to her, her assignment rank is either unchanged or gets worse. All other agents' ranks and the set of individually rational matchings are unchanged. Thus, the TLI argument in the proof of
Theorem~\ref{thm:fixed-outside} applies with $\rho^P$ in place of $\rho^F$: rank
monotonicity and the report-independent tie-break rule out selecting a matching that
takes $c$ away from $i$. Hence, $i$ is assigned $c$ at $(R_i^{\uparrow c},R_{-i})$.  Thus, every Reported Rank Welfare mechanism satisfies TLI.

\medskip

\noindent\emph{Rank-Equity.}
Suppose Rank-Equity assigns $c$ to $i$ at $(R_i,R_{-i})$.
Consider $(R_i^{\le c}, R_{-i})$. 
Truncating below $c$ leaves the reported positions of $c$ and
every object above it the same, and hence leaves $i$'s induced
priority unchanged at every object to which she applies.
In the DA procedure at $(R_i,R_{-i})$, agent $i$ never applies to an object
ranked below $c$, which is also true at $(R_i^{\le c}, R_{-i})$. The relative priority
of all other agents is unchanged at every object.
Therefore, induction on the DA rounds gives the same proposals,
tentative holdings, and rejections at $(R_i^{\le c},R_{-i})$
and at $(R_i,R_{-i})$.
Hence, $i$ is assigned $c$ at $(R_i^{\le c}, R_{-i})$. Thus, Rank-Equity
satisfies TRI.
\end{proof}

\section{Obvious manipulation and singleton-attainability}\label{app:om}

Here we demonstrate that singleton-attainability is distinct from obvious manipulation, a notion of strategic transparency introduced by Troyan and Morrill (2020).

\medskip

 For a true preference $R_i$ and an alternative report $\widehat R_i$, let
\[
B_i(\widehat R_i;R_i)=\max\{\varphi_i(\widehat R_i,\widehat R_{-i}):\widehat R_{-i}\in\mathcal R_{-i}\},
\qquad
W_i(\widehat R_i;R_i)=\min\{\varphi_i(\widehat R_i,\widehat R_{-i}):\widehat R_{-i}\in\mathcal R_{-i}\}
\]
denote the best and worst assignments that the report can produce, evaluated by $R_i$.

\defhead{Obvious manipulation} A report $\widehat R_i$ is an \emph{obvious manipulation} at $R_i$ if it is a profitable manipulation for some $\widehat R_{-i}$ and either
\[
B_i(\widehat R_i;R_i)\,P_i\,B_i(R_i;R_i)
\quad\text{or}\quad
W_i(\widehat R_i;R_i)\,P_i\,W_i(R_i;R_i).
\]
A mechanism is \emph{obviously manipulable} (OM) if such a preference and report exist. A mechanism is \emph{not obviously manipulable} (NOM) if no obvious manipulation exists.

For any individually rational and non-wasteful mechanism, truthful
reporting attains the agent's best possible assignment for some reports
of the other agents. If agent $i$ has an acceptable object, let no other agent
declare her first-ranked acceptable object acceptable. Individual
rationality and non-wastefulness then imply that this object is assigned
to $i$. If agent $i$ has no acceptable object, individual rationality
ensures that truthful reporting always gives her the outside
option, her best possible assignment.
Hence, only the worst-case comparison can generate an obvious
manipulation. Moreover, a report whose worst-case outcome is strictly
better than that of truthful reporting is profitable at a profile where
truthful reporting attains its worst-case outcome. It follows that the
mechanism is NOM if and only if truthful reporting is a maximin strategy
for every agent and preference.\footnote{This equivalence is noted by
Chaudhury and P\'apai (2025) and Coreno and Balbuzanov (2026).}

The OM criterion of Troyan and Morrill (2020) concerns transparency of the incentive to deviate. SA concerns transparency of access to an object. 
For an individually rational SA mechanism, once the reports of the
other agents are fixed, the existence of a profitable manipulation
can be determined by testing singleton reports for objects preferred
to the truthful assignment. For the individually rational and
non-wasteful mechanisms compared here, OM checks whether a report
improves the agent's worst-case outcome across all possible reports
of the other agents.
Hence SA simplifies manipulation conditional on the environment, whereas OM identifies manipulations that are attractive without such information. The two concepts therefore capture different forms of strategic transparency: SA is ex post, profile-specific transparency, while in our setting OM is ex ante, worst-case transparency. The comparison provided in Table~\ref{tab:nom} shows that the two forms of strategic transparency may agree or differ for different mechanisms.

\begin{table}[htbp]
\centering
\renewcommand{\arraystretch}{1.22}
\begin{tabular}{>{\raggedright\arraybackslash}p{0.20\textwidth}|>{\raggedright\arraybackslash}p{0.31\textwidth}|>{\raggedright\arraybackslash}p{0.31\textwidth}}
\toprule
& \textbf{SA} & \textbf{Non-SA}\\
\midrule
\textbf{OM} & IA & Rank-Equity\\
\midrule
\textbf{NOM} & Object-Proposing DA & EADA, DA-TTC\\
\bottomrule
\end{tabular}
\caption{Prominent examples of OM/NOM and SA/non-SA mechanisms}
\label{tab:nom}
\end{table}

Troyan and Morrill (2020, Proposition 1) prove that IA is OM, and their Theorem 2 implies that Object-Proposing DA, EADA, and DA-TTC are NOM. 
Moreover, Rank-Equity is OM with the tie-breakers in
Example~\ref{ex:rank-equity}. Take agent 1's true preference to be
$R_1=(b,a,c)$ and consider the alternative report
$\widehat R_1=(a,b,c)$. This report always gives her either $a$ or $b$.
Indeed, rejection from $b$ would require an opponent who ranks
$b$ third, since agent 1 ranks it second and wins its tie-break.
That opponent must already have been rejected from both $a$ and
$c$, so all three objects would then be holding agents other
than agent 1, which is impossible. By contrast, truthful reporting
gives her $c$ when $R_2=(b,a,c)$ and $R_3=(a,b,c)$.
The deviation therefore strictly improves her worst-case
assignment and is an obvious manipulation.
Thus non-SA and OM are both established with the same agent, although the two properties quantify over profiles differently.

The diagonal cells in Table~\ref{tab:nom} show cases in which the two classifications agree. IA is both OM and SA, while EADA and DA-TTC are NOM and non-SA. The off-diagonal cells show that the two dimensions are independent. Object-Proposing DA is an example of a non-strategyproof mechanism which is NOM but SA, while Rank-Equity is OM (for some tie-breakers) but non-SA. The two forms of transparency, obvious manipulability and singleton-attainability, are therefore independent and need not share a common source.

There is also an interesting distinction between the two concepts within the set of stable-dominating mechanisms, as defined by Alva and Manjunath (2019a), mechanisms which select either a stable matching or a Pareto improvement over a stable matching at every profile. 
Troyan and Morrill (2020, Theorem 2) show that every stable-dominating mechanism is NOM. SA  distinguishes within this class: stable mechanisms are SA by Proposition~\ref{prop:stable}, whereas Corollary~\ref{cor:da-improvement-nonsa} implies that EADA and DA-TTC are not SA.

\end{document}